
\documentclass[12pt]{article}
\usepackage{amsmath}
\usepackage{amssymb}
\usepackage{amsthm} 
\usepackage{fullpage}
\usepackage[utf8]{inputenc}
\usepackage{mathtools}
\usepackage{url}
\usepackage[usenames,svgnames]{xcolor}
\usepackage{cite}
\usepackage{enumitem}
\usepackage[titletoc,title]{appendix}

\usepackage[pagebackref=false]{hyperref} 
\usepackage[all]{hypcap} 

\theoremstyle{plain}
\newtheorem{theorem}{Theorem}[section]
\newtheorem{lemma}[theorem]{Lemma}

\newtheorem{proposition}[theorem]{Proposition}
\newtheorem{corollary}[theorem]{Corollary}
\newtheorem{example}{Example}[section]
\newtheorem{examples}{Example}[subsection]

\newtheorem{remark}{Remark}[section]

\theoremstyle{definition}
\newtheorem{definition}{Definition}[section]

\numberwithin{equation}{section} 

\hypersetup{
breaklinks,
colorlinks,
citecolor=Green,
linkcolor=BlueViolet,
urlcolor=DarkBlue,
pdftitle={Weighted Hurwitz numbers , constellations and topological recursion},
pdfauthor={A. Alexandrov, G. Chapuis, B. Eynard, J. Harnad } }




\DeclareMathOperator{\End}{End}
\DeclareMathOperator{\Span}{span}

\DeclareMathOperator{\cyc}{cyc}

\DeclareMathOperator{\GL}{GL}

\DeclareMathOperator{\Li}{Li}
\DeclareMathOperator{\aut}{aut}


\def\ra{{\rightarrow}}

\def\det{\mathrm {det}}

\def\Spin{\mathrm {Spin}}
\def\End{\mathrm {End}}
\def\span{\mathrm {span}}

\def\ln{\mathrm {ln}}

\def\span{\mathrm {span}}

\def\Gr{\mathrm {Gr}}
\def\res{\mathop{\mathrm {res}}\limits}

\DeclarePairedDelimiter{\abs}{|}{|}

\DeclarePairedDelimiter{\no}{:}{:}

\def\be{\begin{equation}}
\def\ee{\end{equation}}

\def\bea{\begin{eqnarray}}
\def\eea{\end{eqnarray}}

\def\bt{\begin{theorem}}
\def\et{\end{theorem}}

\def\bex{\begin{example}\small \rm}
\def\eex{\end{example}}

\def\bexs{\begin{examples}\small \rm}
\def\eexs{\end{examples}}

\def\ra{\rightarrow}
\def\ss{\subset}
\def\deq{\coloneqq}

\def\br{\begin{remark}\small \rm}
\def\er{\end{remark}}


\def\res{\mathop{\mathrm{res}}\limits}

\def\&{&{\hskip -20pt}}


\def\BB{\mathcal{B}}
\def\CC{\mathcal{C}}
\def\DD{\mathcal{D}}
\def\EE{\mathcal{E}}
\def\FF{\mathcal{F}}
\def\JJ{\mathcal{J}}
\def\HH{\mathcal{H}}
\def\II{\mathcal{I}}

\def\MM{\mathcal{M}}

\def\OO{\mathcal{O}}

\def\Cb{\mathbf{C}}

\def\Ib{\mathbf{I}}

\def\Nb{\mathbf{N}}

\def\Nb{\mathbf{N}}
\def\Pb{\mathbf{P}}
\def\Qb{\mathbf{Q}}

\def\Zb{\mathbf{Z}}

 \def\grg{\mathfrak{g}}

 \def\grl{\mathfrak{l}}

\def\grP{\mathfrak{P}}

\def\psistar{\psi^{*}}

\def\normord{:}
\def\pp{\partial}
\def\normordboson{ {\scriptstyle {{*}\atop{*}}} }

\hyphenation{applying}
\hyphenation{number}
\hyphenation{coverings}
\hyphenation{content}
\hyphenation{matrix}
\hyphenation{product}
\hyphenation{proposition}
\hyphenation{without}
\hyphenation{second}
\hyphenation{Riemann}
\hyphenation{Hurwitz}
\hyphenation{assoc-iated}
\hyphenation{Kazarian}

\begin{document}
\baselineskip 16pt
\medskip
\begin{center}
\begin{Large}\fontfamily{cmss}
\fontsize{17pt}{27pt}
\selectfont
\textbf{Fermionic approach to weighted Hurwitz numbers and topological recursion}
\end{Large}
\\
\bigskip \bigskip
\begin{large}  A. Alexandrov$^{1, 2, 4}$\footnote{e-mail: alexandrovsash@gmail.com},   G. Chapuy$^{5}$\footnote{e-mail: guillaume.chapuy@liafa.univ-paris-diderot.fr}, 
B. Eynard$^{2, 6}$\footnote{e-mail:  bertrand.eynard@cea.fr}  and J. Harnad$^{2, 3}$\footnote {e-mail: harnad@crm.umontreal.ca  }
 \end{large}\\
\bigskip
\begin{small}
$^{1}${\em Center for Geometry and Physics, Institute for Basic Science (IBS), Pohang 37673, Korea}\\
 \smallskip
$^{2}${\em Centre de recherches math\'ematiques,
Universit\'e de Montr\'eal\\ C.~P.~6128, succ. centre ville, Montr\'eal,
QC H3C 3J7  Canada}\\
 \smallskip
$^{3}${\em Department of Mathematics and Statistics, Concordia University\\ 1455 de Maisonneuve Blvd.~W.~Montreal, QC H3G 1M8  Canada}\\
\smallskip
$^{4}${\em ITEP, Bolshaya Cheremushkinskaya 25, 117218 Moscow, Russia} \\
\smallskip
$^{5}${\em CNRS UMR 7089, Universit\'e Paris Diderot \\
 Paris 7, Case 7014
75205 Paris Cedex 13 France}\\
 \smallskip
 $^{6}${\em Institut de Physique Th\'eorique, CEA, IPhT, \\ F-91191 Gif-sur-Yvette, 
 France \\
 CNRS   URA 2306, F-91191 Gif-sur-Yvette, France} 
\smallskip
\end{small}
\end{center}

\begin{small}\begin{center}
\today
\end{center}\end{small}

\begin{abstract}
   A fermionic representation is given for all the quantities entering in the generating function  approach to weighted Hurwitz  numbers and topological recursion. This includes: KP and $2D$ Toda $\tau$-functions of hypergeometric  type, which  serve as generating functions  for weighted single and double Hurwitz numbers; the Baker function, which is expanded in an  adapted basis obtained by  applying the same dressing transformation to all vacuum basis elements; the multipair correlators and the multicurrent correlators.  Multiplicative recursion relations and a linear differential system are deduced for the adapted bases and their duals, and a Christoffel-Darboux type formula is derived for the pair correlator. The quantum and classical spectral curves linking this theory with the topological recursion program are derived, as well as the  generalized {\it cut and join} equations. The results are detailed for four special cases: the  simple single and double Hurwitz numbers, the weakly monotone case, corresponding to signed enumeration of coverings,  the strongly monotone case, corresponding to Belyi curves and the simplest version of quantum weighted Hurwitz numbers.  
     \end{abstract}

\break

\tableofcontents

\section{Weighted Hurwitz numbers and generating $\tau$-functions }

It has been known since the work of Pandharipande and Okounkov \cite{Pa, Ok}  that certain special classes of $\tau$-functions of the KP and $2D$-Toda hierarchies can serve as generating functions for various types of weighted Hurwitz numbers. It also is known that the computation
of these enumerative geometrical/combinatorial invariants may be embedded into the framework of Topological Recursion  \cite{ BEMS, EO1, EO2, MSS}. Fermionic representations have played a key r\^ole in simplifying the calculations and clarifying the origin of these constructions. The present work is part of an
ongoing series (see \cite{ACEH1, ACEH2}) devoted to the inclusion of the generating function approach to weighted Hurwitz numbers of the most general type  into the framework of Topological Recursion.

\subsection{Weighted Hurwitz numbers}
\label{tau_weighted_hurwitz}

For a set of partitions $\{\mu^{(i)} \}_{i=1,\dots, k}$ of weight $|\mu^{(i)}|=N$, 
the Hurwitz numbers   $H(\mu^{(1)}, \dots, \mu^{(k)})$ are  defined geometrically  \cite{Hu1, Hu2}  as the number
of inequivalent $N$-fold branched coverings  $\CC \ra \Pb^1$ of the Riemann sphere having $k$ branch points with
ramification profiles given by the partitions $\{\mu^{(1)}, \dots, \mu^{(k)}\}$, 
normalized by the inverse  $1/|\aut (\CC)|$ of the order of the automorphism group of the covering. 
 An equivalent combinatorial/group theoretical  definition \cite{Frob1, Frob2, Sch} is that  $H(\mu^{(1)}, \dots, \mu^{(k)})$ is 
 ${1/ N !}$ times the  number of distinct   factorizations of the identity  element $\Ib \in S_N $ in the symmetric 
 group in $N$ elements as a product of $k$ factors $h_i$, belonging to the respective conjugacy classes $\cyc(\mu^{(i)})$
\be
\Ib = h_1, \cdots h_k, \quad h_i \in \cyc(\mu^{(i))}).
\label{hurwitz_factoriz}
\ee
The equivalence of the two follows from  the monodromy homomorphism from the fundamental group of $\Pb^1$ 
punctured at the branch points into $S_N$ obtained by lifting closed loops from the base to the covering.

Denoting by
\be
\ell^*(\mu) := |\mu| - \ell(\mu)
\label{colength}
\ee
the {\em colength} of the partition $\mu$ (the difference between the weight and the length),
the Riemann Hurwitz theorem relates the Euler characteristic $\chi$ of the covering curve $\CC$ to the sum
of the colengths $\ell^*(\mu^{(i)})$ of the ramification profiles at the branch points
\be
\chi  = 2 - 2g = 2N - \sum_{i=1}^k \ell^*(\mu^{(i)}),
\label{riemann_hurwitz}
\ee
where $g$ is the genus of $\CC$.

 A weight generating function $G(z)$ and its dual $\tilde{G}(z) := 1/G(-z)$, may be represented 
 in terms of an infinite set of constants (or indeterminates) 
  \be
 {\bf c}:= (c_1, c_2, \dots ),
 \label{constants}
 \ee
either as an infinite product, or an infinite sum:
 \bea
G(z) &\& := \prod_{i=1}^\infty (1+ z c_i) = 1+ \sum_{i=1}^\infty g_i z^i
\label{gener_G}
 \\
\tilde{G}(z) &\& := \prod_{i=1}^\infty (1- z c_i)^{-1} = 1+ \sum_{i=1}^\infty \tilde{g}_i z^i, 
\label{gener_tildeG}
 \eea
where the coefficients $\{g_i\}_{i=1, \dots, \infty}$, $\{\tilde{g}_i\}_{i=1, \dots, \infty}$ in the (formal) Taylor series are,
respectively, the elementary and complete symmetric functions
\be
g_i = e_i({\bf c}), \quad \tilde{g}_i = h_i({\bf c}).
\label{gi_tildegi}
\ee
We define the (associative, commutative) algebra 
\be
\mathbb K = \mathbb Q[g_1,g_2,g_3,\dots]
\ee
of polynomials in the $g_i$'s over the field $\mathbb Q$ of rationals, which is the same (when the number of $c_i$'s is finite) as the algebra $\Lambda$ of symmetric functions of the $c_i$'s (over the coefficient field $\mathbb Q$).

Given a pair of partitions $(\mu, \nu)$ of $N$, the weighted double and single Hurwitz numbers are defined by the sums  \cite{GH2, H2} 
   \bea
H^d_G(\mu, \nu) &\&\deq \sum_{k=0}^d \sideset{}{'}\sum_{\substack{\mu^{(1)}, \dots, \mu^{(k)} \\ \sum_{i=1}^k \ell^*(\mu^{(i)})= d}}
W_G(\mu^{(1)}, \dots, \mu^{(k)}) H(\mu^{(1)}, \dots, \mu^{(k)}, \mu, \nu) 
\label{Hd_G}
\\
H^d_{\tilde{G}}(\mu, \nu) &\&\deq \sum_{k=0}^d\sideset{}{'}\sum_{\substack{\mu^{(1)}, \dots, \mu^{(k)} \\ \sum_{i=1}^k \ell^*(\mu^{(i)})= d}}
W_{\tilde{G}}(\mu^{(1)}, \dots, \mu^{(k)})  H(\mu^{(1)}, \dots, \mu^{(k)}, \mu, \nu), \\
H^d_G(\mu)&\& := H^d_G(\mu, \nu= (1)^N) , \qquad H^d_{\tilde{G}}(\mu) := H^d_{\tilde{G}}(\mu, \nu= (1)^N) 
\label{Hd_tildeG}
\eea
where $\sideset{}{'}\sum$ denotes the sum over all  $k$-tuples of partitions 
$\{\mu^{(1)}, \dots, \mu^{(k)}\}$ other than the cycle type of the identity element $(1^N)$, the weights 
\be
W_G(\mu^{(1)}, \dots, \mu^{(k)}):= m_\lambda ({\bf c}) =
\frac{1}{\abs{\aut(\lambda)}} \sum_{\sigma\in S_k} \sum_{1 \le i_1 < \cdots < i_k}
 c_{i_{\sigma(1)}}^{\lambda_1} \cdots c_{i_{\sigma(k)}}^{\lambda_k},
 \label{m_lambda}
\ee
and 
\be
W_{\tilde{G}}(\mu^{(1)}, \dots, \mu^{(k)}):= f_\lambda ({\bf c}) =
\frac{(-1)^{\ell^*(\lambda)}}{\abs{\aut(\lambda)}} \sum_{\sigma\in S_k} \sum_{1 \le i_1 \le \cdots \le i_k} 
c_{i_{\sigma(1)}}^{\lambda_1},  \cdots c_{i_{\sigma(k)}}^{\lambda_k},
 \label{f_lambda}
\ee 
are, respectively,  the monomial sum and ``forgotten'' symmetric function bases \cite{Mac} of the algebra 
$\Lambda$ of  symmetric functions in the variables $(c_1, c_2, \dots )$  and 
the partition $\lambda$ labelling the coefficients $m_\lambda ({\bf c})$ and $f_\lambda ({\bf c})$ in the sums 
have weight $|\lambda|=d$ and length  $\ell(\lambda) =k$,  with parts $\{\lambda_i\}$  equal to the 
colengths $\{\ell^*(\mu^{(i)})\}$, expressed in weakly decreasing order. The normalization factor
$ \abs{\aut(\lambda)}$ is the order of the automorphism group in $S_d$ of the elements in the 
conjugacy class of cycle type $\lambda$, given by
\be
\abs{\aut(\lambda)} = \prod_{i=1}^d m_i(\lambda)!,
\ee
where $m_i(\lambda)$ is the number of parts of $\lambda$ (i.e. the number of colengths $\ell^*(\mu^{(j)})$) 
equal to $i$.

\begin{remark}
By convention,  if the partitions are not all of the same weight,  we set
\be
H(\mu^{(1)}, \dots, \mu^{(k)}) : = 0
\ee
and thus 
\be
H^d_G(\mu, \nu)= 0 \quad {\rm if}\quad |\mu|\neq |\nu|,
 \quad H^d_{\tilde{G}}(\mu, \nu) = 0 \quad {\rm if}\quad |\mu|\neq |\nu|.
\ee
\end{remark}

An equivalent  definition of $H^d_G (\mu, \nu)$ and $H^d_{\tilde{G}} (\mu, \nu)$   (see \cite{GH2} for the proof) is:
 \bea
H^d_G(\mu, \nu)&\& \deq {1\over N!} \sum_{\lambda, \ \abs{\lambda}=d} e_\lambda({\bf c}) m^\lambda_{\mu \nu}
\label{Fd_G_def}
\\
H^d_{\tilde{G}}(\mu, \nu)&\& \deq  {1\over N!} \sum_{\lambda, \ \abs{\lambda}=d} h_\lambda({\bf c}) m^\lambda_{\mu \nu}
\label{Fd_tilde_G_def}.
\eea
where $\{e_\lambda({\bf c})\}$, $\{h_\lambda({\bf c})\}$ are the elementary and complete
symmetric function bases for $\Lambda$ and $m^\lambda_{\mu \nu}$ is the number of monotonic 
$d=\abs{\lambda}$ step paths in the Cayley graph of $S_N$ 
generated by the transpositions, starting from  an element $h$ in the conjugacy class $\cyc(\nu)$ and ending in $\cyc(\mu)$, 
with {\em signature} $\lambda$   \cite{GH2}; i.e., such that the parts of $\lambda$
are equal to the number of times any given second element $b_i$ in the steps $(a_i, b_i), \ a_i< b_i$ of the path is repeated.
These belong to the algebra $\mathbb K$:
\be
H^d_G(\mu,\nu), \ H^d_{\tilde{G}}(\mu, \nu) \in \mathbb K.
\label{fieldK}
\ee

Letting $\tilde{H}(\mu^{(1)}, \dots, \mu^{(k)})$ denote  the number of connected coverings with  given ramification profiles,
 the corresponding weighted Hurwitz numbers for connected coverings are denoted
   \bea
\tilde{H}^d_G(\mu, \nu) &\&\deq \sum_{k=0}^d \sideset{}{'}\sum_{\substack{\mu^{(1)}, \dots, \mu^{(k)} \\ \sum_{i=1}^k \ell^*(\mu^{(i)})= d}}
m_\lambda ({\bf c})\tilde{H}(\mu^{(1)}, \dots, \mu^{(k)}, \mu, \nu) ,
\label{tilde_Hd_G}
\\
\tilde{H}^d_{\tilde{G}}(\mu, \nu) &\&\deq \sum_{k=0}^d \sideset{}{'}\sum_{\substack{\mu^{(1)}, \dots, \mu^{(k)} \\ \sum_{i=1}^k \ell^*(\mu^{(i)})= d}}
f_\lambda ({\bf c}) H(\mu^{(1)}, \dots, \mu^{(k)}, \mu, \nu), \\
\tilde{H}^d_G(\mu)&\& := \tilde{H}^d_G(\mu, \nu= (1)^N) , \qquad \tilde{H}^d_{\tilde{G}}(\mu) := \tilde{H}^d_{\tilde{G}}(\mu, \nu= (1)^N). 
\label{tilde_Hd_tildeG}
\eea

\subsection{KP and $2D$-Toda $\tau$-functions of hypergeometric type as generating functions}

Choosing a nonvanishing (small) parameter $\beta$, we define the {\em content product } coefficients
\bea
r_\lambda^{(G, \beta)} &\& \deq   \prod_{(i,j)\in \lambda} r_{j-i}^{G}(\beta),
\label{r_lambda_G}
 \\ 
 r_\lambda^{(\tilde{G}, \beta)} &\& \deq  \prod_{(i,j)\in \lambda} r_{j-i}^{\tilde{G}}(\beta),
\label{r_lambda_tildeG}
\eea
where
\be
r_j^{G}(\beta):= G(j \beta ), \quad r_j^{\tilde{G}}(\beta):= \tilde{G}(j \beta ).
\ee
Introducing a further non vanishing parameter $\gamma$, we define the following parametric family
of $2D$-Toda $\tau$-functions of hypergeometric type (at lattice site $N=0$),
which are shown in \cite{GH1, GH2, H1} to be  generating functions for the corresponding weighted Hurwitz numbers 
\bea
  \tau^{(G, \beta, \gamma)} ({\bf t}, {\bf s}) &\& := \sum_{\lambda} \gamma^{|\lambda|} r^{ (G, \beta)}_\lambda s_\lambda ({\bf t}) s_\lambda ({\bf s})
 = \sum_{d=0}^\infty \beta^d \sum_{\substack{\mu, \nu \\ |\mu|=|\nu|}} \gamma^{|\mu|}  H^d_G(\mu, \nu) p_\mu({\bf t}) p_\nu({\bf s})
\label{tau_G_H}
\\
  \tau^{( \tilde{G},\beta, \gamma) } ({\bf t}, {\bf s}) &\& := \sum_{\lambda} \gamma^{|\lambda|}  r^{(\tilde{G},\beta)}_\lambda s_\lambda ({\bf t}) s_\lambda ({\bf s})
=\sum_{d=0}^\infty \beta^d  \sum_{\substack{\mu, \nu \\ |\mu|=|\nu|}} \gamma^{|\mu|} H^d_{ \tilde{G}}(\mu,  \nu) p_\mu({\bf t}) p_\nu({\bf s}) .
\label{tau_tilde_G_H}
\eea
Here  $s_\lambda({\bf t}), s_\lambda({\bf s})$ are Schur functions,
viewed as functions of two infinite sequences of 2D-Toda flow variables ${\bf t}=(t_1, t_2, \dots)$, ${\bf s}=(s_1, s_2, \dots)$ 
with
\bea
t_i  &\&:= {p_i \over i}, \quad s_i := {p'_i \over i}, \quad i=1, 2, \dots, \\
p_\mu({\bf t}) &\&:= \prod_{i=1}^{\ell(\mu)}p_{\mu_i}  , \quad p_\nu({\bf s}) := \prod_{i=1}^{\ell(\nu)}p'_{\nu_i} 
\eea
interpreted as power sum symmetric functions of a pair of auxiliary sets of variables. 
\br
Note that all six standard bases 
$\{s_\lambda, p_\lambda, m_\lambda, f_\lambda, e_\lambda, h_\lambda\}$ for $\Lambda$ 
introduced in \cite{Mac} appear in the above definitions.
\er

The corresponding generating functions for connected weighted Hurwitz numbers are given, as usual,  by 
 the logarithms
\bea
 \FF^{(G, \beta, \gamma)} ({\bf t}, {\bf s}) &\& := \ln ( \tau^{(G, \beta, \gamma)} ({\bf t}, {\bf s})) 
  = \sum_{d=0}^\infty \beta^d \sum_{\substack{\mu, \nu \\ |\mu|=|\nu|}} \gamma^{|\mu|}  \tilde{H}^d_G(\mu, \nu) p_\mu({\bf t}) p_\nu({\bf s})
\label{logtau_G_H}
\\
  \FF^{( \tilde{G}, \beta, \gamma)} ({\bf t}, {\bf s}) &\& := \ln (\tau^{(G, \beta, \gamma)} ({\bf t}, {\bf s}))  = \sum_{d=0}^\infty \beta^d  \sum_{\substack{\mu, \nu \\ |\mu|=|\nu|}} \gamma^{|\mu|} \tilde{H}^d_{ \tilde{G}}(\mu,  \nu) p_\mu({\bf t}) p_\nu({\bf s}).
\label{logtau_tilde_G_H}
\eea
We may view all these generating functions as elements of  ${\mathbb K}[[{\bf t},{\bf s}, \beta, \gamma]]$.

Below we consider only the case $G$; all formulae for $\tilde{G}$ are analogous. 

\subsection{Summary of results}
\label{summary_results}
In Section \ref{inf_grassmannian} we recall the infinite Grassmannian interpretation of KP and $2D$-Toda $\tau$-functions \cite{Sa, SS, SW, Ta, UTa, Takeb} and the Baker function. For arbitrary elements $W =g(\HH_+)$ of the {\em big cell} of the infinite Grassmannian  $Gr_{\HH_+}(\HH)$, $g \in \GL(\HH)$,
we introduce a dual pair of adapted bases $\{w_k^g(z)\}_{k\in \Zb}$,  $\{w_k^{g*}(z)\}_{k\in \Zb}$,
for which $W = \span\{w_k^g\}_{k\in \Nb^+}$ and the annihilator $W^\perp$ under the Hirota pairing is
$W^\perp = \span\{w_k^{g*}\}_{k\in \Nb^+}$.   

Section \ref{ferm_rep} provides, for general lower triangular group elements $g $, the fermionic vacuum state expectation 
value (VEV) representations of all the relevant quantities:
the $\tau$-functions $\tau_g({\bf t})$, $\tau_g(N, {\bf t }, {\bf s})$, the Baker function $\Psi^-_g(z, N, {\bf t})$ and its dual 
$\Psi^+_g(z, N, {\bf t})$, the adapted bases, the pair and multipair correlators $K^g_{2n}(z_1, \dots, z_n; w_1, \dots, w_n)$,
and the multicurrent correlator $\JJ_n(x_1, \dots , x_n)$. Proposition \ref{wkg_fermionic_prop} gives the VEV  representation of the adapted bases and Proposition \ref{Baker Psi_exp_prop} gives the expansions of $\Psi^-_g(z, {\bf t})$  and $\Psi^+_g(z, {\bf t})$ in these bases. Proposition \ref{zwk_recursion_prop} provides the multiplicative recursion relation satisfied by these bases and  Propositions \ref{zwk_drecursion_prop}, \ref{Pij_prop}  the derivative relations obtained upon application of the Euler operator. Propositions \ref{K_2_tau_prop}, \ref{corollary_2n}  express the  $n$-pair correlator 
$\tilde{K}^g_{2n}(z_1, \dots, z_n; w_1, \dots , w_n)$  in terms of the $\tau$-function and Proposition \ref{K_w_j_series} gives an expansion in binear products of the adapted basis elements.  A Christoffel-Darboux type representation for the pair correlator $K_2^g(z,w)$  is derived in Proposition \ref{CD_rep_prop}.

All this is specialized in Section \ref{Hypersection} to the case of $\tau$-functions of hypergeometric type, which serve as generating functions for weighted Hurwitz numbers. The recursion matrices for the adapted bases are explicitly given in Propositions \ref{Psi_recurs_prop},  \ref{wk_dif_rrs}. The classical and quantum spectral curves that enter in the topological recursion approach \cite{EO1, EO2,  BE, BEMS} are deduced, both directly, using the series expansions of the adapted bases, and using the Kac-Schwarz  operators that generate them from the Baker function at ${\bf t}= {\bf 0}$. The Christoffel-Darboux representation of the pair correlator is given in Proposition \ref{CD_hypergeom_prop} and the pairing matrix is expressed via a generating function in Proposition \ref{generating_function_A}. It is shown in Proposition \ref{multicurrent_Wn_Fn} how the multicurrent generators may be used as alternative generating functions for the weighted Hurwitz numbers.

In Section \ref{caj}, a fermionic derivation is given for the generalized ``cut-and-join'' representation of the $\tau$ function,
and the differential equations it satisfies with respect to the weighting parameters. 

Section \ref{examples} illustrates the results for four specific examples: 1) the original case of simple (single and double) Hurwitz
number studied by Okounkov \cite{Ok} and Pandharipande \cite{Pa}; 2) the case of strongly monotonic Hurwitz numbers or,
equivalently, Belyi curves (with three branch points) studied in \cite{AC1}, \cite{KZ}; 3) the case of weakly monotonic Hurwitz numbers, or signed geometrical enumerative Hurwitz numbers of refs.~\cite{GGN1, GGN2,GH2,HO2,H1} and 4) the {\em quantum Hurwitz numbers} introduced in refs.~\cite{GH2,H1}.

This paper is part of a series \cite{ACEH1}, \cite{ACEH2} on embedding the $\tau$-function  generating function
approach to weighted Hurwitz numbers into the topological recursion scheme. 

\section{Infinite Grassmannian, Baker functions, adapted bases}
\label{inf_grassmannian}

\subsection{The formal  infinite Grassmannian,  $\tau$-functions and Baker functions }
 \label{grassmannian}

In the related papers \cite{ACEH1, ACEH2}, we  use an extension of the Sato \cite{Sa} and Segal-Wilson \cite{SW}
formulations of the infinite dimensional Grassmannian on which the KP flows are understood to act,
replacing the analytic or algebraic definitions of the underlying spaces
and operators by formal series, including extensions by parameters $\{\beta, \gamma, {\bf c}\}$ etc.  
This will be necessary for the proper interpretation of the specialization to $\tau$-functions of hypergeometric type serving as
generators for weighted Hurwitz numbers that appear in Sections \ref{Hypersection} and \ref{caj}.
Our notational conventions derive from reinterpreting  the
elements $W \in \Gr_{\HH_+}(\HH)$ of the Grassmannian  formally, as subspaces
$W\ss \Cb[[z, z^{-1}]]$ of the space of formal Laurent  series in an indeterminate $z$,  with complex coefficients.
These extend what  in \cite{SW} are viewed  as  closed subspaces of the Hilbert space
\be
\HH = L^2(S^1)=  \HH_+ \oplus \HH_-
\ee
of square integrable functions on the unit circle, with orthonormal basis $\{e_i\}_{i\in \Zb }$ 
with $\HH_\pm$  viewed as the spaces of  complex Fourier series
with negative ($-$) and non-negative ($+$) exponentials.
But for the current section and the next, we construct the $\tau$-function, Baker function, adapted bases and associated
correlators in the general setting, without reference to the formal series interpretation. In \cite{Sa, SS} $\HH$  is viewed as a completion
of the space of positive and negative Laurent polynomials
\be
\HH_+ := \overline{\span\{ z^i\}_{i\ge 0}}, \quad \HH_- := \overline{\span\{z^{-i}\}_{i>0}}.
\ee
We use the labelling convention 
\be
\{e_i := z^{-i-1}\}_{i\in \Zb}
\ee
for the orthonormal basis, motivated by the identification of the vacuum element $|0 \rangle$ in the associated Fermonic Fock space as the  Dirac sea,  in which the negative integers denote the filled negative energy states.

In the Segal-Wilson formulation, such subspaces $W\ss \HH$  are  required to be commensurable with $\HH_+ \ss \HH$
in the sense that the orthogonal projection $W\ra \HH$ to $\HH_+$ along $\HH_-$ is required to be 
a Fredholm operator, whose index is called the {\em virtual dimension} of $W$, while the image 
under  orthogonal projection $W \ra \HH_-$ is required, in some sense, to be ``small'' (either Hilbert-Schmidt, as in \cite{SW}, 
or compact). Sato \cite{Sa} uses a different notion, but for our purposes both the functional analytic \cite{SW}
and  differential algebraic \cite{SS, Sa} versions are subsumed by replacing the Hilbert space $\HH$ by the space of formal Laurent
series $\Cb[[z, z^{-1}]]$, extended by the constants $\beta, \gamma$ and ${\bf c}$, viewed as indeterminates.
We nevertheless retain the notation $\HH =\HH_+ + \HH_-$ for the space, and let
\be
(f, g) := \res_{z= 0}\, (z^{-1} f(z) g(1/z))\ \text{``}=\text{''}\ {1\over 2\pi  i}\oint_{z=0}f(z) g({1/z}){dz \over z},
\label{scalar_prod}
\ee
(which might just be a formal residue) denote the complex scalar product in which the basis $\{e_i\}_{i\in \Zb}$ is orthonormal,
with the second equality valid on the smaller space $L^2(S^1)$.

The precise definition of the infinite Lie algebra $\grg\grl(\HH)$ of endomorphisms of $\HH$ and its connected
 Lie group $\GL(\HH)$ of admissible invertible endomorphisms  will be omitted; we refer to \cite{Sa}, \cite{SW} 
 for (different) particular choices, with the differences replaced, for our purposes,  by requiring that their matrix representations
within the orthonormal basis  $\{e_i\}$ and elements in the associative algebra $\Cb[\beta, \gamma, \{g_i\}]$ of formal series 
 in the auxiliary parameters  $(\beta$, $\gamma)$ and the  weighting constants $\{g_i\}_{i\in \Nb^+}$ (or
 ${\bf c}$), when appropriate, should have well defined products and commutators.
  
The two infinite abelian subgroups $\Gamma_{+}$, $\Gamma_{-}$
whose elements consist of multiplication by the exponential series
\be
\Gamma_+= \{\gamma_+({\bf t}) := e^{\xi({\bf t}, z)}\}, \quad \Gamma_- = \{\gamma_-({\bf s}) := e^{\xi({\bf s}, z^{-1})}\},
\ee
where
\be
\xi({\bf t}, z) = \sum_{i=1}^\infty t_i z^i, \quad  \xi({\bf s}, z^{-1}) = \sum_{i=1}^\infty s_i z^{-i}
\ee
are central to the theory since, together with the discrete group
of lattice translations, they  generate the mKP and $2D$-Toda flows.

The KP $\tau$-function associated with an element $W\in \Gr_{\HH_+}(\HH)$ of virtual dimension $N=0$
 is,  within a suitable choice of admissible basis that establishes an isomorphism between the two commensurable
spaces $W$ and  $\HH_+$, the determinant of the orthogonal projection operator to $\HH_+$ from the space 
\be
W({\bf t}) := \gamma_+({\bf t})W
\ee
obtained by applying the abelian group of $\Gamma_+$ flows to $W$
\be
\tau_W({\bf t}) := \det(\pi_+: W({\bf t}) \ra \HH_+).
\label{KP_tau_det}
\ee

It follows \cite{Sa, SS, JM, SW} that this satisfies the Hirota bilinear relations
\be
\oint_{z=\infty} e^{-\xi(\delta{\bf t}, z)} \tau_W({ \bf t}+ \delta{\bf t}  + [z^{-1}]) 
\tau_W({ \bf t} -[z^{-1}] ) dz =0
 \label{hirotaKP} 
\ee
where
\be
 \quad  [z^{-1}]_i  :=  {1 \over  i }z^{-i}.
  \ee
  
The Baker function $\Psi^-_W(x, {\bf t}) $ which, for all values of  the KP flow parameters ${\bf t}$,
takes its values in the annihilator $W^\perp$ of $W$  under the Hirota bilinear pairing
\be
\langle u, \, v \rangle := {1\over 2\pi i} \oint_{z=0} u(z) v(z) dz
\label{hirota_bilinear}
\ee
is then given by the Sato formula
\be
\Psi^-_W(z, {\bf t}) = e^{\xi({\bf t}, z ) }{\tau ({\bf t }- [z^{-1}])\over  \tau({\bf t})}.
\label{Sato_Baker_1}
\ee
 
The dual Baker function, which takes its values in $W$ for all values of ${\bf t}$, is
\be
\Psi^+_W(x, {\bf t}) = e^{-\xi({\bf t}, z ) }{\tau ({\bf t }+ [z^{-1}])\over  \tau({\bf t})}.
\label{Sato_Baker}
\ee

More generally, for $W_N$ of virtual dimension $N$, we similarly define $\tau_{W_N}( {\bf t})$ 
as the determinant of the projections to the subspace 
\be
\HH_+^N := z^{-N} \HH_+,
\label{HHp}
\ee
namely
\be
\tau_{W_N}({\bf t})  := \det(\pi_+:\gamma_+({\bf t})W_N \ra \HH^N_+).
\label{tau_det_W_N}
\ee
In the fermionic  Fock space, their image under the Pl\"ucker map
(see below) is in the charge $N$ sector.

An equivalent view of  the Grassmannian $\Gr_{\HH_+}(\HH)$ is as a quotient  $\GL(\HH)/\GL_{\HH_+}(\HH)$  of
the group $\GL(\HH)$ of admissible invertible endomorphisms of $\HH$ by the 
stabilizer $\GL_{\HH_+}(\HH)$ of the subspace $\HH_+$, which is an infinite dimensional
analog of the maximal parabolic subgroup of  $\GL(N)$ leaving a fixed $k$-dimensional
subspace  invariant. Thus, choosing an element $W \in \Gr_{\HH_+}(\HH)$ within the $\GL(\HH)$ orbit of $\HH_+$
\be
W= g(\HH_+)
\label{Wde}
\ee
 is  equivalent to choosing the left coset 
\be
[g] =g\GL_{\HH_+} (\HH)\in \GL(\HH) / \GL_{\HH_+}(\HH). 
\ee
  If a sequence of elements  of the various virtual dimension components is chosen as
\be
W_N = g(\HH_+^N)
\ee
for a fixed group element $g$, the resulting analogously defined sequence of  KP $\tau$-functions
$\tau_{W_N}({\bf t})$ satisfy the bilinear equations of the mKP hierarchy
\be
 \oint_{z=\infty} z^{N'-N}e^{-\xi(\delta{\bf t}, z)} \tau_{W_N}( { \bf t}+ \delta{\bf t}  + [z^{-1}]) 
\tau_{W_{N'}}({ \bf t} -[z^{-1}] ) dz =0
 \label{hirota_mKP} 
\ee
for $N'\geq N$.

In this case, the group element $g$  is defined within right multiplication by an element of the simultaneous
stabilizer of all the subspaces  $\HH_+^N$, $N\in \Zb$, which is the infinite dimensional analog
of the minimal parabolic (Borel) subgroups of $GL(N)$  consisting of invertible upper triangular matrices. That is,
the mKP flows are interpreted as acting on the codimension $1$ nested sequence of subspaces
\be
\cdots \ss W_{N-1} \ss W_N \ss W_{N+1} \ss \cdots
\label{W_N_flag}
\ee
i.e.,  infinite complete flags  belonging to the $GL(\HH)$ orbit of the standard complete flag 
\be
\cdots \ss \HH_+^{N-1} \ss \HH_+^N \ss \HH_+^{N+1} \ss \cdots.
\ee
The dual infinite flag is given by the orthogonal annihilators
\be
\cdots \ss W^\perp_{N+1} \ss W^\perp_N \ss W^\perp_{N-1} \ss \cdots.
\label{dual_W_N_flag}
\ee
 
\subsection{Adapted bases}
\label{adapted_basis}

By applying the group element $g$ to the standard (monomial) basis  $\{ z^k\}_{k\in \Zb}$ for $\HH$, we obtain an
adapted basis $\{w_k^{g}(z)\}_{k \in \Zb}$ for $\HH$ such that the positively labeled elements $\{w_k^{g}(z)\}_{k \in \Nb^+}$
span the subspace $W=g(\HH_+)$. We also introduce the dual basis $\{w_k^{g*}(z)\}_{k \in \Zb}$ such that
$\{w_k^{g*}(z)\}_{k \in \Nb^+}$ spans the subspace $W^\perp:=\iota\circ (g^t)^{-1}\circ \iota(\HH_+)$ that annihilates
$W$ under the Hirota bilinear pairing (\ref{hirota_bilinear}). Here $\iota$ is the involutive automorphism 
of $\HH$ that relates the Hirota bilinear pairing  to the scalar product  (\ref{scalar_prod}) 
\bea
\iota:\, &\& \HH \ra \HH  \cr
\iota:\, &\&e_i \mapsto e_{-i-1}.
\label{involution}
\eea
Define $\tilde{g}$ to be the conjugation of $g$ under this involution
\be
\tilde{g} := \iota \circ g \circ \iota.
\ee
The corresponding matrix representation of $\tilde{g}$ in the standard basis $e_i$ is then
\be
\tilde{g}_{ij} := g_{-i-1, -j-1}
\ee

Applying the group elements $g$ and $(\tilde{g}^t)^{-1}$ to the standard monomial basis, we obtain the two  bases, 
\bea
w^{g}_k(z) &\&:= g(z^{k-1}) 
= \sum_{j\in \Zb} g_{-j-1, -k} z^j ,  \quad  k \in \Zb,
\label{w_g_def} \\
w^{g*}_k(z) &\& :=({\tilde{g}}^{t})^{-1} (z^{k-1}) 
=\sum_{j\in \Zb} g^{-1}_{k-1, j} z^j, \quad  k \in \Zb,
\label{w_g_star_def}
\eea
with  inverse relations
\bea
z^j =&\& \sum_{k\in \Zb} g^{-1}_{-k, -j-1} w^g_k(z), \quad j \in \Zb
\label{w_g_z_inv}\\
z^j =& \& \sum_{k\in \Zb} g_{j, k-1} w^{g*}_k (z), \quad j \in \Zb.
\label{w_g_dual_z_inv}
\eea
Then 
\be
W := g(\HH_+) = \Span\{w^{g}_k\}_{k\in \Nb^+}, \quad  W^\perp := (\tilde{g}^t)^{-1}(\HH_+) = \Span\{w^{g*}_k\}_{k\in \Nb^+}.
\ee
The bases $\{w_k^g\}_{k\in \Zb}$, $\{w_k^{g*}\}_{k\in \Zb}$ are dual under the Hirota bilinear pairing (\ref{hirota_bilinear})
in the sense that
\be
\langle w_j^g, w^{g*}_l \rangle = \delta_{j + l -1}, 
\ee
 from which it follows that  $W^\perp$ is, indeed, the annihilator of $W$.

We can make the further assumption that the group element $g$ is in
fact represented by lower triangular matrices
\be
g_{ij} = 0 \ {\rm if \ } \ j > i.
\ee
This means that $W \in \Gr_{\HH_+}(\HH)$ must be in the ``big cell'', but imposes no further restriction,
since a representative of the left coset under the stabilizer of $\HH_+$ may always be chosen
in this lower triangular form.

It also implies that the sums (\ref{w_g_def}) - (\ref{w_g_dual_z_inv}) are semi-infinite, with the highest power of $z$  in the Fourier series for $w_k^g(z)$ and $w_k^{g*}(z)$ being $k-1$
\bea
w^{g}_k(z) &\& =  \sum_{j\ = -\infty}^{k-1} g_{-j-1, -k} z^j   \\
w^{g*}_k(z) &\& = \sum_{j= -\infty}^{k-1} g^{-1}_{k-1, j} z^j, 
\label{w_k_g_triangular_series}
   \eea
   and
   \bea
 z^j &\&= \sum_{k=-\infty}^{j+1} g^{-1}_{-k, -j-1} w^g_k(z) \\
  z^j &\&= \sum_{k= -\infty} ^{j+1}g_{j, k-1} w^{g*}_k (z) .
\eea
 
\section{Fermionic representations}
\label{ferm_rep}
 
\subsection{Fermionic Fock space  }
\label{fermionic}

We recall the conventions and notation for the fermionic operator approach to $\tau$-functions
developed by Sato and his school \cite{Sa, SS,JM}. Using the notation of  \autoref{grassmannian}, we denote the semi-infinite wedge product space (Fermonic Fock space)
\be
\FF = \Lambda^{\infty/2} \HH = \oplus_{N\in \Zb} \FF_N
\ee
where $\FF_N$ denotes the charge-$N$ sector, $N\in \Zb$. The pairs $(\lambda, N)$  consisting of a partition $\lambda$
of any weight   (with the convention  that $\lambda_i = 0$ for $i > \ell(\lambda)$) and an integer $N$  label the basis element in $\FF$. These are denoted
\be
|\lambda; N\rangle := e_{\ell_1} \wedge e_{\ell_2} \wedge \cdots  \in \FF_N,
\ee
where
\be
\ell_i := \lambda_i -i +N 
\ee
is a strictly decreasing sequence of integers saturating, after a finite number of terms, in consecutively
decreasing integers. These are viewed as the integer lattice points of occupied sites,  and the vacuum element in each sector is given by the trivial partition
\be
| N \rangle := |0; N\rangle = e_{N-1} \wedge e_{N-2} \wedge \cdots.
\ee

The Fermi creation and annihilation operators $\{\psi_i, \psi^*_i\}_{i\in \Zb}$ are the generators of
 the Fermi Fock space representation of the  Clifford algebra on $\HH \oplus \HH^*$ with 
quadratic form 
\be
\Qb(X +\alpha):= 2\alpha(X)  \  \text{for } X \in \HH, \  \alpha \in \HH^* .
\ee
In the standard irreducible infinite Clifford module $\FF$, these  are given by the linear action of 
exterior and interior multiplication by the basis elements $\{e_i\}_{i\in \Zb}$ for $\HH$ and their duals $\{\tilde{e}^i\}_{i\in \Zb}$ for $\HH^*$
\be
\psi_ i := e_i  \wedge,  \quad \psi^*_i := i_{\tilde{e}^i}  \in \End(\FF).
\ee
This implies the generating relations of the Clifford algebra 
\be
[\psi_i, \psi^*_j]_+ := \psi_i \psi^*_j +\psi^*_j \psi_i=\delta_{ij}, \
\quad [\psi_i, \psi_j]_+ =[\psi^*_i, \psi^*_j] _+= 0.
\ee

Normal ordering  $\no{\OO}$ of any finite product of such creation and
annihilation operators is defined by placing all the annihilation factors $\psi^*_i$ to the right  for $i \ge 0$
and all the creation factors $\psi_i$ to the left for $i < 0$ (taking the signs following from anticommutativity
into account). This implies the vanishing of  vacuum expectation values  (VEV's) of all normally ordered operators 
(other than scalars)
\be
\langle 0 | \no{\OO} | 0 \rangle= 0.
\ee

If the partition $\lambda$ is expressed in Frobenius notation as $(a_1, \dots a_r | b_1, \dots , b_r)$, where the $a_i$'s
and $b_i$'s are strictly decreasing sequences of of non-negative integers, denoting the number of boxes to the 
right  of and below  the $i$th diagonal box of the corresponding Young diagram (the ``arm'' and ``leg'' lengths),
with the Frobenius index $r$ equal to the number of terms on the principal diagonal, we may express the 
basis element $|\lambda; N\rangle$ as
\be
|\lambda; N\rangle = (-1)^{\sum_{i=1}^r b_i} \prod_{i=1}^r \left(\psi_{a_i +N} \psi^*_{-b_i -1 +N}\right) |N \rangle.
\label{basis_states}
\ee

The Fermi fields $\psi(z)$ and $\psi^*(z)$ are defined as generating series for these creation
and annihilation operators
\be 
\psi(z) := \sum_{i\in \Zb}\psi_i z^i,  \quad \psi^*(z) := \sum_{i\in \Zb} \psi_i^*  z^{-i-1}.
\label{psi_series}
\ee 

Elements $g \in \GL(\HH) $ of the  connected group of invertible endomorphisms of $\HH$
may be viewed as elements of the larger group $\Spin (\HH+\HH^*, \Qb)$ of endomorphisms
of $\HH+\HH^*$ preserving the quadratic form $\Qb$. If they are represented  within
the basis $\{e_i\}_{i \in \Zb}$ by the doubly infinite matrix exponential
\be
g \sim e^A
\ee
with elements $\{A_{ij}\}_{i,j \in \Zb}$, they have the Clifford representation
\be
\hat{g} := \exp{\sum_{i, j \in \Zb} A_{ij} 
\no{\psi_i \psi^*_j}}.
\ee

 The Pl\"ucker map into the projectivization $\Pb\FF$ of the Fock space
 \bea
 \grP\grl : \Gr_{\HH_+}(\HH) &\& \ra\Pb\FF\cr
 \grP\grl : W:= \span\{w_1, w_2, \dots \}&\& \mapsto [w_1 \wedge w_2 \wedge \cdots ]
 \eea
 (where $[ \dots ]$ denotes the projective class) embeds the infinite Grassmannian $\Gr_{\HH_+}(\HH)$ 
 into $\Pb\FF$ as the  union of the orbits of the subspaces $ \HH_+^N$
 under the Clifford action of $\GL(\HH)$  if $W$ has virtual dimension $N$.
 Thus the Pl\"ucker map  intertwines the action of $\GL(\HH)$ on $Gr_{\HH_+}(\HH)$
and the (projectivized) Clifford representation on $\FF$
 \be
\grP\grl( g(W) ) = \hat{g} (\grP\grl(W)), \quad \forall \ W \in \Gr_{\HH_+}(\HH).
  \ee
  
  In particular, the Clifford representation of the KP  and 2D Toda flow groups  $\Gamma_+$ and
  $\Gamma_-$ is given by
  \be
\hat{\gamma}_+({\bf t} )= e^{\sum_{i=1}^\infty t_i J_i}, \quad \hat{\gamma}_-({\bf s}) = e^{\sum_{i=1}^\infty s_i J_{-i}}, 
\ee
where the operators $J_i$  are bilinears in the creation and annihilation operators 
\be
J_i := \sum_{j\in \Zb} \no{\psi_j \psi^*_{j+i}}, \quad i\in \Zb
\ee
that satisfy the commutation relations of the infinite Heisenberg algebra
\be
[J_i, J_{-j }] = i \delta_{ij} .
\ee
The projective representation may be viewed as a representation of this  central extension 
of the abelian group generated by the elements $\gamma_+({\bf t} )$, $\gamma_-({\bf s)}$,
with
\be
\hat{\gamma}_+({\bf t} )\hat{\gamma}_-({\bf s} )=  e^{\sum_{i=1}^\infty i t_i s_i }\, \hat{\gamma}_-({\bf s}) \hat{\gamma}_+({\bf t}).
\ee
The $J_i$'s  may be viewed as Fourier components of the current operator
\be
J(z) := \sum_{i\in \Zb} J_i z^i. 
\ee

The effect of conjugation of the creation and annihilation operators by a group element is
\be
\hat{g}\psi_i \hat{g}^{-1} = \sum_{j \in \Zb} \psi_j g_{ji},  \quad
 \hat{g}^{-1}\psi^*_i \hat{g} = \sum_{j \in \Zb} g_{ij} \psi^*_j 
 \label{g_psi_z_conj}
 \ee
 if
 \be
 g(e_i) =\sum_{j\in \Zb} e_j g_{ji}.
 \ee

In particular,
\bea
 \label{fermiconj}
\hat{\gamma}_+({\bf t})  \psi(z) \hat{\gamma}^{-1} _+ ({\bf t})&\& = e^{\xi ({\bf t}, z)} \psi(z), \quad
 \hat{\gamma}_+({\bf t})  \psi^*(z) \hat{\gamma}^{-1} _+ ({\bf t})= e^{-\xi ({\bf t}, z)} \psi^*(z), 
 \label{gamma+psi_conj}\\
&\& \cr
 \hat{\gamma}_-({\bf s})  \psi(z) \hat{\gamma}^{-1} _- ({\bf s})&\& = e^{\xi ( {\bf s}, z^{-1})} \psi(z), \quad
 \hat{\gamma}_-({\bf s})  \psi^*(z) \hat{\gamma}^{-1} _-({\bf s})= e^{-\xi ( {\bf s}, z^{-1})} \psi^*(z). 
  \label{gamma-psi_conj}
\eea

\subsection{Fermionic construction of $\tau$-functions, Baker function \\ and adapted bases }
\label{fermionic_tau}

In Sato's approach to KP $\tau$-functions \cite{Sa, SS}, given an element $\hat{g}$ of  the Fermionic
representation of the Clifford algebra that satisfies the bilinear relation.
\be
[\II, \hat{g}\otimes  \hat{g} ]=0,
\label{bilinear_g}
\ee
where 
\be
\II := \sum_{i\in \Zb} \psi_i \otimes \psi^*_i  \in \End(\FF \otimes \FF),
\ee
the following vacuum expectation value
\be
\tau_g({\bf t}) = \langle 0 | \hat{\gamma}_+({\bf t}) \hat{g}|0 \rangle
\label{tau_g_VEV}
\ee
 satisfies the Hirota bilinear relation (\ref{hirotaKP}) and hence is a KP $\tau$-function.
   In the following, we call such elements {\it group-like elements} \cite{AZ}.
 In particular, if
\be\label{fermigroup}
\hat{g} = e^{\sum_{i,j \in \Zb} A_{ij }: \psi_i \psi^*_j :}
\ee
is the fermionic representation of the element  $g\in \GL(\HH)$ with matrix representation
$e^A$ in the standard $\{e_i\}_{i\in \Zb}$ basis, $\hat{g}$ satisfies (\ref{bilinear_g}) and $\tau_g({\bf t}) 
$ is the same as the $\tau$-function
$\tau_W({\bf t}) $ defined in (\ref{KP_tau_det}) with $W=g\HH_+$.
The Baker function defined in (\ref{Sato_Baker_1}) can then be expressed as the following VEV \cite{Sa, SS}
\be
\Psi^-_g(z, {\bf t}) := \Psi^-_W(z, {\bf t}) = {\langle 0 | \psi^*_0\hat{\gamma}_+({\bf t} )\psi(z) \hat{g} | 0\rangle \over \tau_g({\bf t}) }, 
\label{Baker_fermionic}
\ee
and the dual Baker function (\ref{Sato_Baker}) is
\be
\Psi^+_g(z, {\bf t}):=\Psi^+_W(z, {\bf t}) = {\langle 0 | \psi_{-1}\hat{\gamma}_+({\bf t} )\psi^*(z) \hat{g} | 0\rangle \over \tau_g({\bf t}) }. 
\label{dual_Baker_fermionic}
\ee

Since
\be
\langle 0 | \hat{\gamma}_+({\bf t}) | \lambda \rangle  = s_\lambda({\bf t}) = \langle \lambda | \hat{ \gamma}_-({\bf t}) | 0 \rangle,
\ee
it follows that the $\tau$-function may, at least formally, be expressed as a sum over the Schur function basis
\be
\tau_g({\bf t}) = \sum_{\lambda} \pi_\lambda(g) s_\lambda({\bf t})
\label{tau_schur_exp}
\ee
where 
\be
\pi_\lambda(g)  := \langle \lambda |\hat{g} | 0 \rangle
\ee
is the $\lambda$ Pl\"ucker coordinate of the Grassmannian element $W$
and the Schur functions $s_\lambda$ are viewed as functions of
the  KP flow variables ${\bf t} = (t_1, t_2, \dots)$.

More generally, for any such $\hat{g}$, we may define a lattice of KP $\tau$-functions
generating solutions to the modified KP (mKP) hierarchy, 
\be
\tau^{mKP}_g(N, {\bf t}) := \langle N | \hat{\gamma}_+({\bf t}) \hat{g}|N \rangle
\ee
which  satisfy the extended system of Hirota equations  \eqref{hirota_mKP} of the mKP hierarchy.

Introducing a second set of flow variables ${\bf s}=(s_1, s_2, \dots)$ we may define the
$2D$ Toda lattice of $\tau$-functions
\be
\tau_g(N, {\bf t}, {\bf s}) := \langle N | \hat{\gamma}_+({\bf t}) \hat{g} \hat{\gamma}_-({\bf s})|N \rangle,
\ee
which  satisfy
\bea
 &\&  \oint_{z=\infty}  z^{N'-N}e^{-\xi(\delta{\bf t}, z)} \tau_g(N,{ \bf t} +[z^{-1}] , {\bf s})
 \tau_g(N', { \bf t}+ \delta{\bf t}  - [z^{-1}],  {\bf s} + \delta{\bf s}) dz = 
 \cr
 &\&  \oint_{z=0}  z^{N'-N} e^{-\xi(\delta{\bf s}, z^{-1})} \tau_g(N-1, {\bf t},   { \bf s} +[z])
 \tau_g(N'+1, {\bf t} + \delta{\bf t}, {\bf s}+ \delta{\bf s}  - [z] ) dz,
\label{hirota2d_Toda}
 \eea
understood to hold  identically in  $\delta{\bf t}   = (\delta t_1, \delta t_2, \dots),\   \delta{\bf s} := (\delta s_1, \delta s_2, \dots)$ and in $N$, $N'$.

In the following, we  make the further assumption that the group element $g$ is  represented by lower triangular matrices
\be
g_{ij} = 0  \quad \text{if} \ \  j > i.
\ee
This means that
\be
\langle 0 | \hat{g} = \langle 0 | \hat{g}^{-1} = \langle 0 |.
\ee

Applying the group element $\hat{g}$, we obtain the fermionic VEV respresentation of the
bases $\{ w^{g}_{k \in \Zb}\}$ , $\{ w^{g*}_{k \in \Zb}\}$  defined in (\ref{w_g_def}), (\ref{w_g_star_def}).
\begin{proposition}
\label{wkg_fermionic_prop}
The bases defined in eqs.~(\ref{w_g_def}) , (\ref{w_g_star_def}) have the following representation as fermionic VEV's
\bea
w_k^g(z) &\& =\begin{cases} \langle 0 | \psi_{-k} \hat{g}^{-1} \psi^*(z) \hat{g} | 0\rangle \quad  \text{if} \ k\ge 1, \cr
\langle 0 |\hat{g}^{-1} \psi^*(z) \hat{g} \psi_{-k} | 0\rangle  \quad \text{if} \ k \le  0, 
\end{cases}
\label{wkg_fermionic}
\\
w_k^{g*}(z) &\& =  \begin{cases} \langle 0 |\psi^*_{k-1} \hat{g}^{-1} \psi (z)\hat{g}  |0 \rangle  \quad  \text{if} \ k \ge 1, \cr
\langle 0 | \hat{g}^{-1} \psi(z) \hat{g} \psi^*_{k-1}  |0 \rangle \quad  \text{if} \ k\le 0.
\end{cases} 
\label{wkg*_fermionic}
\eea
\end{proposition}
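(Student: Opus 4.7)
My plan is to reduce each of the four VEVs in the statement to the Laurent series defining $w_k^g$ or $w_k^{g*}$ in (\ref{w_g_def})--(\ref{w_g_star_def}) by conjugating the Fermi field through $\hat g^{\pm 1}$ and then evaluating an elementary two-point vacuum correlator. First, I apply the conjugation identities (\ref{g_psi_z_conj}) to the generating series (\ref{psi_series}) to obtain the operator equalities
\begin{equation*}
\hat g^{-1}\psi^*(z)\hat g \;=\; \sum_{i,j\in\Zb} z^{-i-1}\,g_{ij}\,\psi^*_j, \qquad \hat g^{-1}\psi(z)\hat g \;=\; \sum_{i,j\in\Zb} z^{i}\,(g^{-1})_{ji}\,\psi_j,
\end{equation*}
where the second formula comes from applying the first of (\ref{g_psi_z_conj}) with $g\mapsto g^{-1}$. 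Substituting these into each of the four claimed VEVs reduces them to double sums weighted by two-point fermion correlators of the form $\langle 0|\psi_a\psi^*_b|0\rangle$ or $\langle 0|\psi^*_b\psi_a|0\rangle$.

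Second, I evaluate these two-point correlators via the canonical anticommutation relations together with the vacuum rules $\psi_i|0\rangle=0$ for $i<0$, $\psi^*_i|0\rangle=0$ for $i\geq 0$, and their duals: the result is $\langle 0|\psi_a\psi^*_b|0\rangle = \delta_{ab}$ when $a<0$, and $\langle 0|\psi^*_b\psi_a|0\rangle = \delta_{ab}$ when $a\geq 0$, both vanishing otherwise. This makes the dichotomy in the statement transparent. For $w_k^g$ with $k\geq 1$ one has $-k<0$, so $\psi_{-k}|0\rangle=0$ and the first identity, applied with $a=-k$, collapses the double sum and picks out $j=-k$. For $k\leq 0$ one has $-k\geq 0$, so $\psi_{-k}$ must be placed on the inside; after using the lower-triangularity hypothesis in the form $\langle 0|\hat g^{-1}=\langle 0|$ so that the outer $\hat g^{-1}$ does not interfere with the conjugation bookkeeping, the second identity with $a=-k$ again picks out $j=-k$. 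The ``wrong'' ordering in each regime would annihilate the vacuum on the outside and produce $0$ identically; this is what forces the two-case presentation. The analysis of $w_k^{g*}$ is word-for-word the same with $\psi\leftrightarrow\psi^*$ and the index shift $k\mapsto 1-k$, now using $\psi^*_{k-1}|0\rangle=0$ for $k\geq 1$ and $\langle 0|\psi^*_{k-1}=0$ for $k\leq 0$.

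Third, after the Kronecker delta collapses one summation, the surviving single sum is reindexed by setting $i=-j-1$, yielding $\sum_j g_{-j-1,-k}\,z^j$ and $\sum_j (g^{-1})_{k-1,j}\,z^j$ respectively, which are exactly the defining Laurent series (\ref{w_g_def})--(\ref{w_g_star_def}). There is no substantial obstacle: the entire proof is a transparent conjugation-and-contraction computation. The only point requiring care is the explicit matching of cases, namely verifying that each of the two formulas in the statement is the unique ordering in which the outer Fermi operator fails to annihilate the corresponding vacuum, so that only one of the two possible VEVs contributes a nontrivial answer in each regime of $k$.
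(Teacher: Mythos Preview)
Your proof is correct and follows essentially the same route as the paper's own proof: insert the mode expansions (\ref{psi_series}), apply the conjugation identities (\ref{g_psi_z_conj}) term by term, and evaluate the resulting two-point vacuum correlators to collapse onto the series (\ref{w_g_def})--(\ref{w_g_star_def}). One minor remark: your appeal to the lower-triangularity hypothesis $\langle 0|\hat g^{-1}=\langle 0|$ in the $k\le 0$ case is unnecessary, since the operator identity $\hat g^{-1}\psi^*(z)\hat g=\sum_{i,j}z^{-i-1}g_{ij}\psi^*_j$ already places the full conjugation in closed form and no extra $\hat g^{\pm 1}$ remains to be moved past the vacuum.
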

\begin{proof}
Substitute the fourier series representations (\ref{psi_series}) of the fermionic field operators $\psi(z)$, $\psi^*(z)$
in the RHS of (\ref{wkg_fermionic}) and (\ref{wkg*_fermionic}) and apply the conjugation relations (\ref{g_psi_z_conj})
term by term. This results in the series representations (\ref{w_g_def}), (\ref{w_g_star_def}) for 
$\{ w^{g}_{k \in \Zb}\}$ , $\{ w^{g*}_{k \in \Zb}\}$, both for $k\ge 1$ and $k \leq 0$.
\end{proof}
Now define the sequence of subspaces $W_N^{ g}, W_N^{ g*}$ , $N\in \Zb$
\bea\label{WchargeN}
W_N^{ g} &\&= \overline{\span\{w_{-N+k}^g\}_{k\in \Nb^+}} \in \Gr_{\HH_+^N}(\HH)\\
W_N^{ g\perp} &\&= \overline{\span\{w_{N+k }^{g*}\}_{k\in \Nb^+}}  \in \Gr_{\HH_+^{-N}}(\HH),
\eea
which coincide with (\ref{W_N_flag}), (\ref{dual_W_N_flag}).

The  charge $N$ sector Baker function and  its dual 
\bea
\Psi^-_g(z, N, {\bf t}) &:=&  {\langle N | \psi^*_N\hat{\gamma}_+({\bf t} )\psi(z) \hat{g} | N\rangle \over \tau_g({\bf t}) } 
 \in W_N^{ g\perp} 
  \\
\Psi^+_g(z, N,  {\bf t})&:=& {\langle N | \psi_{N-1}\hat{\gamma}_+({\bf t} )\psi^*(z) \hat{g} | N \rangle \over \tau_g({\bf t}) } \in W_N^{g} 
\eea
 can be expanded in these bases. For the case $N=0$, we have
 \begin{proposition}
 \label{Baker Psi_exp_prop}
 \bea
 \Psi^-_g(z,  {\bf t}) &\&= \sum_{j=1}^\infty \alpha^{g}_j({\bf t}) w^{g*}_{j}(z) 
 \label{Psi_w_k_g_exp}
 \\
 \Psi^+_g(z,  {\bf t}) &\& = \sum_{j=1}^\infty \beta^{g}_j({\bf t}) w^{g}_{j}(z) ,
  \label{Psi*_w_k_g_exp}
 \eea
 where
 \be 
\alpha^{g}_j({\bf t}) := {\langle 0 | \psi^*_{0}\hat{ \gamma}_+({\bf t}) \hat{g}\ \psi_{j-1}|0 \rangle \over \tau_g(N, {\bf t})},
 \quad  \beta^{g}_j({\bf t}) := { \langle 0 | \psi_{-1}\hat{ \gamma}_+({\bf t}) \hat{g} \psi^*_{-j} |0 \rangle \over \tau_g(N, {\bf t})}.
 \ee
  \end{proposition}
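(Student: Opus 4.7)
The plan is to reduce both expansions directly to the fermionic definitions of $\Psi^\pm_g$ by conjugating the Fermi fields through $\hat g$, so that the adapted bases appear as the coefficients in a resolution of $\psi(z)\hat g$ and $\psi^*(z)\hat g$. First I would insert $\hat g\hat g^{-1}$ immediately to the left of $\psi(z)$ in the definition of $\Psi^-_g(z,\mathbf t)$ and compute, using \eqref{g_psi_z_conj}, that
\begin{equation*}
\hat g^{-1}\psi(z)\hat g \,=\, \sum_{i\in\Zbb} z^i \sum_{j\in\Zbb} \psi_j (g^{-1})_{ji}
\,=\, \sum_{k\in\Zbb}\psi_{k-1}\,w^{g*}_k(z),
\end{equation*}
where the last equality is nothing but the series representation \eqref{w_g_star_def} of $w^{g*}_k$. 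An analogous computation, starting from $\hat g^{-1}\psi^*(z)\hat g$ and using the dual conjugation formula, gives
\begin{equation*}
\hat g^{-1}\psi^*(z)\hat g \,=\, \sum_{\ell\in\Zbb}\psi^*_{-\ell}\, w^{g}_\ell(z).
\end{equation*}

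Substituting these into the numerators of the VEV expressions \eqref{Baker_fermionic} and \eqref{dual_Baker_fermionic} and pulling the scalar series out of the vacuum expectation, I obtain
\begin{equation*}
\Psi^-_g(z,\mathbf t)\,=\,\sum_{k\in\Zbb}\frac{\langle 0|\psi^*_0\hat\gamma_+(\mathbf t)\hat g\,\psi_{k-1}|0\rangle}{\tau_g(\mathbf t)}\,w^{g*}_k(z),
\end{equation*}
and similarly for $\Psi^+_g(z,\mathbf t)$ in terms of $\psi^*_{-\ell}|0\rangle$ and $w^g_\ell(z)$. Next I would use the standard vacuum annihilation properties $\psi_{k-1}|0\rangle=0$ for $k\le 0$ and $\psi^*_{-\ell}|0\rangle=0$ for $\ell\le 0$, which collapses each sum to $k,\ell\ge 1$. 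This matches the stated formulas with the claimed coefficients $\alpha^g_j(\mathbf t)$ and $\beta^g_j(\mathbf t)$, and simultaneously confirms that $\Psi^-_g(z,\mathbf t)\in W^\perp$ and $\Psi^+_g(z,\mathbf t)\in W$, consistent with \eqref{WchargeN}.

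The routine verification is the identification $\hat g^{-1}\psi(z)\hat g=\sum_k\psi_{k-1}w^{g*}_k(z)$: it is just a rewriting of the fermionic conjugation law together with the explicit matrix element series \eqref{w_g_def}, \eqref{w_g_star_def}. The main obstacle, and the only point requiring care, is keeping the index conventions straight (in particular the shifts by $1$ between the mode labels of $\psi_i,\psi^*_i$ and the labels $k$ of $w^g_k,w^{g*}_k$, and the reflection $\ell\mapsto -\ell$ coming from the different Laurent expansion of $\psi^*(z)$) and verifying that the lower-triangularity assumption on $g$ makes these semi-infinite sums well defined in the formal series sense used throughout \autoref{grassmannian}. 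Once the two conjugation identities are in hand, no further computation is needed.
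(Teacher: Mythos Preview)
Your proof is correct and takes a more direct route than the paper's. The paper inserts, in addition to $\hat g\hat g^{-1}$, the projection $\Pi_1=\sum_\lambda |\lambda;1\rangle\langle\lambda;1|$ onto the charge-$1$ sector, and then argues via the Frobenius form of $|\lambda;1\rangle$ and Wick's theorem that only the trivial partition and the one-row partitions $\lambda=(j)$ survive; these intermediate states are then identified with $w^{g*}_{j+1}(z)$ by Proposition~\ref{wkg_fermionic_prop}. You bypass this combinatorial step by expanding $\hat g^{-1}\psi(z)\hat g$ mode by mode with \eqref{g_psi_z_conj} and recognising the coefficient of $\psi_{k-1}$ directly as the series \eqref{w_g_star_def} for $w^{g*}_k(z)$; the truncation to $k\ge 1$ then follows from $\psi_{k-1}|0\rangle=0$ rather than from a Wick argument. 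Your approach is shorter and entirely elementary; the paper's has the virtue of making explicit that the expansion is a sum over one-particle intermediate states, which parallels the techniques used in the later proofs of the recursion relations and the correlator expansions.
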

  \begin{proof}
  In eq.~(\ref{Baker_fermionic}), insert $\Ib= \hat{g} \hat{g}^{-1}$ and the projection operator  to the $N=1$
  sector $\FF_1 \ss \FF$ of the fermionic Fock space
   \be
\Pi_1 := \sum_{\lambda} |\lambda; 1\rangle \langle \lambda ; 1 | ,
 \ee
  (where the sum is over all partitions $\lambda$). This gives
  \be
  \Psi^-_g(z, {\bf t}) = {1\over \tau_g({\bf t})}\sum_{\lambda} \langle 0 | \psi_0^* \hat{\gamma}_+({\bf t}) \hat{g} | \lambda; 1\rangle 
  \langle \lambda; 1| \hat{g}^{-1} \psi(z) \hat{g} | 0 \rangle.
  \label{sum_intermediate_states_baker}
  \ee
 In this sum, the only nonvanishing terms are the trivial partition $\lambda = \emptyset$ and the horizontal
 partitions $\lambda = (j), \ j=1 , 2,  \dots$. To see this, use (\ref{g_psi_z_conj}) to write the second factor 
 in the sum (\ref{sum_intermediate_states_baker}) as
 \be
 \langle \lambda ; 1 | \hat{g}^{-1} \psi(z) \hat{g} | 0 \rangle =
 \sum_{j\in \Zb} z^j \sum_{i=j}^\infty  \langle \lambda ; 1 | \psi_i | 0 \rangle  g^{-1}_{ij}.
 \ee
Expressing the partition $\lambda$  in Frobenius notation as $(a_1, \dots, a_r | b_1, \dots , b_r)$,
 the state $|\lambda ; 1\rangle$ is, by (\ref{basis_states})
  \be
 |\lambda; 1 \rangle = (-1)^{\sum_{i=1}^r b_i} \prod_{i=1}^r \left(\psi_{a_i +1} \psi^*_{-b_i} \right)\psi_0 |0 \rangle.
 \ee
 It follows from the fermionic Wick theorem that
 \be
 \langle \lambda; 1| \psi_i | 0 \rangle = \langle 0 | \psi_0^* \prod_{i=1}^r \left(\psi_{-b_i} \psi^*_{a_i +1} \right) \psi_i | 0 \rangle
 \ee
 vanishes unless either $\lambda = \emptyset$ or $r=1$ and $b_1=0$. For the first case, we have
 \be
 \langle \emptyset; 1| \hat{g}^{-1}\psi(z)\hat{g} |0 \rangle = w_1^{g*}(z),
 \ee
 for the other case $\lambda = (j)$, we have
 \be
 \langle (j); 1 |\hat{g}^{-1} \psi(z) \hat{g} | 0 \rangle = w_{j+1}^{g*}(z) , \quad j=1, 2, \dots
 \ee 
 by eq.~(\ref{w_g_star_def}), and
 \be
 \langle 0 | \psistar_0 \hat{\gamma}_+({\bf t}) \hat{g} | (j-1); 1\rangle = \tau_g({\bf t})  \alpha_j^g({\bf t}),
 \ee
 proving eq.(\ref{Psi_w_k_g_exp}). Eq.~(\ref{Psi*_w_k_g_exp}) is proved similarly.
  \end{proof}
     
\subsection{Multiplicative recursion relations for general $\hat{g}$}
\label{multipl_rrs}
     
   Multiplying the adapted basis elements by $z$, it follows that they satisfy the  recursion relations
   \begin{proposition}
   \label{zwk_recursion_prop}
\bea
z w_k^g(z) &\&= \sum_{j=-\infty}^{k+1} \tilde{Q}^+_{kj} w_j^g(z) 
\label{zwk_recursion}
\\
z w_k^{g*}(z) &\&= \sum_{j=-\infty}^{k+1} \tilde{Q}^-_{kj} w_j^g(z),
\label{zwk_recursion_dual}
\eea
where
\bea
\tilde{Q}^+_{kj} &\&= \sum_{i=-k-1}^{-j}g^{-1}_{-j, i} g_{i+1, -k}  = \tilde{Q}^-_{1-j, 1-k }, \quad j\le k+1
\label{tilde_Q_+}\\
\tilde{Q}^-_{kj} &\&= \sum_{i=j-2}^{k-1}g^{-1}_{k-1, i} g_{i+1, j-1}  = \tilde{Q}^+_{1-j, 1-k},  \quad j\le k+1.
\label{tilde_Q_-}
\eea
\end{proposition}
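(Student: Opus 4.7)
The plan is to derive both recursions directly from the explicit Laurent expansions given in equations (\ref{w_g_def})--(\ref{w_g_dual_z_inv}), using the lower triangularity of $g$ (and hence of $g^{-1}$) to pin down the range of summation.

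First, for the primal basis, I start with the representation
$w_k^g(z) = \sum_{j\in\Zb} g_{-j-1,-k}\,z^j$,
multiply by $z$ to shift indices, and rewrite $z w_k^g(z) = \sum_{j'\in\Zb} g_{-j',-k}\, z^{j'}$. Substituting the inverse relation (\ref{w_g_z_inv}) for each $z^{j'}$ and interchanging the order of summation yields
\begin{equation*}
z w_k^g(z) = \sum_{j\in\Zb}\Bigl(\sum_{i\in\Zb} g^{-1}_{-j,i}\, g_{i+1,-k}\Bigr) w_j^g(z),
\end{equation*}
after the substitution $i=-j'-1$. This identifies $\tilde Q^+_{kj}$ as the bracketed sum. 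The dual relation is obtained identically from (\ref{w_g_star_def}) and (\ref{w_g_dual_z_inv}) — noting that the RHS of the second equation in the statement should read $w_j^{g*}(z)$ — giving $\tilde Q^-_{kj}=\sum_i g^{-1}_{k-1,i}\, g_{i+1,j-1}$ after the substitution $i=j'-1$.

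Next, I reduce each infinite sum to the finite range claimed in (\ref{tilde_Q_+})--(\ref{tilde_Q_-}) using the triangularity assumption $g_{ij}=0$ for $j>i$, which also forces $g^{-1}$ to be lower triangular. In the formula for $\tilde Q^+_{kj}$, the factor $g_{i+1,-k}$ requires $i\ge -k-1$, while $g^{-1}_{-j,i}$ requires $i\le -j$, which gives the range $-k-1\le i\le -j$ and also the outer bound $j\le k+1$ (beyond which the range is empty). The same argument applied to $\tilde Q^-_{kj}$ gives $k'-2\le i\le k-1$ and $j\le k+1$, matching the statement. Finally, the symmetries $\tilde Q^+_{kj}=\tilde Q^-_{1-j,1-k}$ and vice versa are immediate by term-by-term comparison after the substitutions $k\mapsto 1-j$, $j\mapsto 1-k$.

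There is no real obstacle here: the whole argument is a bookkeeping exercise, the only subtlety being to carefully track the index shifts ($j'=j+1$ and $i=\pm j'\mp 1$) and the two independent triangularity bounds that together produce the finite summation range. Alternatively, one could give a cleaner operator-theoretic derivation by observing that multiplication by $z$ acts on $\HH$ as the shift $e_i\mapsto e_{i-1}$, so that in the $\{w_k^g\}$ basis its matrix is $g^{-1}\circ(\text{shift})\circ g$, and then extract its entries — but the direct series manipulation above is more transparent and immediately yields the stated formulas.
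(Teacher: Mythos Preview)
Your proof is correct and follows precisely the first of the two approaches the paper itself sketches: ``This can either be proved directly, by inverting the Fourier series expansions for $w_k^g(z)$ and $w_k^{g*}(z)$\ldots''. Your index bookkeeping is accurate (modulo the harmless typo $k'$ for $j$ in your dual range), and you correctly spot that the second displayed recursion should have $w_j^{g*}(z)$ on the right. The paper also offers a second, fermionic derivation via $z\psi(z)=[J_1,\psi(z)]$ and insertion of intermediate states, which your closing operator-theoretic remark gestures toward but does not carry out; either route suffices.
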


\begin{proof}
This can either be proved directly, by inverting the Fourier series expansions for $w_k^g(z)$ and $w_k^{g*}(z)$, 
or by using the fermionic identities
\be
z \psi(z) = [J_1, \psi(z)], \quad z \psi^*(z) = - [J_1, \psi^*(z)]
\ee
and evaluating the RHS of the fermionic representations  (\ref{wkg_fermionic}) and (\ref{wkg*_fermionic}) for
both $w_k^g(z)$ and $w_k^{g*}(z)$ by introducing a sum over a complete set of intermediate states in the $N=0$ or $N=-1$ fermionic charge sectors; i.e., by inserting the projection operator 
\be
\Pi_N:= \sum_\lambda  |\lambda; N\rangle \langle \lambda; N |
\ee
onto these sectors.
\end{proof}

The recursion matrix $\tilde{Q}^+$ may be split into block form
\be
\tilde{Q}^+ = \begin{pmatrix} \tilde{Q}^{--} &  \vdots &\tilde{Q}^{-+}  \\ 
\cdots  & & \cdots \\ \tilde{Q}^{+-} & \vdots & \tilde{Q}^{++} 
    \end{pmatrix},
\ee
where the range of indices are:
\bea
 \tilde{Q}^{--}_{kl}:&\&  \ k\le 0, \ l\le 0; \quad    \tilde{Q}^{-+}_{kl}: k\le 0, \ l\ge 1; \cr
  \tilde{Q}^{+-}_{kl}: &\& \ k\ge 1, \ l\le 0; \quad   \tilde{Q}^{++}_{kl}: k\ge 1, \ l\ge 1.
  \quad 
\eea
It follows from the above computation that the fermionic representation of the recursion  matrix elements is 
\begin{corollary}
\label{fermi_rep_Q}
\bea
 \tilde{Q}^{--}_{kl} &\& =   \langle 0 | \psi^*_{-l} \hat{g}^{-1} J_1 \hat{g} \psi_{-k} | 0 \rangle 
 - \delta_{kl} \kappa_+, \quad k\le 0, \ l\le 0 \cr
  \tilde{Q}^{-+}_{kl} &\& =   \delta_{k0}\delta_{l 1} g_{00}g^{-1}_{-1, -1},  \quad  k\le 0, \ l\ge 1\cr
  \tilde{Q}^{+-}_{kl} &\& =  \langle 0 | \psi_{-k} \psi^*_{-l}\hat{g}^{-1} J_1 \hat{g}  | 0 \rangle, 
  \quad k\ge 1, \ l\le 0 
 \cr  
      \tilde{Q}^{++}_{kl} &\& =  - \langle 0 | \psi_{-k} \hat{g}^{-1} J_1 \hat{g} \psi^*_{-l} | 0 \rangle 
  + \delta_{kl} \kappa_-, \quad k\ge 1, \ l\ge 1 ,
\eea
where
\be
\kappa_+ := \sum_{j=0}^{\infty} (g_{j+1, j} g^{-1}_{j j} + g_{jj} g^{-1}_{j, j-1} ),\quad
\kappa_- := \sum_{j= -\infty}^{-1}(g_{j+1, j} g^{-1}_{jj} + g_{jj} g^{-1}_{j, j-1} ).
\ee
\end{corollary}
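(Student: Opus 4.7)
The plan is to derive all four identities in parallel by computing the conjugated current operator $\hat{g}^{-1}J_1\hat{g}$ explicitly and then applying Wick's theorem to the resulting four-point fermionic VEVs. The underlying mechanism is the same one used in the proof of Proposition \ref{zwk_recursion_prop}: the identities $z\psi(z)=[J_1,\psi(z)]$ and $z\psi^*(z)=-[J_1,\psi^*(z)]$ convert multiplication by $z$ in the recursion $zw_k^g = \sum_j \tilde Q^+_{kj}w_j^g$ into an insertion of $J_1$ sandwiched by $\hat{g}^{\pm 1}$, after which the matrix element is extracted directly.

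First I would compute, using the conjugation rules (\ref{g_psi_z_conj}) and the fact that the normal-ordering anomaly in $J_1=\sum_j\no{\psi_j\psi^*_{j+1}}$ vanishes,
\[
\hat{g}^{-1}J_1\hat{g}\ =\ \sum_{j,l}\tilde C_{jl}\,\psi_j\psi^*_l,\qquad \tilde C_{jl}\ :=\ \sum_{i=l-1}^{j}(g^{-1})_{ji}\,g_{i+1,l},
\]
where lower-triangularity of $g$ and $g^{-1}$ makes every inner sum finite and forces $\tilde C_{jl}=0$ unless $l\le j+1$. A direct comparison with (\ref{tilde_Q_+})--(\ref{tilde_Q_-}) gives the key identification $\tilde C_{-l,-k}=\tilde Q^+_{kl}$, while the diagonal trace $\sum_{j<0}\tilde C_{jj}$ reduces, term by term, to the prescribed constant $\kappa_\pm$ in each case. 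Then, for each of the four blocks, I insert this bilinear expansion into the claimed VEV and reduce the remaining four-point correlator via Wick's theorem using the two elementary propagators $\langle 0|\psi_i\psi^*_j|0\rangle=\delta_{ij}\Theta(i<0)$ and $\langle 0|\psi^*_i\psi_j|0\rangle=\delta_{ij}\Theta(i\ge 0)$.

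In the $\tilde Q^{++}$ and $\tilde Q^{--}$ cases the Wick reduction produces exactly two nonzero pairings: the ``off-diagonal'' one, whose summation against $\tilde C$ collapses to $\pm\tilde C_{-l,-k}=\pm\tilde Q^{\pm\pm}_{kl}$, and a ``diagonal'' one, coming from the contraction between $\psi^{(*)}_{-k}$ and $\psi^{(*)}_{-l}$, which forces $k=l$ and contributes the sum $\sum_{j<0}\tilde C_{jj}$ times $\delta_{kl}$. Rearranging yields the two claimed formulas with the correct $\kappa$-shift. For $\tilde Q^{+-}$ the diagonal pairing cannot contribute because $\langle\psi_{-k}\psi^*_{-l}\rangle$ vanishes when $k\ge 1$ and $l\le 0$, so only the off-diagonal pairing survives and the $\kappa$ term is absent. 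The $\tilde Q^{-+}$ case is the most elementary: the general range restriction $l\le k+1$ in (\ref{tilde_Q_+}), combined with $k\le 0$ and $l\ge 1$, forces $k=0,\,l=1$, and the single surviving term of the defining sum gives the stated product of matrix elements.

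The main obstacle I expect is the sign and support bookkeeping in the Wick reduction, specifically keeping straight the two distinct step-function supports of the propagators, which is precisely what selects $j<0$ in the diagonal trace and therefore produces the $\kappa$-shift rather than a divergent sum. Once that convention is fixed and the identification $\tilde C_{-l,-k}=\tilde Q^+_{kl}$ (with its index range) is verified by inspection, the rest is a routine comparison of finite sums with the definitions of $\tilde Q^{\pm\pm}$ and $\kappa_\pm$.
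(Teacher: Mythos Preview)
Your approach is essentially the one the paper intends: the Corollary is stated immediately after Proposition~\ref{zwk_recursion_prop} with the remark ``It follows from the above computation,'' and that computation is exactly the insertion of $[J_1,\psi^*(z)]$ into the VEV representation (\ref{wkg_fermionic}) of $w_k^g$, followed by intermediate-state (equivalently, Wick) expansion.  Your identification $\tilde C_{-l,-k}=\tilde Q^+_{kl}$ and the blockwise Wick reduction are correct, and the $+-$ and $-+$ blocks go through as you say.

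There is, however, one point where your sketch is too loose.  You write that the diagonal contraction ``reduces, term by term, to the prescribed constant $\kappa_\pm$ in each case,'' but if you carry out the Wick computation you describe, the diagonal pairing in \emph{both} the $++$ and the $--$ VEVs yields the \emph{same} sum $\sum_{a<0}\tilde C_{aa}=\kappa_-$, not $\kappa_+$ in the $--$ case.  The discrepancy comes from the step you treated as harmless: the identity $\hat g^{-1}J_1\hat g=\sum_{a,b}\tilde C_{ab}\,\psi_a\psi^*_b$ is only formal, because the right-hand side is not a well-defined operator on Fock space (its diagonal part $\sum_a\tilde C_{aa}\psi_a\psi^*_a$ does not terminate on $|0\rangle$).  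The correct operator identity involves normal ordering plus a central term, and it is precisely this central-extension correction that distinguishes $\kappa_+$ from $\kappa_-$ in the two blocks.  So the ``sign and support bookkeeping'' you flag as the main obstacle is not just bookkeeping: you must either (i) keep the $j$-sum from $J_1$ on the outside and insert intermediate states before exchanging sums, as the paper's proof of Proposition~\ref{zwk_recursion_prop} suggests, or (ii) replace your bilinear expansion by its normal-ordered form and track the resulting constant explicitly.  Without that, your argument does not actually recover the $\kappa_+$ in the $\tilde Q^{--}$ formula.
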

  Note that   $ \tilde{Q}^{-+}_{kl}$ has just one nonzero entry, in the lower triangular corner $(01)$,
  and both $ \tilde{Q}^{--}_{kl} $ and  $ \tilde{Q}^{++}_{kl} $ are nearly lower triangular, with only  a single
  nonzero diagonal immediately above the principal one.
  
  \begin{remark}
  Proposition \ref{zwk_recursion_prop} provides a system of recursion relations satisfied by the adapted
  basis elements $\{w_k^g(z)\}$,  $\{w_k^{g*}(z)\}$ in the sense that the recursion matrices $\tilde{Q}^\pm$
  are almost lower triangular, with a single nonvanishing diagonal above the principle one. This means that,
  given all the preceding $\{w^g_j(z)\}$'s, for $j\le k$, we can uniquely determine  $w^g_{k+1}(z)$
  (and similarly for $w^{g*}_{k+1}(z)$). This is only a finite recursion system in the case when the matrices
  $\tilde{Q}^\pm$ are of finite band width below the principle diagonal. In Section \ref{Hypersection} it will
  be shown that this is indeed the case when we specialize to group elements of the convolution type,
  with a generating function $G(z)$ that is polynomial, and the number of nonvanishing parameters $s_i$
  in the abelian group element $\hat{\gamma}_-({\bf s})$ finite.
  \end{remark}
 
\subsection{Euler derivative relations for general $\hat{g}$}
\label{deriv_rec_rels}

Applying the Euler operator 
\be\label{Eulerop}
\DD:= z {d \over dz}
\ee
to the adapted bases $ \{w_k^g(z)\}$, $ \{w_k^{g*}(z)\}$
gives the following linear relations
\begin{proposition}
\label{zwk_drecursion_prop}
\bea
\DD w_k^g(z) &\&= \sum_{j=-\infty}^{k} \tilde{P}^+_{kj} w_j^g(z)
\label{zwk_drecursion}
\\
\DD w_k^{g*}(z) &\&=  \sum_{j=-\infty}^{k} \tilde{P}^-_{kj} w_j^{g*}(z) 
\label{zwk_drecursion_dual}
\eea
where the lower triangular matrices $\tilde{P}^{\pm} $ have as entries
\bea
\tilde{P}^+_{kj} &\&= \sum_{i=j}^{k} i g^{-1}_{-j, -i} g_{-i, -k}   - \delta_{j k}, \quad j\le k, \\
\tilde{P}^-_{kj} &\&= \sum_{i=j}^{k}  i g^{-1}_{k-1, i-1} g_{i-1, j-1} - \delta_{jk}  \quad j\le k.
\label{tilde_P_pm}
\eea
\end{proposition}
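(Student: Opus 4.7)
My plan is to establish both identities by direct Fourier manipulation, exactly paralleling the first approach of Proposition \ref{zwk_recursion_prop} but with $z \partial_z$ acting as a diagonal multiplier on monomials instead of multiplication by $z$ shifting the index.

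Starting from the lower triangular Fourier expansion (\ref{w_k_g_triangular_series}) of $w_k^g(z)$, I would apply $\mathcal{D}$ term by term to obtain
\begin{equation*}
\mathcal{D}\, w_k^g(z) = \sum_{j=-\infty}^{k-1} j\, g_{-j-1,-k}\, z^j.
\end{equation*}
Then re-expand each $z^j$ in the adapted basis via (\ref{w_g_z_inv}), which, thanks to lower triangularity, terminates at $m=j+1$. Interchanging the order of summation (for each fixed $m\le k$, the inner index runs over $j=m-1,\dots,k-1$) and performing the shift $i = j+1$ yields the coefficient of $w_m^g(z)$ as $\sum_{i=m}^{k}(i-1)\, g^{-1}_{-m,-i}\, g_{-i,-k}$. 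Splitting the factor $(i-1) = i - 1$, the first piece gives exactly the claimed $\sum_{i=j}^{k} i\, g^{-1}_{-j,-i}\, g_{-i,-k}$, while the second piece is $-\sum_{i=m}^{k}g^{-1}_{-m,-i}\, g_{-i,-k}$. Because $g$ is lower triangular, the nonvanishing range of this matrix product is precisely $m \le i \le k$, so it collapses by $g^{-1}g = I$ to $-\delta_{mk}$, producing the boundary term $-\delta_{jk}$ in $\tilde P^+_{kj}$.

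The proof of (\ref{zwk_drecursion_dual}) follows the same template applied to $w_k^{g*}(z) = \sum_{j\le k-1} g^{-1}_{k-1,j} z^j$, using (\ref{w_g_dual_z_inv}) for the re-expansion and the identity $g\, g^{-1} = I$ in place of $g^{-1}g = I$. The same index shift $i = j+1$ converts the sums into the form stated for $\tilde P^-_{kj}$.

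Alternatively, one could proceed fermionically as in Corollary \ref{fermi_rep_Q}, using the bosonic realization $z \partial_z \psi(z) = [L_0, \psi(z)]$ and $z\partial_z \psi^*(z) = -[L_0,\psi^*(z)]$ with $L_0 := \sum_{i\in\Zb} i\no{\psi_i \psi^*_i}$, inserting a complete set of intermediate states in the charge $N=0$ (resp.\ $N=-1$) sector into the VEV expressions (\ref{wkg_fermionic}), (\ref{wkg*_fermionic}); the lower triangular truncation would then come from the triangularity of $\hat g$ acting on the vacuum. The main subtlety either way is purely bookkeeping: ensuring that the triangularity of $g$ forces the matrix-product identity to collapse on precisely the truncated summation range, so that the $-\delta_{jk}$ correction appears with the correct sign and nothing spills over into $j > k$.
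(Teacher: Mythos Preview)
Your proposal is correct and follows exactly the first of the two approaches the paper sketches: applying $\DD$ to the triangular Fourier expansion, re-expanding via (\ref{w_g_z_inv})--(\ref{w_g_dual_z_inv}), shifting $i=j+1$, and collapsing the unweighted part of the sum to $\delta_{jk}$ via $g^{-1}g=I$ on the truncated range; your fermionic alternative likewise mirrors the paper's second route with $\hat H$ in place of your $L_0$. One small slip in the alternative sketch: because $\psi^*(z)=\sum_i \psi_i^* z^{-i-1}$, the correct identity is $\DD\psi^*(z)=-[\hat H,\psi^*(z)]-\psi^*(z)$, i.e.\ there is an extra $-\psi^*(z)$ term, though this does not affect your main (Fourier) argument.
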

\begin{proof}
This can again be proved either directly, by inverting the Fourier series expansions for $w_k^g(z)$ and $w_k^{g*}(z)$, 
or by using the fermionic identities
\be
\DD \psi(z) = [\hat{H}, \psi(z)], \quad \DD \psi^*(z) = - [\hat{H}, \psi^*(z)] -\psi^*(z),
\ee
where $\hat{H}$ is the fermionic energy operator
\be
\hat{H} := \sum_{j\in \Zb} j \,\no { \psi_j \psi^*_j }, 
\ee
and again evaluating the RHS for the fermionic expressions by introducing the projector $\Pi_N$
to the $N=0$ or $N=-1$ fermionic charge sectors.
\end{proof}

The differential recursion matrix $\tilde{P}^+$ may again be split into block form
\be
\tilde{P}^+ = \begin{pmatrix} \tilde{P}^{--} &  \vdots &\tilde{P}^{-+}  = {\bf 0} \\ 
\cdots  & & \cdots \\ \tilde{P}^{+-} & \vdots & \tilde{P}^{++} 
    \end{pmatrix}
\ee
where the range of indices are again:
\bea
 \tilde{P}^{--}_{kl}:&\&  \ k\le 0, \ l\le 0; \quad    \tilde{P}^{-+}_{kl}: k\le 0, \ l\ge 1; \cr
  \tilde{P}^{+-}_{kl}: &\& \ k\ge 1, \ l\le 0; \quad   \tilde{P}^{++}_{kl}: k\ge 1, \ l\ge 1.
   \quad 
\eea

\begin{proposition}
\label{Pij_prop}
The fermionic representation of the Euler differential  matrix elements is
\bea
 \tilde{P}^{--}_{kl} &\& =  - \langle 0 | \psi^*_{-l} \hat{g}^{-1} \hat{H} \hat{g} \psi_{-k} | 0 \rangle , \quad k \le 0, \ l\le 0 \cr
  \tilde{P}^{-+}_{kl} &\& =   0,  \quad  k\le 0, \ l\ge 1\cr
  \tilde{P}^{+-}_{kl} &\& =  \langle 0 | \psi_{-l} \psi^*_{k}\hat{g}^{-1} \hat{H}\hat{g}  | 0 \rangle -  \delta_{k, -l},  
  \quad k\ge 1, \ l\le 0 
 \cr  
      \tilde{P}^{++}_{kl} &\& =  - \langle 0 | \psi_{-k} \hat{g}^{-1} \hat{H} \hat{g} \psi^*_{-l} | 0 \rangle ,
      \quad k \ge 1, \ l \ge 1.
\eea
  Note that   both $ \tilde{P}^{--}_{kl} $ and  $ \tilde{P}^{++}_{kl} $ are lower triangular.  
  \end{proposition}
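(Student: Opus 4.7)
The plan is to prove Proposition \ref{Pij_prop} in direct analogy with the proof of Corollary \ref{fermi_rep_Q}, replacing $J_1$ by the fermionic energy operator $\hat{H}$. Starting from the fermionic VEV representations for $w_k^g(z)$ and $w_k^{g*}(z)$ supplied by Proposition \ref{wkg_fermionic_prop} (with separate formulas for $k\ge 1$ and $k\le 0$), one applies the Euler operator $\DD$ under the VEV sign and uses the commutator identities $\DD\psi(z) = [\hat{H}, \psi(z)]$ and $\DD\psi^*(z) = -[\hat{H}, \psi^*(z)] - \psi^*(z)$ recorded in the proof of Proposition \ref{zwk_drecursion_prop}. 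After moving $\hat{H}$ to one side of the remaining fermion field by the commutator, one inserts a resolution of the identity $\Pi_N = \sum_\lambda |\lambda;N\rangle\langle \lambda;N|$ onto the appropriate charge sector ($N=0$ or $N=\pm 1$) between $\hat{H}\hat{g}$ and the neighboring field, producing an expansion in the adapted basis whose coefficients match those of Proposition \ref{zwk_drecursion_prop}; these coefficients can then be read off block by block.

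Two general simplifications streamline the four-block computation. First, $\hat{H}|0\rangle = 0$ together with $\langle 0|\hat{H} = 0$ (and $\langle 0|\hat{g}^{\pm 1} = \langle 0|$, from the lower-triangularity assumption) eliminates every term in which $\hat{H}$ would otherwise be paired with a bare vacuum. This is responsible for the complete vanishing of the $\tilde{P}^{-+}$ block: the potential corner contribution analogous to $\delta_{k 0}\delta_{l 1}g_{00}g^{-1}_{-1,-1}$ encountered for $J_1$ in Corollary \ref{fermi_rep_Q} is absent here because $\hat{H}$ preserves charge rather than shifting it, so there is no mechanism to couple the $\psi_{-k}$ ($k\le 0$) and $\psi^*_{-l}$ ($l\ge 1$) sides across the vacuum. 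Second, the anomalous $-\psi^*(z)$ piece in $\DD\psi^*(z)$ (and its counterpart for $\psi(z)$ when handling $w_k^{g*}$) contributes exactly the $-\delta_{kl}$ diagonal that has already been pulled out on the right-hand sides of \eqref{tilde_P_pm}, so no spurious boundary correction appears on the diagonal $++$ and $--$ blocks, and the stated VEV formulas emerge immediately.

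The delicate step, and the main obstacle, is the off-diagonal $+-$ block. For $k\ge 1$, $l\le 0$ one must expand $\DD w_k^g(z)$ in the basis $\{w_l^g(z) = \langle 0|\hat{g}^{-1}\psi^*(z)\hat{g}\psi_{-l}|0\rangle\}_{l\le 0}$, which requires rearranging $\langle 0|\psi_{-k}\hat{g}^{-1}\psi^*(z)\hat{H}\hat{g}|0\rangle$ (obtained after discarding the piece annihilated by $\langle 0|\hat{H}=0$) so that $\hat{g}\psi_{-l}|0\rangle$ stands on the right. This reorganization uses the anticommutator $[\psi_{-l}, \psi^*_{k}]_+ = \delta_{-l,k}$, which contributes precisely the boundary term $-\delta_{k,-l}$ recorded in the statement, while the remaining contractions assemble into $\langle 0|\psi_{-l}\psi^*_{k}\hat{g}^{-1}\hat{H}\hat{g}|0\rangle$. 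The companion derivation of $\tilde{P}^-$ starting from $w_k^{g*}(z)$ proceeds by the parallel argument with $\psi \leftrightarrow \psi^*$, taking care of the sign difference in $\DD\psi(z) = [\hat{H}, \psi(z)]$ (no $-\psi(z)$ term); the diagonal correction $-\delta_{jk}$ in that case arises not from an anomalous field shift but from the boundary anticommutator at the top of the range of summation in \eqref{tilde_P_pm}, which the same projector insertion makes manifest.
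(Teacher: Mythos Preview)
Your approach---apply $\DD$ to the fermionic VEV formulas for $w_k^g$ via the commutator identities with $\hat H$, then insert the projector $\Pi_N$ onto the appropriate charge sector to expand in the adapted basis---is precisely the method the paper itself sketches in the proof of Proposition~\ref{zwk_drecursion_prop}, and is what the one-line proof ``A direct evaluation of the matrix elements'' is meant to invoke. Your block-by-block analysis, including the use of $\langle 0|\hat H=0$ to kill the $\tilde P^{-+}$ corner and the anticommutator bookkeeping for $\tilde P^{+-}$, parallels the analogous treatment of $J_1$ in Corollary~\ref{fermi_rep_Q}.
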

  \begin{proof}
  A direct evaluation of the matrix elements.
  \end{proof}
  
  \begin{remark}
  Proposition \ref{Pij_prop} provides a recursive system of linear differential relations satisfied by the adapted
  basis elements $\{w_k^g(z)\}$,  $\{w_k^{g*}(z)\}$ in the sense that the  matrices $\tilde{P}^\pm$
  are lower triangular. This means that, given all the $\{w^g_j(z)\}$'s, for $j\le k$, 
  we can uniquely determine $\DD w^g_k(z)$   (and similarly for $\DD w^{g*}_k(z)$).
  This is again only a finite system in the case when the matrices
  $\tilde{P}^\pm$ are of finite band width below the principal diagonal. In Section \ref{Hypersection} it will
  be shown that this is indeed the case when we require that  the number of nonvanishing parameters $s_i$
  in the abelian group element $\hat{\gamma}_-({\bf s})$ be finite.
  \end{remark}
 
\subsection{The $n$-pair correlation function  and Christoffel-Darboux type kernel}
\label{pair_correlators}
 
 The $n$-pair correlation function is defined by the following VEV
 \be
 \tilde{K}^g_{2n}(z_1, \dots , z_n; w_1, \dots , w_n) := 
 \langle 0| \prod_{i=1}^n \left( \psi(z_i)\psi^*(w_i) \right) \hat{g} |0 \rangle ,
 \label{K_2n}
 \ee
 provided $|z_i | > |w_i| \ \forall i$.
 Adding the KP multitime flow variables gives the time dependent correlation function
 \be
\tilde{K }^g_{2n}(z_1, \dots , z_n; w_1, \dots , w_n, {\bf t}) := 
 \langle 0|\hat{\gamma}_+({\bf t})  \prod_{i=1}^n \left( \psi(z_i)\psi^*(w_i) \right)\hat{g} |0 \rangle .
 \label{K_2n_time}
 \ee
 In particular, for $n=1$, ${\bf t}= {\bf 0}$,
 \be
\tilde{K}^g_{2}(z; w) := \begin{cases}
 \langle 0| \psi(z)\psi^*(w)  \hat{g} |0 \rangle \quad \text{if} \quad |z| > |w|.
  \\
- \langle 0|  \psi^*(w)\psi(z) \hat{g} |0 \rangle \quad \text{if} \quad |z| < |w|.
 \label{K2_VEV_rep}
 \end{cases}
 \ee
 This is equivalent to the following expression in terms of the $\tau$-function
 \begin{proposition}
 \label{K_2_tau_prop}
\be
\tilde{K}^g_2(z,w) = { \tau_g( [w^{-1}]-[z^{-1}]) \over z-w}. 
\label{K_2_tau}
\ee
\end{proposition}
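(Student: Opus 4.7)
The goal is to recognise $\tilde K^g_2(z,w)$ as $\tfrac{1}{z-w}$ times a VEV of the form $\langle 0|\hat\gamma_+(\cdot)\hat g|0\rangle$, which by the definition \eqref{tau_g_VEV} is $\tau_g$ evaluated at a shifted time. The plan is therefore to invoke the boson--fermion correspondence in the form of a single vertex-operator identity
\[
\langle 0|\psi(z)\psi^*(w) \;=\; \frac{1}{z-w}\,\langle 0|\hat\gamma_+\!\bigl([w^{-1}]-[z^{-1}]\bigr), \qquad |z|>|w|,
\]
understood as an equality of linear functionals on the charge-zero sector $\FF_0$. Multiplying on the right by $\hat g|0\rangle$ and applying \eqref{tau_g_VEV} then yields
\[
\tilde K^g_2(z,w) \;=\; \langle 0|\psi(z)\psi^*(w)\hat g|0\rangle \;=\; \frac{1}{z-w}\,\langle 0|\hat\gamma_+\!\bigl([w^{-1}]-[z^{-1}]\bigr)\hat g|0\rangle \;=\; \frac{\tau_g([w^{-1}]-[z^{-1}])}{z-w},
\]
so the entire content of the proposition reduces to establishing the displayed operator identity.

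To prove that identity I would proceed in two steps. First, verify it at $\hat g = \Ib$ by direct Wick contraction: using the Fourier expansion \eqref{psi_series} and the canonical anticommutator $\{\psi_i,\psi^*_j\}_+=\delta_{ij}$, together with the fact that only the modes with $i<0$ contribute to $\langle 0|\psi_i\psi^*_j|0\rangle$, one finds
\[
\langle 0|\psi(z)\psi^*(w)|0\rangle \;=\; \sum_{i<0} z^i w^{-i-1} \;=\; \frac{1}{z-w}, \qquad |z|>|w|,
\]
which matches the right-hand side since $\tau_{\Ib}([w^{-1}]-[z^{-1}])=1$. Second, show that both sides transform identically under the action of the abelian flow group on the left. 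Differentiating the conjugation rule \eqref{gamma+psi_conj} in $t_k$ yields $[J_k,\psi(z)]=z^k\psi(z)$ and $[J_k,\psi^*(w)]=-w^k\psi^*(w)$, so
\[
\hat\gamma_+({\bf t})\,\psi(z)\psi^*(w)\,\hat\gamma_+({\bf t})^{-1} \;=\; e^{\xi({\bf t},z)-\xi({\bf t},w)}\,\psi(z)\psi^*(w),
\]
while the right-hand side, upon left-insertion of $\hat\gamma_+({\bf t})$, produces the same exponential prefactor by the Campbell identity for the abelian flow group, together with the shift of argument by $[w^{-1}]-[z^{-1}]$. Since the $\hat\gamma_+$-orbit of $\langle 0|$ separates vectors in $\FF_0$ (this is exactly bosonization at the level of the Fock module), the two checks together pin down the bra uniquely, yielding the claimed identity for every ${\bf t}$ and hence at ${\bf t}={\bf 0}$.

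The main obstacle is bookkeeping: tracking signs and normal-ordering conventions in the Heisenberg commutators, and checking that the density/bosonization step is valid in the present formal series setting where coefficients live in $\mathbb K[\beta,\gamma]$. None of this is conceptually difficult; the vertex-operator identity is standard in the Sato--Date--Jimbo--Kashiwara--Miwa framework (cf.~\cite{JM,SS}). The writeup should also remark that the alternative branch $|z|<|w|$ in \eqref{K2_VEV_rep} produces the same final formula $(z-w)^{-1}\tau_g([w^{-1}]-[z^{-1}])$, thanks to the opposite geometric expansion of $(z-w)^{-1}$ combined with the sign flip incurred when $\psi^*(w)$ is anticommuted past $\psi(z)$.
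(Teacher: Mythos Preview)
Your overall strategy — reduce the proposition to the bra identity
\[
\langle 0|\psi(z)\psi^*(w) \;=\; \frac{1}{z-w}\,\langle 0|\hat\gamma_+\!\bigl([w^{-1}]-[z^{-1}]\bigr)
\]
and then pair with $\hat g|0\rangle$ — is exactly what the paper does. Your step~1 (the vacuum check) is also identical to the paper's. The problem is step~2.

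Inserting $\hat\gamma_+({\bf t})$ ``on the left'' is vacuous: the left vacuum is $\Gamma_+$-invariant, $\langle 0|\hat\gamma_+({\bf t})=\langle 0|$, so both sides of the bra identity are unchanged by this insertion and you learn nothing beyond step~1. Equally, $\hat\gamma_+({\bf t})|0\rangle=|0\rangle$, so the $\Gamma_+$-orbit of the vacuum is a single point and cannot span $\FF_0$. Your separation claim (``the $\hat\gamma_+$-orbit of $\langle 0|$ separates vectors'') is true only in the sense that the map $|v\rangle\mapsto\langle 0|\hat\gamma_+({\bf t})|v\rangle$ is injective, but that fact does not help you compare two \emph{bras}: what you need is a spanning set of \emph{kets} on which to test them. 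In short, your two checks together amount to the single check $\langle L|0\rangle=\langle R|0\rangle$, which is far from enough.

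There are two easy fixes. One is to replace $\hat\gamma_+$ by $\hat\gamma_-$: the kets $\hat\gamma_-({\bf s})|0\rangle=\sum_\lambda s_\lambda({\bf s})|\lambda\rangle$ genuinely span $\FF_0$, and commuting $\hat\gamma_-({\bf s})$ past $\psi(z)\psi^*(w)$ using \eqref{gamma-psi_conj}, together with the central extension formula $\hat\gamma_+({\bf t})\hat\gamma_-({\bf s})=e^{\sum it_is_i}\hat\gamma_-({\bf s})\hat\gamma_+({\bf t})$ with ${\bf t}=[w^{-1}]-[z^{-1}]$, both sides pick up the same factor $e^{\xi({\bf s},w^{-1})-\xi({\bf s},z^{-1})}$ times their vacuum value. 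The other fix is the paper's own route: test the bra identity directly on the partition basis $|\lambda\rangle$. Since $s_\lambda([w^{-1}]-[z^{-1}])$ vanishes unless $\lambda=\emptyset$ or $\lambda=(a|b)$ is a hook, and since Wick's theorem gives $\langle 0|\psi(z)\psi^*(w)|(a|b)\rangle=(-1)^b w^{-a-1}z^{-b-1}$, the comparison reduces to a short explicit check. Either route closes the gap; as written, your argument does not.
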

\begin{proof}
See Appendix \ref{fermionic_VEV_K2}
\end{proof} 
 
 By Wick's theorem, $\tilde{K}^g_{2n}$  may be expressed as a determinant in terms of the $2$-point function
 \begin{lemma}
 \be
\tilde{K}^g_{2n}(z_1, \dots , z_n; w_1, \dots , w_n, {\bf t}) =\det(\tilde{K}^g_2(z_i, w_j, {\bf t})|_{1 \le i,j \le n}
  \label{K_2n_time_lemma}
 \ee
\end{lemma}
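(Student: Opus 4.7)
The strategy is to reduce the statement to the standard fermionic Wick theorem, viewed as the quasi-free (Gaussian) state property of the functional $X \mapsto \langle 0|X\hat{g}|0\rangle$ for a group-like $\hat{g}$.

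\textbf{Step 1 (absorbing ${\bf t}$).} First I would commute $\hat{\gamma}_+({\bf t})$ through each factor $\psi(z_i)$ and $\psi^*(w_i)$ using the conjugation identity (\ref{gamma+psi_conj}), producing scalar multipliers $\prod_i e^{\xi({\bf t},z_i)-\xi({\bf t},w_i)}$ and leaving $\hat{\gamma}_+({\bf t})$ adjacent to $\hat{g}$. The combined operator $\hat{g}({\bf t}):=\hat{\gamma}_+({\bf t})\hat{g}$ is again group-like (the product of group-like elements is group-like, as follows from (\ref{bilinear_g})). The scalar prefactor splits as $\prod_i e^{\xi({\bf t},z_i)}$ (a common factor for row $i$ of the $n\times n$ matrix) times $\prod_j e^{-\xi({\bf t},w_j)}$ (a common factor for column $j$), hence it multiplies both sides of the claimed identity by the same scalar. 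It therefore suffices to prove the result at ${\bf t}={\bf 0}$ for an arbitrary lower-triangular group-like $\hat{g}$; by the triangularity assumption $\langle 0|\hat{g}|0\rangle=1$, so the functional $\omega(X):=\langle 0|X\hat{g}|0\rangle$ is normalized.

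\textbf{Step 2 (quasi-free property).} The central claim is that $\omega$ is quasi-free on the Clifford algebra: for any Clifford generators $a_1,\dots,a_{2m}\in\{\psi_i,\psi_j^*\}$,
\begin{equation}
\omega(a_1 a_2\cdots a_{2m}) \;=\; \sum_\pi \sgn(\pi)\prod_{\{i,j\}\in\pi,\ i<j} \omega(a_i a_j),\notag
\end{equation}
the sum ranging over perfect matchings $\pi$ of $\{1,\dots,2m\}$. This is the fermionic Wick theorem for group-like elements and is a direct consequence of the bilinear identity (\ref{bilinear_g}): I would prove it by induction on $m$, commuting $a_1$ to the right past $a_2,\dots,a_{2m}$ using the canonical anticommutation relations and then past $\hat{g}$ using (\ref{bilinear_g}) (written in the form $\hat{g}a_1=\sum (\text{contraction term})\cdot(\text{new operator})\cdot\hat{g}$), so that the rightmost operator annihilates $|0\rangle$ and the two-point contractions are generated one at a time with the correct fermionic signs.

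\textbf{Step 3 (reduction to the determinant).} Applying Step 2 with $(a_1,\dots,a_{2n})=(\psi(z_1),\psi^*(w_1),\dots,\psi(z_n),\psi^*(w_n))$, charge conservation (the bilinears $\no{\psi_i\psi_j^*}$ in the exponent (\ref{fermigroup}) of $\hat g$ all preserve fermion number) kills any two-point function $\omega(\psi(z_i)\psi(z_j))$ or $\omega(\psi^*(w_i)\psi^*(w_j))$. Only matchings that pair every $\psi(z_i)$ with exactly one $\psi^*(w_{\sigma(i)})$ survive, and the associated sign $\sgn(\pi)$ is precisely $\sgn(\sigma)$ once one accounts for the fermionic transpositions needed to interleave the pairs. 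Each surviving pairing contributes $\prod_i \omega(\psi(z_i)\psi^*(w_{\sigma(i)}))=\prod_i\tilde{K}_2^g(z_i,w_{\sigma(i)})$ (using the $|z_i|>|w_j|$ branch of (\ref{K2_VEV_rep})), and summing over $\sigma\in S_n$ yields $\det(\tilde{K}_2^g(z_i,w_j))$ by the Leibniz formula.

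\textbf{Main obstacle.} The substantive point is justifying the quasi-free property of $\omega$ rigorously in the formal-series setting of the paper; the combinatorial and sign bookkeeping of Step 3 is routine once Step 2 is in hand. A self-contained alternative that avoids quoting Wick's theorem is a direct induction on $n$: insert a complete set of intermediate states between $\psi^*(w_1)$ and $\psi(z_2)$, use (\ref{bilinear_g}) to move $\psi(z_1)$ past $\hat g$, and match the resulting cofactor expansion against the determinant, the $n=1$ base case being the definition of $\tilde{K}_2^g$ itself.
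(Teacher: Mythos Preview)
Your approach coincides with the paper's: the paper's entire proof is the phrase ``By Wick's theorem'', and your Steps 2--3 are a correct fleshing-out of that invocation (quasi-free state for a charge-preserving group-like element, with the Pfaffian collapsing to a determinant via charge conservation).

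There is, however, a real slip in Step~1. After commuting $\hat\gamma_+({\bf t})$ to the right you are left with the functional $X\mapsto\langle 0|X\,\hat g({\bf t})|0\rangle$ for $\hat g({\bf t})=\hat\gamma_+({\bf t})\hat g$. This element is group-like, but it is \emph{not} lower triangular, and its vacuum expectation is $\langle 0|\hat g({\bf t})|0\rangle=\tau_g({\bf t})$, not~$1$. The Wick identity you write in Step~2 requires $\omega(1)=1$; for an unnormalised group-like element the correct form carries an overall power of the vacuum amplitude, so your inductive argument actually produces
\[
\tilde K^g_{2n}({\bf z},{\bf w},{\bf t})\,\tau_g({\bf t})^{\,n-1}\;=\;\det\bigl(\tilde K^g_2(z_i,w_j,{\bf t})\bigr)_{1\le i,j\le n}.
\]
One can verify this discrepancy directly by comparing Corollary~\ref{corollary_2n} with the determinant for $n=2$ and any $\tau_g$ with $\tau_g({\bf t})\neq 1$: the two expressions differ exactly by $\tau_g({\bf t})$. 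So your claim ``it suffices to prove the result at ${\bf t}={\bf 0}$ for an arbitrary lower-triangular $\hat g$'' does not reduce the ${\bf t}$-dependent statement to the normalised one; the reduction only goes through when $\tau_g({\bf t})=1$, in particular at ${\bf t}={\bf 0}$. For that case your proof is complete and matches the paper's intent, and that is the only case actually used later.
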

To express the multipair correlator $\tilde{K}^g_{2n}$ in terms of the $\tau$-function it is helpful
to introduce the following notation.
\begin{definition}
For a set of  $n$ nonzero complex numbers ${\bf z} = (z_1, z_2, \dots z_n )$,  we use the
  notation $[{\bf z}^{-1}]$ to denote the infinite sequence $\{[{\bf z}]_i\}_{i=1, 2, \cdots}$, 
  of normalized power sums
\be
[{\bf z}^{-1}] := ( [{\bf z}^{-1}]_1,  [{\bf z}^{-1}]_2, \dots ), \quad  [{\bf z}^{-1}]_i :=  {1\over i} \sum_{j=1}^n{1\over( z_j)^i}.
  \ee
\end{definition}

  \begin{lemma}
  \label{main_fermionic_identity}
  For $2n$ distinct nonzero complex parameters $\{z_i, w_i\}_{i=1, \dots n}$ provided $|z_i | > |w_i| \ \forall i$, we have
  \be
  \langle 0 | \prod_{i=1}^n \left(\psi(z_i) \psi^*(w_i)\right) =
  \det \left({1\over z_i - w_j} \right)_{1\le i,j \le n}\langle 0 | \hat{\gamma}_+([{\bf w}^{-1}] - [{\bf z}^{-1}]).
  \label{psi_psi_dag_gamma_+}
  \ee
  \end{lemma}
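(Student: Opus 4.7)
The strategy is to establish the stated identity between bra vectors by testing both sides against the family of coherent-like states $\hat{\gamma}_-({\bf s})|0\rangle$ for arbitrary formal parameters ${\bf s}=(s_1,s_2,\dots)$. Since the relation $\langle \mu | \hat{\gamma}_-({\bf s})|0\rangle = s_\mu({\bf s})$ quoted earlier gives $\hat{\gamma}_-({\bf s})|0\rangle = \sum_\mu s_\mu({\bf s})|\mu\rangle$, the Taylor coefficients in ${\bf s}$ of this family realize every basis vector of $\FF_0$, so equality of the two scalar pairings as formal power series in ${\bf s}$ will suffice.

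For the left-hand side, I first transport $\hat{\gamma}_-({\bf s})$ leftward through the product of fermion operators using the conjugation relations (\ref{gamma-psi_conj}). Each factor $\psi(z_i)$ contributes a scalar $e^{-\xi({\bf s},z_i^{-1})}$ and each $\psi^*(w_i)$ a scalar $e^{+\xi({\bf s},w_i^{-1})}$, yielding
\be
\prod_{i=1}^n \psi(z_i)\psi^*(w_i)\,\hat{\gamma}_-({\bf s}) = e^{\sum_i(\xi({\bf s},w_i^{-1}) - \xi({\bf s},z_i^{-1}))}\,\hat{\gamma}_-({\bf s})\prod_{i=1}^n \psi(z_i)\psi^*(w_i).
\ee
Combined with $\langle 0|\hat{\gamma}_-({\bf s}) = \langle 0|$ (which follows from $\langle 0|J_{-k}=0$ for $k\ge 1$, or equivalently from $J_k|0\rangle=0$), the pairing collapses to this scalar prefactor times the pure vacuum expectation value $\langle 0|\prod_i\psi(z_i)\psi^*(w_i)|0\rangle$. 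The residual VEV is then evaluated by the fermionic Wick theorem using the propagator $\langle 0|\psi(z)\psi^*(w)|0\rangle = 1/(z-w)$ (for $|z|>|w|$), giving the Cauchy determinant $\det\left(1/(z_i-w_j)\right)$.

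For the right-hand side, the pairing with $\hat{\gamma}_-({\bf s})|0\rangle$ becomes $\det(1/(z_i-w_j))\,\langle 0|\hat{\gamma}_+({\bf t})\hat{\gamma}_-({\bf s})|0\rangle$ with ${\bf t}:=[{\bf w}^{-1}]-[{\bf z}^{-1}]$. Applying the central-extension commutation $\hat{\gamma}_+({\bf t})\hat{\gamma}_-({\bf s}) = e^{\sum_k k t_k s_k}\,\hat{\gamma}_-({\bf s})\hat{\gamma}_+({\bf t})$ together with $\hat{\gamma}_+({\bf t})|0\rangle=|0\rangle$ and $\langle 0|\hat{\gamma}_-({\bf s})=\langle 0|$ reduces this to $\det(1/(z_i-w_j))\cdot e^{\sum_k k t_k s_k}$. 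A direct substitution of $t_k = \tfrac{1}{k}\sum_i(w_i^{-k}-z_i^{-k})$ yields $\sum_k k t_k s_k = \sum_i\bigl(\xi({\bf s},w_i^{-1})-\xi({\bf s},z_i^{-1})\bigr)$, matching the left-hand side exponent exactly and completing the identification.

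The main technical obstacle is the sign bookkeeping in the Wick step, since the product $\prod_i \psi(z_i)\psi^*(w_i)$ is arranged in alternating rather than paired order. The cleanest route is to first anticommute into the paired form $\psi(z_1)\cdots\psi(z_n)\psi^*(w_1)\cdots\psi^*(w_n)$, which introduces an overall sign $(-1)^{n(n-1)/2}$; apply Wick's theorem in that form; and then re-permute the columns of the resulting Cauchy matrix to the natural order, producing a compensating sign $(-1)^{n(n-1)/2}$, so the final Cauchy determinant emerges with the correct overall sign. Convergence questions arising when $|z_i|<|w_j|$ for some $i\neq j$ are resolved by analytic continuation, as both sides are rational functions of the $(z_i,w_j)$.
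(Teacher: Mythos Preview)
Your proof is correct, but the paper takes a different route. Rather than testing against the coherent family $\hat{\gamma}_-({\bf s})|0\rangle$ and invoking Wick's theorem for general $n$, the paper proceeds by induction on $n$. The base case $n=1$ is established (in Appendix~\ref{fermionic_VEV_K2}) by checking both sides on each basis state $|\lambda\rangle$: one verifies directly that $s_\lambda([w^{-1}]-[z^{-1}])/(z-w)$ and $\langle 0|\psi(z)\psi^*(w)|\lambda\rangle$ both vanish unless $\lambda$ is empty or a hook $(a|b)$, and agree in those cases. For the inductive step one multiplies the $(n-1)$--variable identity on the right by $\psi(z_n)\psi^*(w_n)$, commutes $\hat{\gamma}_+([{\bf w}^{-1}]-[{\bf z}^{-1}])$ through this pair using (\ref{gamma+psi_conj}), and recognises the product of the $(n-1)\times(n-1)$ Cauchy determinant with the resulting rational prefactor as the $n\times n$ Cauchy determinant via the product formula~(\ref{Cauchy_det}).

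Your approach is arguably slicker: by pairing with $\hat{\gamma}_-({\bf s})|0\rangle$ you isolate the bosonic content as a single scalar exponential and reduce the fermionic content to the bare vacuum Cauchy determinant in one step, with no induction. The price is that the Wick evaluation $\langle 0|\prod_i\psi(z_i)\psi^*(w_i)|0\rangle=\det\bigl(1/(z_i-w_j)\bigr)$ is taken essentially as a known identity (your sign discussion handles it, though the phrase ``re-permute the columns'' is a slightly imprecise description of where the second $(-1)^{n(n-1)/2}$ comes from). The paper's inductive argument, by contrast, never needs the general Wick determinant: it derives everything from the single--pair propagator and the $\hat{\gamma}_+$ conjugation rule, which makes it more self-contained.
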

  
  \begin{proof}
 See Appendix \ref{K_g_2n_tau}.
  \end{proof}
  
It follows that, for an arbitrary group element $g \in \GL(\HH)$, we can express the $2n$-point time dependent 
  pair correlation  function (\ref{K2_VEV_rep}) in terms of the $\tau$-function.
  \begin{corollary}
  \label{corollary_2n}
  \bea
  \tilde{K}^g_{2n}( {\bf z}, {\bf w},{\bf t})  = \prod_{j=1}^n( e^{\xi({\bf t,} z_j)-\xi({\bf t}, w_j)} )
  \tau_g({\bf t} -[{\bf z}^{-1}] + [{\bf w}^{-1}]) \det_{i,j=1}^n \left({1\over z_i - w_j} \right) 
  \eea
    \end{corollary}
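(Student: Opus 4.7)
The plan is to combine Lemma \ref{main_fermionic_identity} with the conjugation formulas (\ref{gamma+psi_conj}) for the abelian group $\Gamma_+$, which should reduce Corollary \ref{corollary_2n} to essentially a bookkeeping computation.

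First, I would start from the definition (\ref{K_2n_time}) of the time-dependent multipair correlator and push $\hat{\gamma}_+({\bf t})$ to the right of the product $\prod_{i=1}^n(\psi(z_i)\psi^*(w_i))$. By the conjugation relations (\ref{gamma+psi_conj}), each time $\hat{\gamma}_+({\bf t})$ passes a $\psi(z_i)$ we pick up a factor $e^{\xi({\bf t},z_i)}$, and each time it passes a $\psi^*(w_i)$ we pick up a factor $e^{-\xi({\bf t},w_i)}$. Hence
\begin{equation}
\tilde{K}^g_{2n}({\bf z},{\bf w},{\bf t}) = \prod_{j=1}^n e^{\xi({\bf t},z_j)-\xi({\bf t},w_j)}\,\langle 0|\prod_{i=1}^n(\psi(z_i)\psi^*(w_i))\,\hat{\gamma}_+({\bf t})\hat{g}|0\rangle .
\end{equation}

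Next, I would apply Lemma \ref{main_fermionic_identity} to the left vacuum expectation, which (valid under the stated $|z_i|>|w_i|$ condition) converts $\langle 0|\prod_i(\psi(z_i)\psi^*(w_i))$ into $\det_{i,j}\bigl(1/(z_i-w_j)\bigr)\,\langle 0|\hat{\gamma}_+([{\bf w}^{-1}]-[{\bf z}^{-1}])$. Substituting yields
\begin{equation}
\tilde{K}^g_{2n}({\bf z},{\bf w},{\bf t}) = \prod_{j=1}^n e^{\xi({\bf t},z_j)-\xi({\bf t},w_j)}\det_{i,j=1}^n\!\left(\frac{1}{z_i-w_j}\right)\langle 0|\hat{\gamma}_+([{\bf w}^{-1}]-[{\bf z}^{-1}])\hat{\gamma}_+({\bf t})\hat{g}|0\rangle .
\end{equation}

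Finally, using commutativity of $\Gamma_+$, I would combine $\hat{\gamma}_+([{\bf w}^{-1}]-[{\bf z}^{-1}])\hat{\gamma}_+({\bf t}) = \hat{\gamma}_+({\bf t}-[{\bf z}^{-1}]+[{\bf w}^{-1}])$, so that the remaining VEV is precisely $\tau_g({\bf t}-[{\bf z}^{-1}]+[{\bf w}^{-1}])$ by the fermionic definition (\ref{tau_g_VEV}). This gives the desired identity. The only subtle step is the invocation of Lemma \ref{main_fermionic_identity}; everything else is an abelian commutation plus the definition of $\tau_g$. Since that lemma is already established, the proof is essentially a two-line manipulation, and I expect no real obstacle beyond verifying that the formal power series manipulations (treating ${\bf t}$ as formal flow variables) do not disturb the analytic condition $|z_i|>|w_i|$ used in the lemma, which is automatic because $\hat{\gamma}_+({\bf t})$ only rescales the $\psi$'s by formal exponentials.
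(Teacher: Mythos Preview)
Your proposal is correct and uses the same ingredients as the paper: the conjugation relations (\ref{gamma+psi_conj}) to move $\hat{\gamma}_+({\bf t})$ past the fermion fields, Lemma~\ref{main_fermionic_identity} to convert the resulting vacuum bra, and the abelian property of $\Gamma_+$ to recombine the shifts. The only cosmetic difference is that the paper phrases it as first establishing the $n=1$ case and then extending to general $n$ by induction or Wick's theorem, whereas you invoke Lemma~\ref{main_fermionic_identity} directly for general $n$; since that lemma is already stated and proved for arbitrary $n$, your route is slightly more direct but substantively identical.
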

       \begin{proof}  (See Appendix \ref{K_g_2n_tau}.) It is first proved for $n=1$, by applying Lemma \ref{main_fermionic_identity} for $n=1$
       to deduce
       \bea
       \langle 0 | \hat{\gamma}_+({\bf t} ) \psi(z)\psi^*(w)\hat{g} | 0 \rangle 
       &=& e^{\xi({\bf t}, z) - \xi({\bf t}, w)} \langle 0 |\psi(z) \psi^*(w) \hat{\gamma}_+({\bf t}) \hat{g}|0 \rangle 
      \cr
       &=&  {e^{\xi({\bf t}, z) - \xi({\bf t}, w)} \langle 0 | \hat{\gamma}_+({\bf t} - [z^{-1}] + [w^{-1}] )\hat{g} | 0 \rangle
       \over z-w},
       \eea
       where we have used eq.~(\ref{gamma+psi_conj}) and then extended  to arbitrary $n$,  either by induction or, equivalently, by the fermionic Wick theorem.
  \end{proof}
  
 Using Wick's theorem,  we have the following determinantal expression in terms of  single pair correlators
and  in terms of the $\tau$-function.
  \begin{corollary}
  \label{K2n_det}
  \be 
  \tilde{K}^g_{2n} ({\bf z}, {\bf w},{\bf t})
   = \det_{i,j=1}^n \left({ e^{\xi({\bf t,} z_i)-\xi({\bf t}, w_j)}   \tau_g({\bf t} -[{z_i}^{-1}] + [{w_j}^{-1}]) \over z_i - w_j} \right).
  \ee\label{correl_K_det}
    \end{corollary}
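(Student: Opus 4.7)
The plan is to combine two results already at hand: the Wick determinantal formula of Lemma \ref{K_2n_time_lemma} and the explicit $\tau$-function expression for $\tilde K_2^g$ that comes out of Corollary \ref{corollary_2n} in the $n=1$ case. These ingredients together reduce the statement to a direct substitution.

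First I would apply Lemma \ref{K_2n_time_lemma}, which (by the fermionic Wick theorem applied inside the VEV (\ref{K_2n_time}), after commuting $\hat{\gamma}_+({\bf t})$ through the Fermi bilinears via (\ref{gamma+psi_conj})) writes
\begin{equation*}
\tilde K_{2n}^g({\bf z},{\bf w},{\bf t}) \;=\; \det\bigl(\tilde K_2^g(z_i,w_j,{\bf t})\bigr)_{1\le i,j\le n}.
\end{equation*}
This reduces the problem to identifying each entry of the matrix on the right-hand side in terms of $\tau_g$.

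Second, I would specialize Corollary \ref{corollary_2n} to $n=1$, in which case there are no cross-cancellations and one obtains
\begin{equation*}
\tilde K_2^g(z,w,{\bf t}) \;=\; \frac{e^{\xi({\bf t},z)-\xi({\bf t},w)}\,\tau_g\bigl({\bf t}-[z^{-1}]+[w^{-1}]\bigr)}{z-w}.
\end{equation*}
(Equivalently, one may cite Proposition \ref{K_2_tau_prop} and conjugate $\hat{\gamma}_+({\bf t})$ past $\psi(z)\psi^*(w)$ using (\ref{gamma+psi_conj}), absorbing the resulting exponential into the shift $[w^{-1}]-[z^{-1}]$ of the time argument.)

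Substituting this entry-by-entry into the determinant from the first step yields the claimed formula verbatim. No serious obstacle arises; the only thing to watch is the bookkeeping of the exponential prefactors, which must be placed consistently inside each matrix entry (rather than pulled out as a global product, as in Corollary \ref{corollary_2n}). In fact, comparing the resulting expression with that of Corollary \ref{corollary_2n} yields, as a free bonus, a nontrivial multilinear identity for $\tau_g$ of Fay-type, but this need not be invoked to complete the proof.
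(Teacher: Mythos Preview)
Your proposal is correct and follows exactly the paper's approach: the paper derives Corollary~\ref{K2n_det} by invoking Wick's theorem (i.e., Lemma~\ref{K_2n_time_lemma}) to write $\tilde K^g_{2n}$ as a determinant of single-pair correlators, and then substituting the $n=1$ expression for $\tilde K^g_2$ in terms of $\tau_g$. Your remark about the resulting Fay-type identity when comparing with Corollary~\ref{corollary_2n} is a nice observation, though not needed for the proof.
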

From the fermionic representation, it is also easy to see that the 2-point function at
 ${\bf t}= {\bf 0}$
may be expressed as a series
\begin{proposition}
\label{K_w_j_series} 
\be
\tilde{K}^g_2(z, w) = \begin{cases}  \sum_{j=1}^\infty w^g_j(w)   w^{g*}_{-j+1}(z) \quad \text{if} \ |z| > |w|, \\
- \sum_{j=1}^\infty w^g_{-j+1}(w)   w^{g*}_{j}(z) \quad \text{if} \ |z| <|w|.
\end{cases}
\label{pair_correl_expansion_w_g_k}
\ee
\end{proposition}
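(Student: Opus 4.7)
The plan is to insert appropriate resolutions of the identity into the fermionic vacuum expectation value (\ref{K2_VEV_rep}) and identify each resulting matrix element with one of the fermionic expressions for the adapted bases given in Proposition \ref{wkg_fermionic_prop}. I would treat the two orderings separately.

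For $|z|>|w|$, I would start from $\tilde{K}^g_2(z,w) = \langle 0|\psi(z)\psi^*(w)\hat{g}|0\rangle$, insert $\hat{g}\hat{g}^{-1}$ between $\psi(z)$ and $\psi^*(w)$, and then insert the charge-$(-1)$ projector $\Pi_{-1} = \sum_\lambda |\lambda;-1\rangle\langle\lambda;-1|$ between the two factors of $\hat{g}^{\pm 1}$:
\begin{equation*}
\tilde{K}^g_2(z,w) = \sum_\lambda \langle 0|\psi(z)\hat{g}|\lambda;-1\rangle\,\langle\lambda;-1|\hat{g}^{-1}\psi^*(w)\hat{g}|0\rangle.
\end{equation*}
Using $\langle 0|\hat{g} = \langle 0|$, the first factor equals $\langle 0|\hat{g}^{-1}\psi(z)\hat{g}|\lambda;-1\rangle$. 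Expanding $\psi(z) = \sum_i \psi_i z^i$, a given matrix element $\langle 0|\psi_i|\lambda;-1\rangle$ is nonzero only if $\psi_i|\lambda;-1\rangle \propto |0\rangle$, which forces $|\lambda;-1\rangle = (-1)^{j-1}\psi^*_{-j}|0\rangle$ for some $j\geq 1$. Using the Frobenius representation (\ref{basis_states}), these states are precisely the vertical partitions $\lambda = (1^{j-1})$, and the associated sign is $(-1)^{j-1}$.

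Next, I would apply Proposition \ref{wkg_fermionic_prop} in its two branches to identify
\begin{equation*}
\langle 0|\hat{g}^{-1}\psi(z)\hat{g}\,\psi^*_{-j}|0\rangle = w^{g*}_{-j+1}(z),\qquad \langle 0|\psi_{-j}\hat{g}^{-1}\psi^*(w)\hat{g}|0\rangle = w^g_j(w),
\end{equation*}
where the first uses the $k\leq 0$ case with $k-1 = -j$ and the second uses the $k\geq 1$ case with $-k = -j$. Inserting these into the sum, the two signs $(-1)^{j-1}$ coming from the two matrix elements cancel, yielding the first formula. For $|z|<|w|$, the definition (\ref{K2_VEV_rep}) begins with $-\langle 0|\psi^*(w)\psi(z)\hat{g}|0\rangle$; the analogous insertion of $\Pi_{+1}$ singles out the horizontal partitions $\lambda = (j-1)$, $j\geq 1$, for which $|\lambda;+1\rangle = \psi_{j-1}|0\rangle$. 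Applying Proposition \ref{wkg_fermionic_prop} (this time with $k=1-j\leq 0$ for $w^g$ and $k=j\geq 1$ for $w^{g*}$) gives the second formula, the overall minus sign being inherited from the reordering of $\psi(z)$ and $\psi^*(w)$.

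The main obstacle is the sign bookkeeping in the Clifford algebra: one must carefully translate between the Young diagram labelling $|\lambda;\pm 1\rangle$ of the intermediate states and their single-mode expressions $\pm\psi^*_{-j}|0\rangle$ or $\pm\psi_{j-1}|0\rangle$, via the Frobenius formula (\ref{basis_states}), and verify that the two signs entering each term of the sum always cancel. This is routine but must be done with care; the truncation of the sum to hook partitions with only a row or column is the key combinatorial input, and it follows from the elementary identity that $\psi_i|\phi\rangle \propto |0\rangle$ uniquely determines $|\phi\rangle = \alpha\psi^*_i|0\rangle$.
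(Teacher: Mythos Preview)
Your proposal is correct and follows essentially the same fermionic approach as the paper: insert $\hat{g}\hat{g}^{-1}$ and a resolution of the identity in the charge $\mp 1$ sector, observe that only column (resp.\ row) partitions survive, and identify the two resulting matrix elements with the fermionic VEV formulae of Proposition~\ref{wkg_fermionic_prop}. The paper's sign bookkeeping and index shift are the same as yours, so there is nothing to add.
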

\begin{proof}
A fermionic proof is given in Appendix \ref{K2_expansion_adapted_basis}.  Alternatively, eq.~(\ref{pair_correl_expansion_w_g_k})  may be proved using the $n=1$ case of \eqref{correl_K_det} and the Schur function expansion \eqref{tau_schur_exp} of the $\tau$-function.
\end{proof}

We also have the following {\em Christoffel-Darboux} type formula.
\begin{proposition}
\label{CD_rep_prop}
\be
\tilde{K}^g_2(z,w) = { \sum_{i=0}^\infty \sum_{j=0}^\infty (\tilde{A}_{ij} w^{g}_{-j}(z) w_{-i}^{g*}(w))
- \tilde{Q}^+_{01}w^g_1(z) w^{g*}_1(w) \over z -w},
\label{pair_correl_expansion_w_g_alt}
\ee
where
\be
\tilde{A}_{ij} = \tilde{Q}^+_{j+1, -i}, 
\ee
and $\tilde{Q}^+_{ij}$ is the recursion matrix appearing in eq.~(\ref{tilde_Q_+}).
\end{proposition}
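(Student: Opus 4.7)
The plan is to start from the series expansion of Proposition \ref{K_w_j_series},
\[
\tilde{K}^g_2(z,w) \;=\; \sum_{j=1}^{\infty} w^g_j(w)\, w^{g*}_{-j+1}(z) \qquad (|z|>|w|),
\]
and multiply both sides by $z-w$. The factors $z$ and $w$ can be absorbed into the basis elements via the multiplicative recursions of Proposition \ref{zwk_recursion_prop}: $z w^{g*}_k(z)=\sum_{m\le k+1}\tilde Q^-_{k,m}w^{g*}_m(z)$ and $w w^g_k(w)=\sum_{m\le k+1}\tilde Q^+_{k,m}w^g_m(w)$. In the first of these I would then use the transposition identity $\tilde Q^-_{k,m}=\tilde Q^+_{1-m,\,1-k}$ recorded in \eqref{tilde_Q_+}--\eqref{tilde_Q_-}, so that both resulting double sums are expressed in terms of the single matrix $\tilde Q^+$.

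Next I would put both sums in a common form. Setting $a=1-m$ and $b=\ell+1$ in the sum coming from $zw^{g*}$, and $a=\ell+1$, $b=k$ in the sum coming from $w w^g$, both become double sums of the \emph{same} summand $\tilde Q^+_{a,b}\,w^{g*}_{1-a}(z)\,w^g_b(w)$, but taken over two different regions of $\mathbb Z^2$: the first over $R_1=\{(a,b):b\ge 1,\ a\ge b-1\}$, the second over $R_2=\{(a,b):a\ge 1,\ b\le a+1\}$. The proof then reduces to analyzing the symmetric difference $R_1\triangle R_2$.

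A short case check shows that $R_1\setminus R_2$ is the single lattice point $(a,b)=(0,1)$, which produces the exceptional "corner" term $\tilde Q^+_{0,1}\,w^g_1(w)\,w^{g*}_1(z)$ characteristic of a Christoffel--Darboux formula, while $R_2\setminus R_1=\{(a,b):a\ge 1,\ b\le 0\}$ is a half--quadrant. Reindexing the latter by $a=i+1$, $b=-j$ with $i,j\ge 0$ and using the definition $\tilde A_{ij}=\tilde Q^+_{j+1,-i}$ converts the remaining contribution into $\sum_{i,j\ge 0}\tilde A_{ij}\,w^g_{-i}(w)\,w^{g*}_{-j}(z)$ (up to the relabeling $(i,j)\leftrightarrow(j,i)$), so dividing by $z-w$ yields the stated Christoffel--Darboux representation.

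The main obstacle will be the careful bookkeeping of the two overlapping index regions and, in particular, verifying that the only exceptional lattice point producing a surviving "corner" contribution is $(0,1)$; because the recursion matrices $\tilde Q^\pm$ are almost but not quite lower triangular (they have one nontrivial superdiagonal), one must be careful to keep the boundary terms coming from this superdiagonal and to check that no further boundary contribution hides in the $a=b-1$ or $b=a+1$ edges. As a consistency check, one can also proceed fermionically by writing $(z-w)\tilde{K}^g_2(z,w)=\langle 0|[J_1,\psi(z)\psi^*(w)]\hat g|0\rangle$ via $z\psi(z)=[J_1,\psi(z)]$ and $z\psi^*(z)=-[J_1,\psi^*(z)]$, and evaluating the two resulting VEVs by inserting projectors onto the charge $\pm 1$ sectors as in the proof of Proposition \ref{Baker Psi_exp_prop}; the same boundary/corner structure then emerges from the fermionic Wick expansion.
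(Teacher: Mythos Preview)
Your proposal is correct and follows essentially the same route as the paper: the paper's proof also starts from the series expansion of Proposition~\ref{K_w_j_series}, multiplies by $z-w$, applies the multiplicative recursions of Proposition~\ref{zwk_recursion_prop}, and obtains the result by telescopic cancellation, with the fermionic identity $(z-w)\psi(w)\psi^*(z)=[J_1,\psi(w)\psi^*(z)]$ mentioned as an alternative derivation. Your explicit bookkeeping via the regions $R_1,R_2\subset\Zb^2$ and their symmetric difference is a cleaner way to organize exactly what the paper calls ``telescopic cancellation,'' and correctly isolates the single superdiagonal corner $(0,1)$ as the only boundary term.
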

\begin{proof} Multiply the RHS of eq.~(\ref{pair_correl_expansion_w_g_k}) by $z-w$.
Then the result follows  from using the recursion relations  (\ref{zwk_recursion}) directly, 
together with the telescopic cancellation between the two types of terms coming from the $z$ and $w$ recursion relations.
Alternatively, it may be derived  directly from the VEV representation eq.~(\ref{K_2n_time}) by using the fermionic identity
\be
(z-w) \psi(w) \psi^*(z) = [J_1, \psi(w) \psi^*(z)]
\ee
and inserting sums over complete sets of intermediate states in both terms arising from the commutator.
 \end{proof}

\subsection{Multicurrent correlator for general $\hat{g}$}
 \label{F_g}
 In the following, it will be convenient to use the inverse variables $x_i := 1/z_i$ in expressing
 the correlators. 
 \begin{definition}
 We denote the positive frequency part of the current operator as
 \be
 J_+(x) := \sum_{i=1}^\infty x^i J_i,
 \ee
 and define the multicurrent correlator $\JJ_n(x_1, \dots, x_n)$ as
 \be
 \JJ_n(x_1, \dots, x_n) := \langle 0 | \left(\prod_{i=1}^n J_+(x_i)\right) \hat{g}| 0 \rangle.
 \label{multicurrent_correl}
 \ee
 \end{definition}

 We also introduce the correlators $W_n(x_1, \dots, x_n)$, defined  as multiple derivatives of the $\tau$-function
 \be
W_n(x_1, \dots, x_n) := \left.\left(\prod_{i=1}^n \nabla(x_i)\right) \tau_g({\bf t})\right|_{{\bf t} = {\bf 0}},
\label{W_n_def}
 \ee
 where the commuting parametric family of first order gradient operators $\nabla(x_i)$  in the ${\bf t}$-variables is defined by
 \be
 \nabla(x) := \sum_{i=1}^\infty x^{i-1}{\partial \over \partial t_i}.
 \ee
 For later purposes, we also introduce a ``connected'' version of these quantities, 
 defined by replacing $\tau_g$ by $\ln (\tau_g)$
 \be
\tilde{W}_n(x_1, \dots, x_n) :=\left.\left( \prod_{i=1}^n \nabla(x_i) \right) (\ln (\tau_g({\bf t}))\right|_{{\bf t} = {\bf 0}}.
 \ee
Assuming $\tau_g({\bf t})$ to be normalized such the $\tau_g({\bf 0}) = 1$, we have, in particular,
 \bea
 \tilde{W}_1(x_1) &\& = W_1(x_1 ) \cr
  \tilde{W}_2(x_1, x_2) &\& = W_2(x_1, x_2)  - W_1(x_1) W_2(x_2) \cr
   \tilde{W}_3(x_1, x_2, x_3) &\& = W_3(x_1, x_2, x_3 )  - W_1(x_1) W_2(x_2, x_3) - W_1(x_2) W_2(x_1, x_3)
   - W_1(x_3) W_2(x_1 x_2) \cr
   &\&{\hskip 10 pt} + 2 W_1(x_1) W_1(x_2) W_1(x_3)\cr
  \vdots  {\hskip 15 pt}&\& =    {\hskip 20 pt} \vdots 
  \label{tilde_W123}
 \eea
From the fermionic formula (\ref{tau_g_VEV}) for $\tau_g({\bf t})$  it follows that $W_n$
and the multicurrent correlator $\JJ_n$ are related by
 \begin{lemma}
 \be
 \label{W_n_JJ_n}
\JJ_n(x_1, \dots, x_n)  = \left( \prod_{i=1}^n x_i \right)W_n(x_1, \dots , x_n).
\ee
 \end{lemma}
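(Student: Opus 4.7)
The plan is to match both sides termwise as formal power series in $x_1, \ldots, x_n$, by reading off the coefficient of an arbitrary monomial $\prod_{i=1}^n x_i^{k_i}$ with all $k_i \ge 1$. The sole nontrivial input is the commutativity of the positive current modes: from the Heisenberg relation $[J_i, J_{-j}] = i\,\delta_{ij}$ one sees that $[J_i, J_j] = 0$ whenever $i, j \ge 1$, since no two positive integers sum to zero.

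First I would differentiate the fermionic expression $\tau_g({\bf t}) = \langle 0 | \hat{\gamma}_+({\bf t}) \hat{g} | 0 \rangle$ with $\hat{\gamma}_+({\bf t}) = \exp\bigl(\sum_{j\ge 1} t_j J_j\bigr)$. Since all operators in the exponent mutually commute, successive partial derivatives simply bring down the corresponding generators with no reordering, giving, for any positive integers $k_1, \ldots, k_n$,
\begin{equation*}
\prod_{i=1}^n \frac{\partial}{\partial t_{k_i}}\, \tau_g({\bf t}) \;=\; \bigl\langle 0 \bigl|\, J_{k_1} J_{k_2} \cdots J_{k_n}\, \hat{\gamma}_+({\bf t})\, \hat{g}\, \bigr| 0 \bigr\rangle,
\end{equation*}
which at ${\bf t}={\bf 0}$ (where $\hat{\gamma}_+({\bf 0}) = \mathrm{Id}$) reduces to $\langle 0 | J_{k_1} \cdots J_{k_n}\, \hat{g}\, | 0 \rangle$.

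Next I would substitute the defining expansions $\nabla(x) = \sum_{\ell \ge 1} x^{\ell-1}\,\partial/\partial t_\ell$ and $J_+(x) = \sum_{\ell \ge 1} x^\ell J_\ell$ into the definitions \eqref{W_n_def} and \eqref{multicurrent_correl}, yielding
\begin{align*}
W_n(x_1, \ldots, x_n) &= \sum_{k_1, \ldots, k_n \ge 1} \Bigl(\prod_{i=1}^n x_i^{k_i - 1}\Bigr)\, \langle 0 | J_{k_1} \cdots J_{k_n}\, \hat{g}\, | 0 \rangle, \\
\JJ_n(x_1, \ldots, x_n) &= \sum_{k_1, \ldots, k_n \ge 1} \Bigl(\prod_{i=1}^n x_i^{k_i}\Bigr)\, \langle 0 | J_{k_1} \cdots J_{k_n}\, \hat{g}\, | 0 \rangle.
\end{align*}
The two sums share identical matrix-element coefficients and differ only by the global factor $\prod_{i=1}^n x_i$, which is exactly the claim.

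No step presents a genuine obstacle: the whole argument reduces to a termwise comparison of two formal power series, and the commutativity $[J_i, J_j]=0$ for positive $i,j$ — immediate from the Heisenberg algebra — is the only nontrivial ingredient. One does not need to invoke the fact that $J_k|0\rangle = 0$ for $k>0$, since the $J_{k_i}$ sit to the left of $\hat g$ and act nontrivially on $\langle 0|$ after being moved past $\hat g$.
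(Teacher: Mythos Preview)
Your argument is correct and is precisely the approach the paper has in mind: the paper's own proof is the single sentence ``This is immediate from applying (\ref{W_n_def}) to the fermionic formula (\ref{tau_g_VEV}) for $\tau_g({\bf t})$,'' and you have simply spelled out that computation termwise. The only remark worth trimming is the final sentence about moving the $J_{k_i}$ past $\hat g$, which plays no role in your argument and may confuse a reader.
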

 \begin{proof}
 This is immediate from applying (\ref{W_n_def}) to the fermionic formula (\ref{tau_g_VEV}) for $\tau_g({\bf t})$.
 \end{proof}

\section{Specialization to the case of hypergeometric $\tau$-functions}
\label{Hypersection}


\subsection{Convolution products}
An abelian group that is central to hypergeometric $\tau$-functions is obtained by extending the semigroup 
 $C:= \{C_\rho\}$ consisting of formal convolution products with functions or distributions on the circle $S^1$ admitting
 a Fourier series representation
\be
\rho(z) = \sum_{i= -\infty}^{\infty} \rho_i z^{-i-1}
\label{rho_fourier}
\ee 
 i.e.  the diagonal action on an element $w\in \HH$ with Fourier representation
\be 
w(z) := \sum_{i= -\infty}^{\infty} w_i z^{-i-1},
\ee
consisting of multiplication of their Fourier coefficients
\be
C_\rho(w)(z):= \sum_{i=-\infty}^{\infty} \rho_i w_i z^{-i-1},
\label{crho}
\ee
to include the diagonal action of commuting differential operators of the form
\be
\exp\left(\sum_i \alpha_i \DD^i\right),
\ee
where $\DD$
is the Euler operator (\ref{Eulerop}) (or simply the vector field of infinitesimal rotations
on the circle). (These may also formally be represented as a convolution with the series $\rho(z)$ in which
the coefficients in (\ref{rho_fourier}) are $\rho_i=\exp\left(\sum_j \alpha_j (-i-1)^j\right)$.)

In the $L^2(S^1)$ sense, the  convolution product may be viewed
as the integral
\be
C_\rho( w) (z) := {1\over 2\pi i} \oint_{S^1} \rho (\zeta) w\left({z\over  \zeta}\right) {d\zeta \over \zeta}
= {1\over 2\pi i} \oint_{S^1} \rho \left({z\over  \zeta}\right) w(\zeta) {d\zeta \over \zeta}.
\label{convol_rho_w}
\ee
Alternatively, in the formal Laurent series sense, the integral 
$ {1\over 2\pi i} \oint_{S^1}f(\zeta) {d\zeta \over \zeta}$ may be reinterpreted as a formal residue  at $z=0$.

  These are essential to the definition of KP or $2D$-Toda $\tau$-functions
of hypergeometric type \cite{OrSc1,OrSc2}, since they correspond to Grassmannian elements of the form 
\be
W^{(\rho, {\bf s})} := C_\rho \gamma_-({\bf s})(\HH_+) \in  \Gr_{\HH_+}(\HH)
\ee
for the zero virtual dimension component or, more generally
\be
W_N^{( \rho, {\bf s})}  = C_\rho \gamma_-({\bf s})(z^{-N}\HH_+),  \quad N\in \Zb
\ee
in the virtual dimension $N$ sector. 

In the following, the coefficients
\be
\rho_i =: e^{T_i}
\label{rho_i_T_i}
\ee
in  the distributional (or formal) series $\rho(z)$ will always be  understood as
determined  by  the weight generating function $G$ and the constants $(\beta, \gamma)$, 
as in   \cite{H2},  such that
\be
 r^{(G, \beta)}_i  = G(i \beta ) = {\rho_i \over \gamma \rho_{i-1}} .
\label{expT_ratio}
\ee
Since the $\tau$-function is only projectively defined, assuming $\rho_0 \neq 0$, we may choose it  as $\rho_0=1$
and express the coefficients $\rho_j$  as finite products of the  $ \gamma G(i\beta )$'s and their inverses  \cite{H2} 
\be
\rho_j= \gamma^{j}\prod_{i=1}^j G(i\beta), 
\quad \rho_{-j} = \gamma^{-j} \prod_{i=0}^{j-1} (G(-i \beta))^{-1}, \quad j=  1, 2, \dots.
\label{rho_j_gamma_G}
\ee
Thus
\bea
T_j(\beta, \gamma) &\&:= j \ln(\gamma) + \sum_{i=1}^j \ln (G(i\beta)), \quad j=1, 2, \dots, \quad T_0 :=0 \cr
T_{-j}(\beta, \gamma) &\&:= -j \ln(\gamma) - \sum_{i=0}^{j-1} \ln (G(-i\beta)), \quad j=1, 2, \dots
\label{Tj_G_def}
\eea
Although vanishing values  $\rho_{j}=0$   for some negative coefficient $j=-n-1$ are allowed,
 the  relations {\eqref{rho_j_gamma_G} then imply that
\be
\rho_j = 0 \quad  \forall\  i \le -n-1
\ee
and hence all the coefficients  $\gamma^{|\lambda| }r_\lambda^{(G,\beta)}$ appearing in the
$\tau$-functions ({\ref{tau_G_H}), ({\ref{tau_tilde_G_H}) vanish for 
partitions of length $\ell(\lambda) >n$. 


\subsection{Adapted basis, $\tau$-function and Baker function for the hypergeometric case}
\label{hypergeometric_specialization}

Since $\gamma_-({\bf s})$ is the generating function for the complete symmetric functions,
\be
\gamma_-({\bf s})= \sum_{i=0}^\infty h_i({\bf s})z^{-i},
\ee
specializing to the case 
\be
g:= C_\rho   \gamma_-(\beta^{-1}{\bf s}),
\label{hypergeom_g}
\ee
we have
\be
g_{ij} = \rho_i h_{i-j} (\beta^{-1} {\bf s}), \quad g^{-1}_{ij} = \rho_j^{-1} h_{i-j}(-\beta^{-1}{\bf s}).
\label{hypergeom_g_matrix_els}
\ee

The corresponding adapted bases $\{w^{(G, \beta, \gamma, {\bf s})}_k(z)\}_{k\in \Zb}$ defined by applying the element  $C_\rho  \gamma_-(\beta^{-1}{\bf s})$  and its dual to the basis elements $\{e_{m} := z^{-m-1}\}_{m\in \Zb}$ are
\be
w^{(G, \beta, \gamma,  {\bf s})} _k (z) := C_{\rho}  (\gamma_-(\beta^{-1}{\bf s}) z^{k-1}) = \sum_{j=-\infty}^{k-1} h_{k-j-1}({\beta^{-1}\bf s})  \rho_{ -j-1} z^j, 
\label{wrhok}
\ee
where the coefficients $ \rho_j$ are defined  in eq.\,\eqref{rho_j_gamma_G}. The dual basis, under the Hirota bilinear pairing (\ref{hirota_bilinear}), is defined by
\be
w_k^{(G, \beta, \gamma, {\bf s})*}(z) := \iota  C^{-1}_{\rho} (\gamma_+(-\beta^{-1}{\bf s})  \iota( z^{k-1})) 
=  \sum_{j=-\infty}^{k-1} h_{k-j-1}(-\beta^{-1}{\bf s})  \rho^{-1}_{ j} z^j, 
\label{wrhosk}
\ee
where $\iota$ is the involution (\ref{involution}). 	Simplifying the notation for the adapted bases in this case to 
\be
w^{(G, \beta, \gamma,  {\bf s})} _k (z) =: w_k(z), \quad w^{(G, \beta, \gamma,  {\bf s})*} _k (z) =: w^*_k(z)
\ee	
these have the fermionic representation (cf. \eqref{wkg_fermionic})  
\bea
w_k(z) &\& :=\begin{cases} \langle 0 | \psi_{-k} \hat{\gamma}^{-1}_-({\bf s})\ \hat{C}^{-1}_\rho \psi^*(z) \hat{C}_\rho \hat{\gamma}_-(\beta^{-1}{\bf s})| 0\rangle \quad  \text{if} \ k\ge 1, \cr
\langle 0 |\psi^*(z) \hat{C}_\rho \hat{\gamma}_-(\beta^{-1}{\bf s}) \psi_{-k} | 0\rangle  \quad \text{if} \ k \le 0, 
\end{cases} 
\label{wk_rho_s}
\eea
which coincides with the basis defined in \eqref{wrhok}, 
and the dual basis, defined by \eqref{wkg*_fermionic},  becomes
\bea
w_k^{*}(z)  &\&:=  \begin{cases} \langle 0 | \psi^*_{k-1} \hat{\gamma}_-({-\beta^{-1}\bf s})\hat{C}^{-1}_\rho \psi(z) \hat{C}_\rho \hat{\gamma}_-(\beta^{-1}{\bf s})|0 \rangle  \quad  \text{if} \ k\ge 1 \cr
\langle 0 |\psi(z) \hat{C}_\rho \hat{\gamma}_-(\beta^{-1}{\bf s}) \psi^*_{k-1} |0 \rangle \quad  \text{if} \ k\le 0, \\ 
\end{cases}
\label{wk_rho_s*}
\eea
which coincides with  \eqref{wrhosk}.

For $k\in \Nb^+$ they define corresponding elements of the Grassmanian:
\bea
W^{(G, \beta, \gamma, {\bf s})}=\span \{w_k(z)\}_{k\in \Nb^+}\\
W^{(G, \beta, \gamma, {\bf s})\perp}=\span \{w^{*}_k(z)\}_{k\in \Nb^+}
\eea

The fermionic representation of the group element  associated to the weight generating function $\gamma G$ defined in \eqref{rho_j_gamma_G} is of the form
\be
\hat{g}:= \hat{C}_\rho  \hat{\gamma}_-(\beta^{-1}{\bf s}) 
 \label{g_C_hat_rho_gamma-}
\ee
where
\be
\hat{C}_\rho = e^{\hat{A}}
\ee
with the algebra element $\hat{A}$ is
\be
\hat{A}:=\sum_{i \in \Zb} T_i \no{\psi_i \psi^*_i}
\label{A_hat_def}
\ee

We therefore have the following fermionic representation of the  2D Toda 
$\tau$-function $ \tau^{(G, \beta, \gamma)} ({\bf t}, {\bf s})$ at $N=0$ defined by \eqref{tau_G_H} 
\be
\tau^{(G, \beta, \gamma)}({\bf t}, {\bf s}) = \langle 0 | \hat{\gamma}_+({\bf t}) \hat{C}_\rho \hat{\gamma}_-({\bf s})| 0 \rangle.
\ee
.\begin{remark}
Such 2D Toda $\tau$-functions, referred to in \cite{OrSc1, OrSc2} as {\em $\tau$-functions of hypergeometric type},
 were considered, for other purposes, in \cite{KMMM}, but without identification of these as Hurwitz generating functions.
 Fermionic representations of particular families of the Hurwitz numbers were considered in  \cite{OrSc1,OrSc2,  Ok, GH2, H1, H2,  HO2, Ta1, AMMN, ALS, NOr, OP,MSS}
\end{remark}

Viewed as a parametric family of KP $\tau$-functions, after the substitution ${\bf s} \ra \beta^{-1}{\bf s}$, the corresponding Baker function and its dual have the fermionic representation
\bea
\Psi^-_{(G, \beta, \gamma)}(z, {\bf t}, \beta^{-1}{\bf s}) &\&={ \langle 0 | \psi^*_0 \hat{\gamma}_+({\bf t}) \psi(z) \hat{C}_\rho \gamma_-(\beta^{-1}{\bf s})| 0 \rangle
\over \tau^{(G, \beta, \gamma)}({\bf t}, {\beta^{-1}\bf s}) },  \\
\label{Baker_fermionic_hypergeometric}
\Psi^+_{(G, \beta, \gamma)}(z, {\bf t}, \beta^{-1}{\bf s}) &\&= {\langle 0 | \psi_{-1}\hat{\gamma}_+({\bf t} )\psi^*(z) 
\hat{C}_\rho \gamma_-(\beta^{-1}{\bf s}) | 0\rangle \over  \tau^{(G, \beta, \gamma)}({\bf t}, {\beta^{-1}\bf s})  }. 
\label{dual_Baker_fermionic_hypergeometric}
\eea

As particular cases of the expansions \eqref{Psi_w_k_g_exp},  \eqref{Psi*_w_k_g_exp}, the Baker function and its dual for this class of hypergeometric $\tau$-functions can be expressed relative to the dual bases $\{ w_k^{*}(z)\}$,
$\{ w_k(z)\}$ as
\bea
\Psi^-_{(G, \beta, \gamma)}(z, {\bf t},\beta^{-1} {\bf s})  &=&\sum_{k=1}^\infty w_k^{*}(z) \alpha_k^{(G, \beta, \gamma)}({\bf t},{\bf s}), \\
\Psi^+_{(G, \beta, \gamma)}(z, {\bf t},\beta^{-1} {\bf s}) &=&  \sum_{k=1}^\infty w_k(z) \beta_k^{(G, \beta, \gamma)}({\bf t},{\bf s}),
\eea
where
\bea
\alpha_k^{(G, \beta, \gamma)}({\bf t},{\bf s})&:=& {\langle 0 | \psi^*_0 \hat{\gamma}_+({\bf t}) \hat{C}_\rho \hat{\gamma}_-(\beta^{-1}{\bf s}) \psi_{k-1}| 0 \rangle  \over \tau_g({\bf t},\beta^{-1}{\bf s})}, \\
\beta_k^{(G, \beta, \gamma)}({\bf t},{\bf s})&:=& 
{ \langle 0 | \psi_{-1} \hat{\gamma}_+({\bf t}) \hat{C}_\rho\hat{\gamma}_-(\beta^{-1}{\bf s}) \psi^* _{-k}| 0\rangle   \over \tau_g({\bf t},\beta^{-1}{\bf s})}.
\eea


\subsection{An alternative realization of convolution actions }
\label{alternative_convolution}

It is easy to see that for (\ref{rho_j_gamma_G}) the convolution operator can formally be equivalently realized
 by the exponential of a differential operator, namely
\be
C_\rho= e^{T(-{\mathcal D}-1)}
\ee
where the $T(x)$ is viewed as a formal Taylor series satisfying
\be
\label{TGrel}
e^{T(x)-T(x-1)}=\gamma G(\beta x),
\ee
and $T(0)=0$ with
\be
T(i)=T_i \quad \text{for}\ i\in \Zb. 
\label{T_integer}
\ee 

This series, though nonunique, may be given an explicit expression as
\be\label{Tfuncd}
T(x)=\sum_{k=0}^\infty (-1)^{k+1} \beta^k A_k\, p_k(x),
\ee
where 
\bea
A_0=-\log \gamma,\\
A_k=\frac{1}{k} \sum_{j=1}^\infty c_j^k, \,\,\,\,k>0,
\eea
the polynomials $\{p_k(x)\in x\,\mathbb C[x]\}$ are determined by 
\be
p_k(x)-p_k(x-1)=x^k
\ee
and are closely related to the Bernoulli polynomials. In particular, 
\bea
p_0(x)&=&x,\\
p_1(x)&=&\frac{1}{2}x \left( x+1 \right),\\
p_2(x)&=&\frac{1}{6}x \left( x+1 \right)  \left( 2\,x+1 \right),\\
p_3(x)&=&\frac{1}{4}{x}^{2} \left( x+1 \right) ^{2},\\
p_4(x)&=&{\frac {1}{30}}\,x \left( x+1 \right)  \left( 2\,x+1 \right)  \left( 
3\,{x}^{2}+3\,x-1 \right), \\
&\vdots  &
\eea
Their generating function is 
\be
\frac{1}{1-e^{-a}}(e^{ax}-1). =\sum_{k=0}^\infty \frac{p_k(x)a^k}{k!}.
\ee
\begin{remark}
There are, of course, an infinity of other series $T(x)$ satisfying eqs.~(\ref{TGrel}) and (\ref{T_integer}),
obtained, e.g. by adding any periodic function $P(x)$ of unit period satisfying $P(0)=0$.
\end{remark}

The fermionic operator  $\hat{A}$ appearing in eqs.~(\ref{A_hat_def}), may equivalently be expressed as
\be
\hat{A} = \mbox{res}_{z=0}\left(\no{\psi(z) T\left(-\mathcal{D}-1\right) \psistar(z)}\right),
\label{Ahat_T_def}
\ee
which satisfies the commutation relations
\bea\label{com}
\left[\hat{A},\psi(z)\right]&=& T\left({\mathcal D}\right) \psi(z),\\
\left[\hat{A},\psistar(z)\right]&=&-T\left(-{\mathcal D}-1\right) \psistar(z).
\eea

The adapted basis elements can equivalently be expressed as 
\bea
w _k (z)= e^{T(-{\mathcal D}-1)} \gamma_-(\beta^{-1}{\bf s}) z^{k-1},\\
w _k^* (z)= e^{-T({\mathcal D})} \gamma_-(-\beta^{-1}{\bf s}) z^{k-1}.
\label{Dif_op_w}
\eea

The linearly independent elements $\{w_{N+ k }\}_{k\in \Nb^+}$  span an infinite nested sequence
of subspaces 
\be
W_N^{( G, \beta, \gamma, {\bf s})} := \span\{w_{N+ k}\}_{k\in \Nb^+}
\ee 
of virtual dimension $N$ 
\be
\cdots \ss W_{N-1}^{(G, \beta, \gamma, {\bf s})} \ss W_N^{( G, \beta, \gamma, {\bf s})} \ss W_{N+1}^{(G, \beta, \gamma, {\bf s})} \ss \cdots.
\ee
This defines an infinite complete  flag which, for fixed parameters ${\bf s}$ and convolution
elements $\rho$,  determines a chain of $\tau$-functions of the mKP hierarchy,  defined as  
determinants of the corresponding orthogonal projection maps
 \bea
\tau^{(G,\beta, \gamma)} (N, {\bf t}, \beta^{-1} {\bf s}) &\&:= \tau_{W_N^{( G, \beta,\gamma,{\bf s})}} (N, {\bf t})   \\
&\& \cr
&\&=    \sum_{\lambda} \gamma^{|\lambda|} r^{(G, \beta)}_{(\lambda, N)} s_\lambda ({\bf t}) s_\lambda (\beta^{-1} {\bf s}), 
 \label{tau_G_H_N}
\eea
where the content product coefficient  $r^{(G, \beta)}_{\lambda}$, in eq.~\eqref{tau_G_H}  is
 replaced by the shifted version
 \be
r_{(\lambda, N)}^{(G, \beta)} \deq   \prod_{(i,j)\in \lambda} G(\beta(N+ j-i)).
\label{r_lambda_G_N} 
\ee
Alternatively, viewing the parameters ${\bf s} = (s_1, s_2 \dots )$ as a further infinite set of flow
variables, this defines a hypergeometric  $\tau$-function of the  $2D$ Toda hierarchy. Viewing all parameters  $\{N,  {\bf t}, {\bf s}\}$ on the same footing, as discrete or continuous flow parameters, we obtain a lattice of $\tau$-functions of the $2D$-Toda hierarchy.


\subsection{Recursion relations in the hypergeometric case}
\label{recursion_relations}

We also introduce an alternative notation for these bases which is used in refs.~\cite{ACEH1, ACEH2}
\be
\Psi^+_k(x) := \gamma w_{1-k}(z = 1/x), \quad \Psi^-_k(x) := w^*_{1-k}(z=1/x), 
\label{Psi_pm_k_ser}
\ee
where
\be
x = {1\over z}.
\ee

The basis elements $\{\Psi^\pm_k(x)\}$ defined in (\ref{Psi_pm_k_ser})  are viewed as elements of the subspaces 
 \be
 W^\pm_{(G, \beta, \gamma, {\bf s})} := \span\{ \Psi^\pm_{-k} (x)\}_{k\in \Nb} \ss
\mathbb K[x, x^{-1}, {\bf s},\beta,\beta^{-1}]((\gamma))
 \ee
 in the formal series sense. They may be understood as elements of formal power series 
 analogs of the Sato-Segal-Wilson Grassmannian, consisting of subspaces of the Hilbert space spanned
 by powers $\{z^k\}_{k\in \Zb}$ of  the spectral parameter $z = 1/x$, commensurable with the subspace 
 spanned by the monomials  $\{z^k\}_{k=0, 1, \dots}$.

 The Baker function $\Psi^-_{(G, \beta, \gamma)}(z= 1/x, {\bf t }, \beta^{-1} {\bf s})$
 and its dual $\Psi^+_{(G, \beta, \gamma)}(z= 1/x, {\bf t }, \beta^{-1} {\bf s})$,
are given by the Sato formula
 \be
 \Psi_{(G, \beta, \gamma)}^\mp(z, {\bf t}, {\bf s}) = e^{\pm \sum_{i=1}^\infty t_i z^i }{\tau^{(G, \beta, \gamma)}({\bf t} \mp [x] , {\bf s}) \over  \tau^{(G, \beta, \gamma)}({\bf t}, {\bf s}) }
 = e^{\pm \sum_{i=1}^\infty t_i z^i } \left(1 + \OO({1/ z})\right),
 \ee
 These may be viewed as  elements of the subspaces $W^\pm_{(G, \beta, \gamma, {\bf s})}$
 for all values of the KP flow parameters ${\bf t}$.

Specializing eqs.~(\ref{zwk_recursion}, \ref{zwk_recursion_dual}) to this case, the basis elements 
$\{\Psi^+_i(x), \Psi^-_i(x)\}$ satisfy the recursion relations
\begin{proposition}
\label{Psi_recurs_prop}
\bea
{1\over \gamma x}\Psi^+_i &\&= \sum_{j=i-1}^\infty Q^+_{ij} \Psi^+_j,  
\label{mult_rr_Psi+}\\
{1\over \gamma x}\Psi^-_i &\&= \sum_{j=i-1}^\infty Q^-_{ij} \Psi^-_j ,
\label{mult_rr_Psi-}
\eea
where the (nearly) upper triangular matrices  $Q^\pm$ are the negatives of the transposes of $\tilde{Q}^\mp$
\be
Q^+_{kj} = \gamma^{-1}\tilde{Q}^-_{jk}, \quad Q^-_{kj} =  \gamma^{-1} \tilde{Q}^+_{jk}.
\label{mult_rec_matrices_general}
\ee
In this case we can express them explicitly as
\bea
Q^+_{ij} &\&:=  \sum_{k=i-1}^{j} r^{G}_k(\beta) h_{k-i+1}(\beta^{-1}{\bf s})  h_{j-k}(-\beta^{-1}{\bf s}) , \quad j\ge i-1\\
Q^-_{ij} &\& :=  \sum_{k=i-1}^{j}  r^{G}_{-k}(\beta) h_{j-k}(\beta^{-1}{\bf s}) h_{k -i+1}(-\beta^{-1}{\bf s}),  \quad j\ge i-1.
\label{Q_pm_hypergeometric}
\eea
\end{proposition}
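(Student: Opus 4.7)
The plan is to specialize the general multiplicative recursion of Proposition \ref{zwk_recursion_prop} to the hypergeometric group element $g = \hat{C}_\rho \hat{\gamma}_-(\beta^{-1}{\bf s})$, whose matrix elements are given by (\ref{hypergeom_g_matrix_els}), and then to translate the result from the $\{w_k, w_k^*\}$ bases in the spectral variable $z$ to the $\{\Psi^\pm_k\}$ bases in $x = 1/z$ via the definitions (\ref{Psi_pm_k_ser}). Since $\Psi^+_i(x) = \gamma\, w_{1-i}(1/x)$ while $\Psi^-_i(x) = w^*_{1-i}(1/x)$, multiplication by $1/(\gamma x)$ converts the left side of each recursion $z w_k(z) = \sum_{j\le k+1}\tilde{Q}^\pm_{kj} w_j(z)$ (and its dual) to $\Psi^\pm_i$, re-expands the right side in $\{\Psi^\pm_{j'}\}$ with $j' = 1-j$, and turns the upper cutoff $j \le k+1$ into the lower cutoff $j' \ge i-1$ after the substitution $k = 1-i$. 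Using the transposition identity $\tilde{Q}^\pm_{kj} = \tilde{Q}^\mp_{1-j, 1-k}$ already recorded in (\ref{tilde_Q_+})--(\ref{tilde_Q_-}), the resulting matrices become $Q^+_{ij} = \gamma^{-1} \tilde{Q}^-_{ji}$ and $Q^-_{ij} = \gamma^{-1} \tilde{Q}^+_{ji}$, which is (\ref{mult_rec_matrices_general}).

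For the explicit expressions (\ref{Q_pm_hypergeometric}), I would substitute the hypergeometric matrix elements $g_{ab} = \rho_a h_{a-b}(\beta^{-1}{\bf s})$ and $g^{-1}_{ab} = \rho_b^{-1} h_{a-b}(-\beta^{-1}{\bf s})$ into the defining double sums (\ref{tilde_Q_+}) and (\ref{tilde_Q_-}). Each summand produces a factor $\rho_{i+1}/\rho_i$, which by (\ref{expT_ratio}) equals $\gamma\, r^G_{i+1}(\beta)$; the factor of $\gamma$ cancels the overall $\gamma^{-1}$ from the step above. A shift of the summation variable $\ell = i+1$ places the content coefficient on the summation label, and the two remaining complete symmetric functions naturally group into the forms $h_{k-i+1}$ and $h_{j-k}$ asserted in (\ref{Q_pm_hypergeometric}). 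The appearance of $r^G_{-k}(\beta)$ (rather than $r^G_k(\beta)$) in the formula for $Q^-$ comes from an additional reflection $k \mapsto -k$ of the summation variable needed to recast the range in ascending order after the index inversion $(i,j) \mapsto (1-i, 1-j)$; this same reflection also explains the interchange of $h(\beta^{-1}{\bf s})$ and $h(-\beta^{-1}{\bf s})$ between the two formulas.

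The computation is essentially mechanical, and the main obstacle is bookkeeping: keeping consistent track of three independent sources of index reflection (the relabeling $k \leftrightarrow 1-i$ from (\ref{Psi_pm_k_ser}), the transposition identity between $\tilde{Q}^+$ and $\tilde{Q}^-$, and the sign flip on the summation variable in the $Q^-$ case), together with the single asymmetric factor of $\gamma$ in the definition of $\Psi^+$ relative to $\Psi^-$. An alternative derivation that would bypass some of this is the fermionic one: starting from the VEV representations (\ref{wk_rho_s})--(\ref{wk_rho_s*}) and the identities $z\psi(z) = [J_1, \psi(z)]$, $z\psi^*(z) = -[J_1, \psi^*(z)]$, one would compute the matrix elements from the conjugate $\hat{C}_\rho^{-1} J_1 \hat{C}_\rho$, whose action on the creation/annihilation operators produces the ratios $\rho_{i+1}/\rho_i$ automatically as eigenvalues of the diagonal operator $\hat{A}$ in (\ref{A_hat_def}).
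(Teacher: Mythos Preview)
Your proposal is correct and follows essentially the same approach as the paper, which simply states the proposition as a specialization of eqs.~(\ref{zwk_recursion})--(\ref{zwk_recursion_dual}) to the hypergeometric matrix elements (\ref{hypergeom_g_matrix_els}) combined with the change of basis (\ref{Psi_pm_k_ser}), without supplying further details. Your tracking of the index reflections, the factor of $\gamma$, and the ratio $\rho_{i+1}/\rho_i = \gamma\, r^G_{i+1}(\beta)$ via (\ref{expT_ratio}) is exactly what is needed to fill in the computation.
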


Applying the Euler operator $\DD$ to the basis elements $\{w_k, w_k^{*}\}$, we obtain the
following differential linear relations
\begin{proposition}
\label{wk_dif_rrs}
\bea
\DD w_k(z) &\&= \sum \tilde{P}^+_{kj} w_j(z)  \\
\DD w_k^{*}(z) &\&= \sum \tilde{P}^-_{kj} w_j^{*}(z),
\label{D_recursions_w_k}
\eea
where the elements of the lower triangular matrices $\tilde{P}^\pm$ are
\be
\tilde{P}^\pm_{ij} = 
\begin{cases} (i-1)\delta_{ij} \mp \beta^{-1}(i-j) s_{i-j}, \quad i\ge j \cr
    0, \quad i<j.
\end{cases}
\label{tilde_Pij}
\ee

\end{proposition}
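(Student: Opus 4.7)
The plan is to exploit the differential-operator realization \eqref{Dif_op_w},
\[
w_k(z) = e^{T(-\mathcal{D}-1)}\,\gamma_-(\beta^{-1}{\bf s})\, z^{k-1}, \qquad
w_k^*(z) = e^{-T(\mathcal{D})}\,\gamma_-(-\beta^{-1}{\bf s})\, z^{k-1}.
\]
The crucial feature of this representation is that both exponential prefactors are scalar formal series in the single variable $\mathcal{D}$ and therefore commute with $\mathcal{D}$. The proof then reduces to commuting $\mathcal{D}$ past the multiplication operator $\gamma_-(\pm\beta^{-1}{\bf s})$ acting on the monomial $z^{k-1}$.

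Since $\gamma_-(\beta^{-1}{\bf s})$ is multiplication by $\exp\bigl(\sum_{l\ge 1}\beta^{-1}s_l z^{-l}\bigr)$ and $\mathcal{D}$ is a derivation with $\mathcal{D}(z^{-l}) = -l z^{-l}$, a one-line calculation gives
\[
[\mathcal{D},\,\gamma_-(\beta^{-1}{\bf s})] \;=\; -\,\gamma_-(\beta^{-1}{\bf s})\sum_{l=1}^\infty l\beta^{-1} s_l\, z^{-l}.
\]
Applied to $z^{k-1}$, an eigenvector of $\mathcal{D}$ with eigenvalue $k-1$, this yields
\[
\mathcal{D}\,\gamma_-(\beta^{-1}{\bf s})\, z^{k-1} \;=\; (k-1)\,\gamma_-(\beta^{-1}{\bf s})\,z^{k-1} - \sum_{l=1}^\infty l\beta^{-1}s_l\,\gamma_-(\beta^{-1}{\bf s})\,z^{k-1-l}.
\]
Hitting both sides with $e^{T(-\mathcal{D}-1)}$ and recognizing each surviving term as $w_{k-l}(z)$ via \eqref{Dif_op_w}, I obtain
\[
\mathcal{D}\,w_k(z) \;=\; (k-1)\,w_k(z) \;-\; \sum_{l=1}^\infty l\beta^{-1}s_l\,w_{k-l}(z),
\]
which is \eqref{zwk_drecursion} with $\tilde{P}^+_{ij} = (i-1)\delta_{ij} - \beta^{-1}(i-j)s_{i-j}$ for $i\ge j$ and $\tilde{P}^+_{ij} = 0$ for $i<j$.

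The dual statement follows by running the same argument with $\beta^{-1}{\bf s}$ replaced by $-\beta^{-1}{\bf s}$; the prefactor $e^{-T(\mathcal{D})}$ still commutes with $\mathcal{D}$, and the sign flip in the series propagates unchanged to the off-diagonal entries, giving $\tilde{P}^-_{ij} = (i-1)\delta_{ij} + \beta^{-1}(i-j)s_{i-j}$. No substantive obstacle arises: the whole argument reduces to a single commutator of $\mathcal{D}$ with a multiplication operator. As a consistency check one can instead differentiate the Fourier expansion \eqref{wrhok} termwise, split $j = (k-1)-(k-1-j)$, and use Newton's identity $n h_n(\beta^{-1}{\bf s}) = \sum_{l=1}^n l\beta^{-1}s_l\,h_{n-l}(\beta^{-1}{\bf s})$ to rearrange the result back into the $w_{k-l}$ basis; this alternative derivation reproduces the same matrices and makes manifest that the content-product constants $\rho_j$, being independent of $z$, play no role in $\tilde{P}^\pm$.
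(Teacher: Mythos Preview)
Your proof is correct and follows essentially the same approach as the paper's: both exploit the differential-operator realization \eqref{Dif_op_w}, commute $\mathcal{D}$ past the $e^{T(\cdot)}$ prefactor (which is a function of $\mathcal{D}$), and compute $\mathcal{D}\,\gamma_-(\pm\beta^{-1}{\bf s})\,z^{k-1}$ directly. The paper's version is slightly more compact, writing the commutator as $\mathcal{D}\gamma_-(\beta^{-1}{\bf s})z^{k-1} = (k-1-\beta^{-1}S(z^{-1}))\gamma_-(\beta^{-1}{\bf s})z^{k-1}$ with $S(z)=\sum_l l s_l z^l$, but the content is the same.
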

\begin{proof}
To see this, we may either use the general formulae (\ref{tilde_P_pm}) of Section \ref{deriv_rec_rels},
specialized to the case (\ref{hypergeom_g_matrix_els}) or, more simply, proceed directly. Namely, 
from (\ref{Dif_op_w}) we have
\bea
\DD w_k(z)&=&e^{T(-{\mathcal D}-1)}  \DD \gamma_-(\beta^{-1}{\bf s}) z^{k-1}
\label{DDwkT-rels}\\
&=&e^{T(-{\mathcal D}-1)} (k-1-\beta^{-1} S(z^{-1}))  \gamma_-(\beta^{-1}{\bf s}) z^{k-1}
\cr
&=&(k-1) w_k(z)-\beta^{-1} \sum_{j=1}^\infty j s_j w_{k-j}(z),
\label{wk_diff_eq}
\eea
where
\be
S(z):=\sum_{k=1}^\infty k s_k z^{k}
\label{S_def}
\ee
and similarly,
\be
\DD w_k^*(z)=(k-1) w^*_k(z)+\beta^{-1} \sum_{j=1}^\infty j s_j w^*_{k-j}(z).
\label{wk*_diff_eq}
\ee
This linear action is represented by the matrices $\tilde{P}^\pm$ defined in (\ref{tilde_Pij}). 
\end{proof}

In terms of the notation $\{\Psi_k^+(x)\}$, $\{\Psi_k^-(x)\}$, the differential linear relations are
\bea
\beta D \Psi_k^+(x) &\&= \sum_{j=-\infty}^{k+1} P^+_{kj} \Psi_j^+(x) \\
\beta D\Psi_k^-(x) &\&= \sum_{j=-\infty}^{k+1} P^-_{kj} \Psi_j^-(x).
\label{deriv_rr_Psi_general}
\eea
where
$D$ is the Euler operator in the $x$-variable
\be
D:= x {d \over dx} = -\DD.
\ee
and the upper triangular matrices $P^\pm$ are the negatives of the transposes of $\tilde{P}^\mp$
\bea
P^\pm_{ij} &\&= -\beta  \tilde{P}^\pm_{ji},   \cr
&\&= \begin{cases} \beta (1-i)\delta_{ij} \mp (i-j) s_{j-i}, \quad i\le j \cr
   0, \quad i>j. 
 \end{cases}
\label{deriv_rec_matrices_general} 
\eea


\subsection{The spectral curve and Kac-Schwarz operators }
\label{quantum_spectral_KS}

\subsubsection{The quantum and classical spectral curve}
\label{quantum_spectral}

Now define the differential operator $R$ acting either on $\HH$, or formally, 
on $\Cb[[z, z^{-1}]]$, as:
\be 
R  := \frac{\gamma}{z}  G(-\beta \DD),
\ee
 which acts on the monomial $z^i$:
\be
R (z^i) = \frac{\gamma}{z} G(-i\beta) z^i = {\rho_{-i} \over \rho_{-i-1}  }z^{i-1}.
\ee
Applying $ R$ to the series for $\rho(z)$, we see this is an eigenvector with eigenvalue $1$ 
\be
R\, \rho (z)  =  \rho(z).
\label{Q_G_def}
\ee
Applying it to $w_k$ gives
\be
R\,  w_k= w_{k-1}
\label{Rwk}
\ee
or iterating,
\be
(R)^j w_k= w_{k -j}.
\label{Rwkj}
\ee

Similarly, defining
\be
R^*:= \frac{\gamma}{z}  G(\beta \DD),
\ee
we have
\be
R^*\,  w^*_k= w^*_{k-1}
\label{Rwk*}
\ee
or iterating,
\be
(R^*)^j w^*_k= w^*_{k -j}.
\label{Rwkj*}
\ee
It is clear that the operators $R$ and $R^*$ are invertible.

Equivalently, in terms of the $x =1/z$ variables, we have the operators
\be
R_+ := \gamma x G(\beta D)  = R, \quad R_- = \gamma x G(-\beta D) = R^*
\ee
and eqs.(\ref{Rwkj}), (\ref{Rwkj*}) are equivalent to
\bea
(R_+)^j \Psi^+_k&\&= \Psi^+_{k+j} 
\label{R+Psi_k}\\
(R_-)^j \Psi^-_k&\&= \Psi^-_{k+j}
\label{R-Psi_k} 
\eea

Using (\ref{Rwkj}) and (\ref{Rwkj*}), we may  re-express eqs.~(\ref{wk_diff_eq}), (\ref{wk*_diff_eq}) more compactly as
\begin{proposition}
\label{wk_diff_eqs}
\bea
\left(\beta \DD +S(R)\right) w_k = (k-1) \beta w_k, 
\label{wk_spectral_eq}\\
\left(\beta \DD - S(R^*)\right) w^*_k = (k-1) \beta w^*_k.
\label{wk*_spectral_eq}
\eea
or, equivalently,
\bea
\left[\beta D - S(R_+)\right] \Psi_k^+(x) &\&=  k \Psi^+_k(x), 
\label{Psi+_k_spectral_eq}
\\
\left[\beta D + S(R_-)\right] \Psi_k^-(x) &\&= k \Psi^-_k(x).
\label{Psi-_k_spectral_eq}
\eea
\end{proposition}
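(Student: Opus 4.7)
The proof is essentially a repackaging of the differential relations already established in Proposition \ref{wk_dif_rrs}, using the shift identities (\ref{Rwkj}) and (\ref{Rwkj*}) to encode the sum over the $s_j$'s as a single operator acting on the basis element.

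The plan is as follows. First I would invoke equation (\ref{wk_diff_eq}) from the proof of Proposition \ref{wk_dif_rrs}, which gives
\[
\DD w_k(z) = (k-1)\, w_k(z) - \beta^{-1} \sum_{j=1}^\infty j s_j\, w_{k-j}(z).
\]
The key observation is that by (\ref{Rwkj}) we have $R^j w_k = w_{k-j}$ for every $j \geq 1$, so by the definition $S(z)=\sum_{j\geq 1} j s_j z^j$ in (\ref{S_def}), the sum over $j$ becomes
\[
\sum_{j=1}^\infty j s_j\, w_{k-j}(z) = \left(\sum_{j=1}^\infty j s_j R^j\right) w_k(z) = S(R)\, w_k(z).
\]
Substituting and multiplying by $\beta$ yields (\ref{wk_spectral_eq}). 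The proof of (\ref{wk*_spectral_eq}) is identical, starting instead from (\ref{wk*_diff_eq}) and using (\ref{Rwkj*}) in place of (\ref{Rwkj}): the only sign change comes from the plus sign in (\ref{wk*_diff_eq}), which reverses to become a minus in front of $S(R^*)$.

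For the reformulation in terms of $\Psi_k^\pm(x)$, I would change variables via $x = 1/z$, which gives $D = x\, d/dx = -\DD$, and note that under this change of variable $R = \gamma x G(\beta D) = R_+$ and $R^* = R_-$. Applying (\ref{wk_spectral_eq}) to $w_{1-k}(1/x) = \gamma^{-1}\Psi_k^+(x)$ and multiplying through by $\gamma$ then converts
\[
(\beta \DD + S(R))\, w_{1-k} = \beta(-k)\, w_{1-k}
\]
directly into (\ref{Psi+_k_spectral_eq}) (with the overall sign flip coming from $D = -\DD$). Equation (\ref{Psi-_k_spectral_eq}) follows identically from (\ref{wk*_spectral_eq}).

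There is no real obstacle here, as the whole content is already present in Proposition \ref{wk_dif_rrs} and in the shift property of the Kac--Schwarz operators $R, R^*$; the only care required is bookkeeping with signs and the reindexing $k \mapsto 1-k$ when passing to the $\Psi_k^\pm$ notation.
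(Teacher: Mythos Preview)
Your proposal is correct and follows exactly the paper's approach: the paper simply notes that ``Using (\ref{Rwkj}) and (\ref{Rwkj*}), we may re-express eqs.~(\ref{wk_diff_eq}), (\ref{wk*_diff_eq}) more compactly'' and then states the proposition, which is precisely the repackaging you describe. The only minor point worth flagging is that your change of variables actually produces $k\beta\,\Psi_k^\pm$ on the right-hand side rather than $k\,\Psi_k^\pm$ as written in (\ref{Psi+_k_spectral_eq})--(\ref{Psi-_k_spectral_eq}); this appears to be a typographical slip in the statement rather than a defect in your argument.
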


\begin{remark}[{\bf Quantum and classical spectral curves}]
Since $\Psi^-_0(x) = w^*_1(z)$, $\Psi^+_0(x) = \gamma w_1(z)$ are the ${\bf t}= {\bf 0}$ evaluation
of the Baker function and its dual, respectively, the $k=0$ case of eqs.~(\ref{Psi+_k_spectral_eq}) and (\ref{Psi-_k_spectral_eq})
\bea
\left[\beta D - S(\gamma x G(+\beta D))\right] \Psi_0^+(x) &\&= 0, \\
\left[\beta D + S(\gamma x G(-\beta D))\right] \Psi_0^-(x) &\&= 0 ,
\label{quantum_spec_curve}
\eea
is satisfied by the Baker function and its dual at ${\bf t}= {\bf 0}$.

These are the {\bf quantum spectral curve} appearing in refs.~\cite{ACEH1, ACEH2}, and its dual.
Taking the classical limit amounts to replacing $\beta D$ by $xy$, where $y$ is the classical variable canonically conjugate to
$x$. Thus, the classical spectral curve is
\be
xy = S(\gamma x G(xy)),
\ee
and the dual is the same.
\end{remark}

\subsubsection{Kac-Schwarz operators}
\label{quantum_KS}

 \begin{definition}
If for a given $g$, an operator $a$ acting on ${\bf C}[[z,z^{-1}]]$ (e.g. a differential operator in $z$) stabilizes the element  $W_N^g = g(\HH^N_+)$ of the Sato Grassmannian 
(resp. the dual Grassmannian element $W_N^{g\perp}$)
\be
a W_N^{ g} \ss W_N^{ g}
\ee
or
\be
a W_N^{ g\perp} \ss W_N^{ g\perp},
\ee
we call it a {\em Kac-Schwarz operator} (resp. {\em dual Kac-Schwarz operator}) for the corresponding KP $\tau$-function  \cite{KS}. 
\end{definition}
\begin{remark}
Particular examples of Kac-Schwarz operators for weighted Hurwitz numbers were considered in \cite{AEG, ALS}.
\end{remark}

\begin{definition}
Define the following operators:
\bea
a&:=&e^{T(-1-\DD)} \gamma_-(\beta^{-1}{\bf s}) z \gamma_-(-\beta^{-1}{\bf s}) e^{-T(-1-\DD)}= e^{T(-1-\DD)} z e^{-T(-1-\DD)}\\
&=&z\frac{1}{\gamma G(\beta(-1-\DD))}=\frac{1}{R},\\
a^*&:=&e^{-T(\DD)} \gamma_-(-\beta^{-1}{\bf s}) z \gamma_-(\beta^{-1}{\bf s}) e^{T(\DD)}=e^{-T(\DD)}  z  e^{T(\DD)}\\
&=&z\frac{1}{\gamma G(\beta(1+\DD))}=\frac{1}{R^*},
\eea
\end{definition}
From (\ref{Dif_op_w}) we see that
\bea
 a\, w_{k} &=& w_{k+1}\\
a^*\, w_{k}^*  &=&w_{k+1}^* ,
\eea
so these are Kac-Schwarz (resp. dual Kac-Schwarz) operators, since they stabilize $W_N^{( G, \beta, \gamma,{\bf s})}$ (and $W_N^{( G, \beta,\gamma, {\bf s})\perp}$) for all $N$.

Similarly, define
\begin{definition}
\bea
b&:=&e^{T(1-\DD)} \gamma_-(\beta^{-1}{\bf s}) \DD \gamma_-(-\beta^{-1}{\bf s}) e^{-T(1-\DD)}= e^{T(1-\DD)} \left(\DD-\beta^{-1}S(z^{-1})\right) e^{-T(1-\DD)}\\
&=&\DD-\beta^{-1}S(R)\\
b^*&:=&e^{-T(\DD)} \gamma_-(-\beta^{-1}{\bf s}) \DD \gamma_-(\beta^{-1}{\bf s}) e^{T(\DD)}=e^{-T(\DD)}  \left(\DD+\beta^{-1}S(z^{-1})\right)  e^{T(\DD)}\\
&=& \DD+\beta^{-1}S({R^*}).
\eea
\end{definition}
Then (\ref{wk_spectral_eq}) and (\ref{wk*_spectral_eq}) are equivalent to
\bea
b\, w_k&=&  (k-1)w_k,\\
b^*\, w_k^*&=&   (k-1)w_k^*,
\eea
so these too are Kac-Schwarz operators for $W_N^{( G, \beta,\gamma {\bf s})}$ (and $W_N^{( G, \beta,\gamma {\bf s})\perp}$) 

Moreover, for the operators
\bea
c&:=&a^{-1}\,b=R \DD-\beta^{-1}R S(R) \\
c^{*}&:=&a^{*-1}\,b^*= R^* \DD + \beta^{-1}R^* S({R^*}),
\eea
we have
\bea
c\, w_k&=&  (k-1)w_{k-1}\\
c^*\, w_k^*&=&   (k-1)w_{k-1}^*.
\eea
Thus, these also are Kac-Schwarz (and dual Kac-Schwarz) operators for $W^{(G,\beta,\gamma,{\bf s})}$ and $W^{(G,\beta,\gamma,{\bf s})\perp}$. They satisfy the canonical commutation relations
\be
[c,a]=[c^*,a^*]=1
\ee
and completely specify the point of the Grassmanian.

More generally, for any $N\in \Zb$, the operators
\bea
c_N&:=&c -\frac{N}{a}, \\
c^{*}_N&:=&c^* -\frac{N}{a^*},
\eea
give
\bea
c_N\, w_k&=&  (k-1-N)w_{k-1},\\
c_N^*\, w_k^*&=&   (k-1-N)w_{k-1}^*,
\eea
satisfy the canonical commutation relations
\be
[c_N,a]=[c_N^*,a^*]=1
\ee
and stabilize  $W_{-N}^{( G, \beta,\gamma, {\bf s})}$ (and $W_N^{( G, \beta,\gamma, {\bf s})\perp}$).

\subsection{The pair correlation function in the hypergeometric case}
\label{hypergeom_K}

Specializing  the Christoffel-Darboux expresssion \eqref{pair_correl_expansion_w_g_alt}
for the pair correlation function $\tilde{K}_2^g(z,w)$  to the case (\ref{g_C_hat_rho_gamma-}), it may be expressed as follows
\begin{proposition}
\label{CD_hypergeom_prop}
\be
\tilde{K}^{(G, \beta, \gamma, {\bf s})}_2(z, w)  = { \sum _{i=0}^\infty\sum_{j=0}^\infty \tilde{A}^{(G, \beta, {\bf s})}_{ij} w_{-j}(z) w^{*}_{-i}(w)
 -  w_1(z) w_1^{*}(w)  \over z-w}
\label{pair_correl_expansion_w_rho_s_k} 
\ee
where
\be
 \tilde{A}^{(G, \beta, {\bf s})}_{ij} := \sum_{k=-j-1}^{i+1} r^{G}_k (\beta) h_{i-k+1} (-{\beta^{-1}\bf s}) h_{j+k+1}(\beta^{-1}{\bf s}).
 \label{tilde_A_ij_hypergeom_def}
\ee
\end{proposition}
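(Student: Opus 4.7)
The plan is to derive Proposition~\ref{CD_hypergeom_prop} as a direct specialization of the general Christoffel--Darboux formula from Proposition~\ref{CD_rep_prop}, using the explicit matrix entries of $g=C_\rho\gamma_-(\beta^{-1}\mathbf{s})$ computed in \eqref{hypergeom_g_matrix_els}. That earlier proposition already identifies the coefficients appearing in the CD numerator with the entries $\tilde{A}_{ij}=\tilde{Q}^+_{j+1,-i}$ of the recursion matrix defined in \eqref{tilde_Q_+}. Hence the whole task reduces to two things: (i) evaluating $\tilde{Q}^+_{j+1,-i}$ explicitly in the hypergeometric case and showing that it matches the sum in \eqref{tilde_A_ij_hypergeom_def}, and (ii) verifying that the boundary coefficient $\tilde{Q}^+_{01}$ reduces to the constant appearing in \eqref{pair_correl_expansion_w_rho_s_k}.

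For step (i), I would substitute $g_{ij}=\rho_i h_{i-j}(\beta^{-1}\mathbf{s})$ and $g^{-1}_{ij}=\rho_j^{-1}h_{i-j}(-\beta^{-1}\mathbf{s})$ into the defining sum \eqref{tilde_Q_+}, which gives
\[
\tilde{Q}^+_{j+1,-i}=\sum_{m=-j-2}^{i}\frac{\rho_{m+1}}{\rho_m}\,h_{i-m}(-\beta^{-1}\mathbf{s})\,h_{j+m+2}(\beta^{-1}\mathbf{s}).
\]
From \eqref{expT_ratio}--\eqref{rho_j_gamma_G} the ratio $\rho_{m+1}/\rho_m$ is (up to the overall $\gamma$-factor absorbed into the convention on $\rho_0$) precisely $r^G_{m+1}(\beta)=G((m+1)\beta)$, well defined for every $m\in\mathbb Z$. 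The single index shift $k=m+1$ then transforms the summation range to $k\in\{-j-1,\dots,i+1\}$ and brings the two complete-homogeneous factors into the form $h_{i-k+1}(-\beta^{-1}\mathbf{s})\,h_{j+k+1}(\beta^{-1}\mathbf{s})$, reproducing the expression \eqref{tilde_A_ij_hypergeom_def} verbatim.

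For step (ii), the entry $\tilde{Q}^+_{01}$ collapses to the single term $g^{-1}_{-1,-1}g_{0,0}=\rho_0/\rho_{-1}$, which by the same ratio formula equals $\gamma G(0)$; since $G(0)=1$ from the product \eqref{gener_G}, this matches the boundary coefficient displayed in \eqref{pair_correl_expansion_w_rho_s_k}. The only non-routine piece of the argument is the careful bookkeeping: checking that the support conditions on $h_{i-m}$ and $h_{j+m+2}$ already restrict the sum to the stated finite range, and tracking the $\gamma$-powers hidden inside $\rho_k$ through \eqref{rho_j_gamma_G} so that the clean $r^G_k(\beta)$ survives in the final expression. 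I do not expect any structural obstacle; the content of the proposition is entirely a bookkeeping specialization of the general Christoffel--Darboux formula to the hypergeometric group element.
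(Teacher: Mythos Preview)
Your approach is exactly the paper's: Proposition~\ref{CD_hypergeom_prop} is obtained by specializing the general Christoffel--Darboux formula of Proposition~\ref{CD_rep_prop} via the matrix entries \eqref{hypergeom_g_matrix_els}, and the paper's entire ``proof'' is the one-line remark that $g_{00}\,g^{-1}_{-1,-1}=\gamma$.

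One bookkeeping point deserves care. Your computation in step~(ii) gives $\tilde{Q}^+_{01}=\gamma$, not $1$; likewise in step~(i) the ratio $\rho_{m+1}/\rho_m=\gamma\,r^G_{m+1}(\beta)$ carries an explicit $\gamma$, not ``up to'' one. Thus the direct specialization of \eqref{pair_correl_expansion_w_g_alt} yields $\gamma$ times the numerator displayed in \eqref{pair_correl_expansion_w_rho_s_k}. The paper's remark about $g_{00}\,g^{-1}_{-1,-1}=\gamma$ is precisely its acknowledgement of this common overall factor (which is then visible as the $\gamma^{-1}$ in the $\Psi^\pm$ reformulation immediately following). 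So your claim that $\tilde{Q}^+_{01}$ ``matches the boundary coefficient displayed'' is not literally correct; rather, the same $\gamma$ appears uniformly in every term of the numerator and has been factored out in the stated formula. Once you make that explicit, the argument is complete and coincides with the paper's.
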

Here we have used the fact that, for $g= C_\rho \gamma_-(\beta^{-1}{\bf s})$,$g_{00} g^{-1}_{-1, -1} = \gamma$.

Alternatively, in terms of the notation $\{\Psi^{\pm}_j\}_{j\in \Nb}$
\be
\tilde{K}^{(G, \beta, \gamma, {\bf s})}_2\left({1\over x},  {1\over x'}\right) =  {x x' \gamma^{-1}\left( \sum _{i,j=0}^\infty \tilde{A}^{(G, \beta, {\bf s})}_{ij} \Psi^+_{j+1}(x) \Psi^-_{i+1}(x') 
-  \Psi^+_0(x)\Psi_0^-(x')\right)\over x - x'}.
\ee

Equivalently, for 
\be
K(x, x') := -\gamma (x x')^{-1}\tilde{K}^{(G, \beta, \gamma, {\bf s})}_2\left({1\over x},  {1\over x'}\right),
\ee 
we have
\be
K(x, x') = {\sum_{i,j=0}^{\infty} A_{ij} \Psi^+_i(x) \Psi^- _j(x') \over x - x'},
\label{K_Psi+_Psi-denom}
\ee
where
\bea
A_{ij} &\&= - \tilde{A}^{(G, \beta, {\bf s})}_{j-1, i-1} =  - \sum_{k=-i}^{j} r^{(G, \beta)}_k h_{j-k} (-{\beta^{-1}\bf s}) h_{i+k}(\beta^{-1}{\bf s}).
\quad i,j = 1,2, \dots, \label{Acoefficients}
\\
A_{00} &\&=1, \quad A_{0j} = A_{i 0} = 0.
\nonumber
\eea

\subsection{Finiteness and generating function  for Christoffel-Darboux matrix}
\label{finite_rank}

In this section we prove that, for polynomial $G(z)$ and $S(z)$, the numerator of the  correlator (\ref{pair_correl_expansion_w_rho_s_k}), viewed as an integral
operator kernel into $W$,  is of finite rank, and that in fact the expression may be written as a finite sum.

Define the following power series in two variables $(r,t)$
\be\label{Aseriesdef}
A(r,t):=\left(r\,G\left(S(t)-\beta t\frac{d}{d t} \right)  -t  G\left(S(r) +\beta r\frac{d}{d r }\right)\right) \frac{1} {r-t}
\ee
\begin{proposition}
\label{generating_function_A}
The generating function for the Christoffel-Darboux coefficients $A_{ij}$ defined by (\ref{Acoefficients}) is given by
\be
A(r,t)=\sum_{i=0}^{\infty} \sum_{j=0}^{\infty} A_{ij}r^i t^j.
\ee
In particular, if $s_i =0, \ \forall\  i > L$ and $g_i=0, \ \forall\  i > M$, then
\be\label{Azero}
A_{ij}=0\,\, \forall\ \,\, i+j > LM.
\ee
\end{proposition}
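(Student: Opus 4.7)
The plan is to prove the identity by direct computation of both sides as formal series in $r$ and $t$, using a similarity transformation that reduces the differential operators $G(S(x) \pm \beta x \partial_x)$ to operators that act diagonally on monomials. Set $\Sigma(x) := \sum_{k\geq 1} s_k x^k$, so that $x\Sigma'(x) = S(x)$, and define $\mathcal{U}(x) := e^{\Sigma(x)/\beta} = \sum_{n \geq 0} h_n(\beta^{-1}{\bf s}) x^n$, $\mathcal{T}(x) := \mathcal{U}(x)^{-1} = \sum_{n \geq 0} h_n(-\beta^{-1}{\bf s}) x^n$. A direct check gives $\mathcal{U}(x)^{-1} (\beta x \partial_x) \mathcal{U}(x) = \beta x \partial_x + S(x)$, whence the conjugation identities
\[
G(S(x) + \beta x \partial_x) = \mathcal{U}(x)^{-1} G(\beta x \partial_x)\,\mathcal{U}(x), \qquad G(S(x) - \beta x \partial_x) = \mathcal{T}(x)^{-1} G(-\beta x \partial_x)\,\mathcal{T}(x).
\]
The key point is that $G(\pm\beta x \partial_x)$ acts diagonally on $x^a$, giving $r^G_{\pm a}(\beta) x^a$.

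Next I would expand $\frac{1}{r-t} = \sum_{k\geq 0} t^k/r^{k+1}$ and apply each operator termwise. Writing $G(S(t) - \beta t\partial_t)\, t^k$ as $\mathcal{T}(t)^{-1} G(-\beta t\partial_t)(t^k\mathcal{T}(t))$ and expanding $t^k \mathcal{T}(t)$, the diagonal action produces, after multiplication by $\mathcal{U}(t)$, an explicit double series in $t$. Multiplying by $r/r^{k+1}$ and summing over $k$, one obtains
\[
r\, G(S(t) - \beta t\partial_t) \tfrac{1}{r-t} = \sum_{i \leq 0,\, j \geq -i} N(i,j)\, r^i t^j,
\]
where $N(i,j) := \sum_{c=-j}^{i} r^G_c(\beta)\, h_{j+c}(\beta^{-1}{\bf s})\, h_{i-c}(-\beta^{-1}{\bf s})$. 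A completely parallel computation for the second term gives $t\,G(S(r)+\beta r\partial_r)\tfrac{1}{r-t} = \sum_{j \geq 1,\, i \geq -j} N(i,j)\, r^i t^j$. Subtracting the two, all terms in the common overlap region $\{i \leq 0, j \geq \max(1,-i)\}$ cancel, so the combined expression is genuinely a power series in $(r,t)$, with surviving contributions only from $(i,j)=(0,0)$ (yielding the constant $N(0,0) = r^G_0(\beta) = G(0) = 1 = A_{00}$) and from $i,j \geq 1$. A reindexing substitution $k = -c$ (together with standard identities for the complete homogeneous symmetric functions, notably $\sum_{m} h_m(\beta^{-1}{\bf s})\, h_{n-m}(-\beta^{-1}{\bf s}) = \delta_{n,0}$) then matches the coefficient at $r^i t^j$ with the stated formula for $A_{ij}$ in \eqref{Acoefficients}.

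For the finiteness claim \eqref{Azero}: when $G$ has degree $M$ and $S$ has degree $L$, each monomial $g_m(S(x) \pm \beta x \partial_x)^m$ is a differential operator in $x$ of order $\leq m \leq M$, with polynomial coefficients of degree $\leq mL$. Applied to $\frac{1}{r-t}$ and multiplied by $r$ or $t$, these yield rational functions of the form $P(r,t)/(r-t)^{M+1}$ with $P$ a polynomial in $r,t$ of total degree at most $LM + 1$ (the $+1$ from the multiplicative $r$ or $t$). Since we have already shown $A(r,t)$ is a polynomial (the pole at $r=t$ in the combined expression must cancel), its total degree is controlled by the degree of the reduced numerator: one obtains a polynomial of total degree at most $LM$, whence $A_{ij} = 0$ for $i + j > LM$.

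The main technical obstacle is the careful bookkeeping in Step 2: the two sums share the coefficient $N(i,j)$ on a large overlap region, and one must verify that the cancellation is exact so that only the claimed terms survive. One must also track the distinction between the natural summation variable $c$ arising from the operator expansion and the variable $k = -c$ appearing in the stated formula, which exchanges the roles of $h_n(+\beta^{-1}{\bf s})$ and $h_n(-\beta^{-1}{\bf s})$ together with the sign of the $G$-content index. Obtaining the sharp degree bound $LM$ in the finiteness argument (rather than a weaker $LM + 1$) may require a closer examination of the leading-order terms in $(S(x) \pm \beta x \partial_x)^M \cdot (r-t)^{-1}$ to verify that the highest-degree contributions do cancel as expected.
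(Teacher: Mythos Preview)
Your approach to the generating-function identity is essentially the same as the paper's: both proofs use the conjugation $G(S(x)\pm\beta x\partial_x)=\mathcal{U}(x)^{\mp 1}G(\pm\beta x\partial_x)\mathcal{U}(x)^{\pm 1}$ with $\mathcal{U}=e^{\xi({\bf s},x)/\beta}$, expand $(r-t)^{-1}$ as a geometric series, let $G(\pm\beta x\partial_x)$ act diagonally on monomials, and then identify coefficients after the telescoping cancellation between the two pieces. The paper sets $\beta=1$ first and writes the computation slightly more explicitly, but the mechanism is identical.

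The one place where your argument is weaker than the paper's is the finiteness claim \eqref{Azero}. Your direct degree count on $(S(x)\pm\beta x\partial_x)^m\cdot(r-t)^{-1}$ is not sharp, and you correctly flag that you may only get $LM+1$. The paper avoids this by first proving a small lemma: when $s_i=0$ for $i>L$, the function $e^{\xi({\bf s},x)}(x\partial_x)^k e^{-\xi({\bf s},x)}$ (equivalently $\sum_{n}n^k h_n(-{\bf s})h_{N-n}({\bf s})$ as a series in $x$) is a polynomial in $x$ of degree $\le kL$. Plugging this into the conjugated form of $A(r,t)$ shows directly that it is a polynomial of total degree $\le LM$, with no need to analyse cancellation of top-degree terms. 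If you want a complete proof along your lines, the cleanest fix is to insert exactly this observation rather than trying to sharpen the raw degree count on the unconjugated operator.
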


To prove this, we begin with the following easily proved generalization of the orthogonality relations satisfied
by the complete symmetric functions, valid in the case when only the first $L$ variables $(s_1, \dots, s_L, 0, 0, \cdots)$
are nonzero.
\begin{lemma}
\label{general_h_n_orthogon}
\bea
e^{\xi({\bf s}, x)}D^k e^{-\xi({\bf s}, x)}&=& \sum_{N=0}^\infty x^N\sum_{n=0}^N n^k  h_n(-{\bf s})  h_{N-n}({\bf s})\\
e^{-\xi({\bf s}, x)}D^k e^{\xi({\bf s},x)}&=& \sum_{N=0}^\infty x^N \sum_{n=0}^N n^k  h_n({\bf s})  h_{N-n}(-{\bf s})
\label{lemma_orthog}
\eea
Moreover, if $s_i =0, \ \forall\  i > L$, the following orthogonality relations are identically satisfied
\be
\sum_{n=1}^N n^k h_n(-{\bf s}) h_{N-n}({\bf s}) =0, \quad \forall N > kL.
\label{h_nk_orthogonality}
\ee
\end{lemma}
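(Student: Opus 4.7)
The plan is to read both left-hand sides of \eqref{lemma_orthog} as the operator $e^{\pm\xi({\bf s},x)}\,D^k\,e^{\mp\xi({\bf s},x)}$ applied to the constant function $1$, and to reduce each identity to a direct manipulation of formal power series built from the generating function $e^{\xi({\bf s},x)}=\sum_{m\ge 0} h_m({\bf s})\,x^m$.

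For the two power-series identities, I would proceed by a direct Cauchy-product calculation. Since $D(x^n)=n\,x^n$, iterating gives $D^k(x^n)=n^k x^n$. Expanding $e^{-\xi({\bf s},x)}=\sum_{m\ge 0} h_m(-{\bf s})\,x^m$ and applying $D^k$ term by term yields $\sum_{m\ge 0} m^k h_m(-{\bf s})\,x^m$. Multiplying on the left by $e^{\xi({\bf s},x)}=\sum_{j\ge 0} h_j({\bf s})\,x^j$ and collecting powers of $x$ gives the first identity; the second one is obtained by swapping ${\bf s}\leftrightarrow -{\bf s}$.

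For the orthogonality \eqref{h_nk_orthogonality}, the strategy is to compute the same operator a different way, exploiting conjugation. Using $D(e^{-\xi({\bf s},x)})=-\bigl(x\partial_x\xi({\bf s},x)\bigr)\,e^{-\xi({\bf s},x)}$ together with the definition $S(x)=\sum_{k\ge 1} k s_k x^k = x\partial_x\xi({\bf s},x)$ from \eqref{S_def}, one checks at once that $e^{\xi}\,D\,e^{-\xi}=D-S(x)$ as operators acting on formal series, and hence $e^{\xi}\,D^k\,e^{-\xi}=(D-S(x))^k$. When $s_i=0$ for all $i>L$, the polynomial $S(x)$ has degree at most $L$. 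A short induction on $k$ then shows that $(D-S(x))^k(1)$ is a polynomial in $x$ of degree at most $kL$: indeed $D$ preserves the degree of a polynomial and multiplication by $S(x)$ raises it by at most $L$. Comparing with the first formula of the lemma forces the coefficient of $x^N$ on the right to vanish for $N>kL$; the $n=0$ term of that coefficient is killed by the factor $n^k$ since $k\ge 1$, so the sum may be started at $n=1$ as written.

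No step of this argument is delicate; everything reduces to formal identities valid term by term in $\mathbb{K}[[x]]$. The mild subtlety worth flagging is notational: the ``$=$'' in \eqref{lemma_orthog} relates an operator (LHS) to a power series (RHS), and should be read throughout as the operator applied to the constant function $1$.
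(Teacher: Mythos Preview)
Your argument is correct and follows essentially the same route as the paper's proof. The paper introduces a two-variable generating function $S(x,y)=e^{\xi({\bf s},y)-\xi({\bf s},x)}$, applies $D^k$ in $x$, and then specializes $y=x$ to obtain the identity; you bypass the auxiliary variable $y$ altogether by directly expanding $e^{-\xi}$, applying $D^k$ termwise, and taking the Cauchy product with $e^{\xi}$, which is cleaner. For the degree bound, the paper simply asserts that $P(x)$ is a polynomial of degree $kL$ when $s_i=0$ for $i>L$, whereas you supply the explicit mechanism via the operator identity $e^{\xi}De^{-\xi}=D-S(x)$ and the observation that $(D-S(x))^k(1)$ has degree at most $kL$ --- this is the content the paper leaves implicit. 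Your remark that the statement requires $k\ge 1$ (so that the $n=0$ term drops out) is well placed; the paper does not flag this.
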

\begin{proof}
For each $N\ge 0$, define the following generating series
\bea
S(x,y) &\& :=  e^{\xi(s,y)-\xi({\bf s},x)}\\
&\&=:\sum_{N=0}^\infty S_N(x,y)
\eea
where the $S_N(x,y)$  are  polynomials of degree $N$ in the variables $(x,y)$
\be
S_N(x,y) :=   \sum_{n=0}^N x^n y^{N-n} h_n(-{\bf s})  h_{N-n}({\bf s}).
\ee
For all  $k\in \Nb$, we also define two weighted versions of the polynomials $S_{N}(x,y)$
\be
S_{N,k}(x,y) :=   \sum_{n=0}^N n^k x^n y^{N-n} h_n(-{\bf s})  h_{N-n}({\bf s})
\ee
Denoting the Euler operator in $x$ 
\be
D :=x {d\over dx}\\
\ee
and applying its $k$th power to $S(x,y)$, we obtain
\be
D^k S(x,y) = P(x) S(x,y) =   \sum_{N=0}^\infty S_{N, k}(x,y),
\ee
where $P(x)$  is the series
\be
P(x)=e^{\xi({\bf s},x)}D^k e^{-\xi({\bf s},x)}.
\label{Pseries}
\ee
For $x=y$ we have
\be
P(x)=  \sum_{N=0}^\infty x^N\sum_{n=0}^N n^k  h_n(-{\bf s})  h_{N-n}({\bf s})
\label{PPtilde}
\ee
which proves the first equation in  \ref{lemma_orthog}. The second equation follows from the substitution ${\bf s}\rightarrow -{\bf s}$.

If $s_i =0, \ \forall\  i > L$, then  $P(x)$ and is a polynomial of degree $kL$. In this case (\ref{PPtilde})  implies (\ref{h_nk_orthogonality}).
\end{proof}

We now turn to the proof of Proposition \ref{generating_function_A}.  
\begin{proof}
It is obvious that it is  sufficient to prove it for $\beta=1$. For (\ref{Aseriesdef}) we have
\bea
A(r,t)&=&e^{\xi({\bf s},t)-\xi({\bf s},r)}\left(r\,G\left(- t\frac{d}{d t} \right)  -t  G\left(r\frac{d}{d r }\right)\right) \frac{e^{\xi({\bf s},r)-\xi({\bf s},t)}} {r-t}\\
&=&e^{\xi({\bf s},t)-\xi({\bf s},r)}\left(G\left(- t\frac{d}{d t} \right)  -G\left(r\frac{d}{d r }\right)\frac{t}{r}\right) {e^{\xi({\bf s},r)-\xi({\bf s},t)}}\sum_{m=0}^\infty \left(\frac{t}{r}\right)^m\\
&=&\sum_{m=0}^\infty\left(\frac{t}{r}\right)^m   \left(e^{\xi({\bf s},t)}G\left(- t\frac{d}{d t} -m \right) e^{-\xi({\bf s},t)} -  \frac{t}{r}e^{-\xi({\bf s},r)}  G\left(r\frac{d}{d r }-m-1\right) 
e^{\xi({\bf s},r)}\right). \cr
& &
\eea
Then, from (\ref{Pseries}) and (\ref{PPtilde}) we conclude
\bea
A(r,t)= \sum_{m=0}^\infty \left(\frac{t}{r}\right)^m \sum_{N=0}^\infty \sum_{n=0}^N\left( t^N G(-n-m)- r^{N-1}t G(N-n-m-1)\right)  c(N,n)\\
=\sum_{N=0}^\infty \sum_{n=0}^N c(N,n)  \sum_{m=0}^\infty \left(t^{m+N}r^{-m}G(-n-m) -t^{m+1}r^{N-m-1}G(N-n-m-1)\right),
\eea
where we introduced 
\be
c(N,n):=h_n(-{\bf s}) h_{N-n}({\bf s}).
\ee
In the first term we can change the summation variable $m\mapsto m+1-N$ to get
\bea
A(r,t)&=&1+\sum_{N=1}^\infty \sum_{n=0}^N c(N,n)  \sum_{m=N-1}^\infty t^{m+1}r^{N-m-1}G(N-n-m-1)\\
&-&\sum_{N=1}^\infty \sum_{n=0}^N c(N,n)  \sum_{m=0}^\infty t^{m+1}r^{N-m-1}G(N-n-m-1)\\
&=&1-\sum_{N=2}^\infty \sum_{n=0}^{N} c(N,n)  \sum_{m=0}^{N-2} t^{m+1}r^{N-m-1}G(N-n-m-1),
\eea
so that indeed for $i,j>0$
\be
A_{ij}=- \sum_{n=0}^{i+j} G(j-n)h_n(-{\bf s}) h_{i+j-n}({\bf s}).
\label{AijGhh}
\ee
If $M$ and $L$ are finite, then  (\ref{Aseriesdef}) is a polynomial of total degree $LM$, and (\ref{Azero}) follows. 
\end{proof}

For arbitrary $\beta$ we have
\be
A_{ij}= -\sum_{n=0}^{i+j} G(\beta(j-n))h_n(-\beta^{-1}{\bf s}) h_{i+j-n}(\beta^{-1}{\bf s}).
\ee
which coincides with eq~(\ref{Acoefficients}).

In terms of $\tilde{A}$ eq~(\ref{Azero}) is equivalent to
\be
\tilde{A}^{(G, \beta, {\bf s})}_{ij} =0 \quad \text {if} \quad i \ge LM \quad \text{or} \quad j \ge LM.
\ee
and therefore
\be
\tilde{K}^{(G,\beta,\gamma, {\bf s})}_2(z, w) = { \sum _{i=0}^{LM-1}\sum_{j=0}^{LM-1}\tilde{A}^{(G, \beta, {\bf s})}_{ij} w_{-j}(z) w^{*}_{-i}(w)
 -  w_1(z) w_1^{*}(w)  \over z-w}.
\label{pair_correl_expansion_w_rho_s_k_finite} 
\ee
Equivalently, we have the finite form of eq.~(\ref{K_Psi+_Psi-denom})
\be
K(x, x') = {\sum_{i=0}^{LM} \sum_{j=0}^{LM} A_{ij} \Psi^+_i(x) \Psi^-_j(x') \over x - x'}.
\label{K_Psi+_Psi-_finite}
\ee

\subsection{The multicurrent correlators as generating functions for  weighted Hurwitz numbers }
\label{current_correl_hurwitz}

In this section, we specialize the multicurrent correlators  (\ref{multicurrent_correl})
 to the case of  of hypergeometric $\tau$-functions, for which
\be
\hat{g} = \hat{C}_\rho  \hat{\gamma}_-(\beta^{-1} {\bf s}).
\ee
In this case
\be
 \JJ_n(x_1, \dots, x_n) := \langle 0 | \left(\prod_{i=1}^n J_+(x_i)\right) \hat{C}_\rho   \hat{\gamma}_-(\beta^{-1} {\bf s})| 0 \rangle.
 \label{multicurrent_correl_hypergeom}
 \ee 
The correlators $W_n$ are therefore expressed fermionically as
\be
W_n(x_1, \dots, x_n) = {1\over \prod_{i=1}^n x_i}\langle 0 | \prod_{i=1}^n J_+(x_i) \hat{C}_\rho \hat{\gamma}_-(\beta^{-1} {\bf s}) | 0 \rangle.
\label{W_n_fermionic_hypergeometric}
\ee
We now introduce another set of generating functions for weighted Hurwitz numbers, 
both for  the connected and nonconnected case.
\begin{definition}
\bea
F_n({\bf s}; x_1,\dots,x_n) &\&:=  \sum_{\mu,\nu,\,\ell(\mu)=n}  \sum_d \gamma^{|\mu|} \beta^{d-\ell(\nu)} H_{G}^d(\mu,\nu)\, 
|\aut(\mu)| m_\mu(x_1,\dots,x_n) p_\nu({\bf s}) 
\label{F_n_expan}\\
\tilde{F}_n({\bf s};x_1,\dots,x_n) &\& :=  \sum_{\mu,\nu,\,\ell(\mu)=n}  \sum_d \gamma^{|\mu|} \beta^{d-\ell(\nu)} \tilde H^d_{G}(\mu,\nu)\,
|\aut(\mu)|  m_\mu(x_1,\dots,x_n) p_\nu({\bf s}) 
\label{tilde_F_n_expan}\\
&\& = {\hskip 15 pt}\sum_g \beta^{n+2g -2} \tilde{F}_{g, n}, \\
\tilde{F}_{g,n}({\bf s}; x_1, \dots , x_n)  &\&:=  \sum_{\mu,\nu,\,\ell(\mu)=n}   \gamma^{|\mu|} \tilde H^{n + \ell(\nu) + 2g -2}_{G}(\mu,\nu)\, 
|\aut(\mu)| m_\mu(x_1,\dots,x_n) p_\nu({\bf s})\,, \cr
&\&
\label{tilde_F_g_n_expan}
\eea
where $m_\mu(x_1, \dots, x_n)$ denotes the monomial sum symmetric function corresponding to the partition $\mu$,
and $(g, d)$ are related by the Riemann-Hurwitz formula
\be
2-2g = \ell(\mu) + \ell(\nu) - d.
\label{Riem_Hurw_g_d}
\ee
\end{definition}
Note that $F_n$, $\tilde F_n$ belong to $\mathbb{K}[x_1,\dots,x_n; {\bf s}; \beta, \beta^{-1}][[\gamma]]$ 
and $\tilde F_{g,n}$ to $\mathbb{K}[x_1,\dots,x_n; {\bf s}][[\gamma]]$.

We may similarly define the multicurrent correlator $\tilde W_{g,n}({\bf s}; x_1,\dots,x_n)$ for fixed genus $g$ and $n$
as the coefficients in the expansion
\be
\tilde{W}_n({\bf s};x_1,\dots,x_n)=:  \sum_g \beta^{n+2g -2} \tilde{W}_{g, n}({\bf s};x_1,\dots,x_n). 
\ee
The following result, relating  the generating functions $F_n, \tilde{F}_n, \tilde{F}_{g,n}$ to
the multicurrent correlators, is proved in \cite{ACEH1}.
\begin{proposition}
\label{multicurrent_Wn_Fn}
\bea
\label{firW}
W_{n}({\bf s}; x_1,\dots,x_n ) &\&= {\partial^n\over \partial x_1 \cdots \partial x_n}F_{n}({\bf s}; x_1,\dots,x_n), 
\label{W_n_F_n}\\
\tilde W_{n}({\bf s}; x_1,\dots,x_n) &\&={\partial^n\over \partial x_1 \cdots \partial x_n}\tilde F_{n}({\bf s}; x_1,\dots,x_n), 
\label{tilde_W_n_F_n}\\
\tilde W_{g,n}({\bf s}; x_1,\dots,x_n) &\&= {\partial^n\over \partial x_1 \cdots \partial x_n}\tilde F_{g,n}({\bf s}; x_1,\dots, x_n). 
\label{tilde_W_g_n_F_n}\
\eea
\end{proposition}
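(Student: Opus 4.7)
The plan is to reduce all three identities to a single combinatorial calculation describing how the gradient operators $\nabla(x_i)$ act on a power sum monomial $p_\mu({\bf t})$ at ${\bf t}=\mathbf 0$, and then to feed that calculation into the Hurwitz expansion \eqref{tau_G_H} of the hypergeometric $\tau$-function and into its logarithm \eqref{logtau_G_H}.

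The core combinatorial lemma I would establish is that, for any partition $\mu$ with $\ell(\mu)=n$,
\[
\left.\prod_{i=1}^{n} \nabla(x_i)\, p_\mu({\bf t}) \right|_{{\bf t}=\mathbf 0}
= |\aut(\mu)|\, \prod_{i=1}^{n} \frac{\partial}{\partial x_i}\, m_\mu(x_1,\dots,x_n),
\]
while the same expression vanishes when $\ell(\mu)\neq n$. To prove this I would write $p_\mu({\bf t})=\prod_{k=1}^{\ell(\mu)} \mu_k t_{\mu_k}$ and observe that, after applying $n$ partial derivatives and setting ${\bf t}=\mathbf 0$, only terms in which the derivatives pair bijectively with the $n$ factors of $p_\mu$ survive. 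Summing over such bijections produces $\sum_{\sigma\in S_n} \prod_i \mu_{\sigma(i)} x_i^{\mu_{\sigma(i)}-1}$, which factors the derivatives $\partial/\partial x_i$ out front and leaves the full $S_n$-symmetrization $\sum_{\sigma\in S_n} \prod_i x_i^{\mu_{\sigma(i)}} = |\aut(\mu)|\, m_\mu(x_1,\dots,x_n)$.

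Plugging the lemma into the expansion
\[
\tau_g({\bf t}) = \sum_{d\ge 0}\beta^d \sum_{\substack{\mu,\nu\\ |\mu|=|\nu|}} \gamma^{|\mu|} H^d_G(\mu,\nu)\, p_\mu({\bf t})\, p_\nu(\beta^{-1}{\bf s}),
\]
and using $p_\nu(\beta^{-1}{\bf s})=\beta^{-\ell(\nu)}p_\nu({\bf s})$, only partitions $\mu$ with $\ell(\mu)=n$ contribute after $\prod_i\nabla(x_i)$ is applied and ${\bf t}=\mathbf 0$ is set; the lemma identifies the result as $\prod_i \partial/\partial x_i$ applied to the right-hand side of the definition \eqref{F_n_expan} of $F_n$, proving \eqref{W_n_F_n}. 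Replacing $\tau_g$ by $\mathcal F^{(G,\beta,\gamma)}({\bf t},\beta^{-1}{\bf s}) = \log\tau_g({\bf t})$ and using \eqref{logtau_G_H} instead, the identical argument yields \eqref{tilde_W_n_F_n}.

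For the genus-refined identity \eqref{tilde_W_g_n_F_n} I would split both sides by genus. For $\ell(\mu)=n$ and $|\mu|=|\nu|$, the Riemann--Hurwitz relation \eqref{Riem_Hurw_g_d} forces $d-\ell(\nu) = n+2g-2$, so the $\beta$-grading on $\tilde F_n$ coincides with the genus grading and term by term picks out precisely $\tilde F_{g,n}$; equating coefficients of $\beta^{n+2g-2}$ on both sides of \eqref{tilde_W_n_F_n} gives the claim. The main subtlety throughout will be the bookkeeping of $\beta$-powers: the rescaling ${\bf s}\mapsto \beta^{-1}{\bf s}$ is what converts the $\beta^d$ of the tau-function expansion into the $\beta^{d-\ell(\nu)}$ of $F_n$, and it is this combination (not $\beta^d$) that is genus-graded via Riemann--Hurwitz.
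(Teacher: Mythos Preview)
Your argument is correct. The paper itself does not prove this proposition: it states that the result ``is proved in \cite{ACEH1}'' and gives no argument here. Your approach---computing $\left.\prod_i\nabla(x_i)\,p_\mu({\bf t})\right|_{{\bf t}=\mathbf 0}$ via the Leibniz rule, identifying the surviving permutation sum with $|\aut(\mu)|\,\prod_i\partial_{x_i}m_\mu$, and then reading off the result from the Hurwitz expansions \eqref{tau_G_H}, \eqref{logtau_G_H} after the substitution ${\bf s}\mapsto\beta^{-1}{\bf s}$---is exactly the natural one, and the genus refinement via Riemann--Hurwitz is handled correctly.

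One small point of presentation: when $\mu$ has repeated parts, the phrase ``the derivatives pair bijectively with the $n$ factors of $p_\mu$'' is informal, since several of those factors coincide as monomials in ${\bf t}$. What you are really using is the multinomial Leibniz expansion of $\prod_i\nabla(x_i)$ applied to the product $\prod_k \mu_k t_{\mu_k}$ of $n$ (possibly equal) linear factors, followed by evaluation at ${\bf t}=\mathbf 0$, which kills every term except the $n!$ ``permutation'' terms. Your displayed formula $\sum_{\sigma\in S_n}\prod_i\mu_{\sigma(i)}x_i^{\mu_{\sigma(i)}-1}$ is the correct outcome of that computation, and the identification $\sum_{\sigma}\prod_i x_i^{\mu_{\sigma(i)}}=|\aut(\mu)|\,m_\mu$ is standard. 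No gap.
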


\section{Bose-Fermi correspondence and  ``cut and join'' operators}
\label{caj}

The cut-and-join description of the single and double Hurwitz numbers was introduced in \cite{GJ1,V}. It is known that generating functions of general weighted Hurwitz numbers (or, equivalently, hypergeometric $\tau$-functions) can be described by several versions of cut-and-join equations, see e.g. \cite{AMMN,ALS,KZ,MMN,Ta1,Sh,AMMN1}. Here we describe two different  cut-and-join representations using the bosonization procedure to express these relations both
bosonically and in terms of fermionic VEV's.


\subsection{Bose-Fermi correspondence and cut and join representation of the  $\tau$-function}
\label{bosonization}

One of the cut-and-join type representations is a direct generalization of \cite{GJ1,V} and is naturally parametrized by the coefficients of $\log(G)$
\be
\sum_{k=0}^{\infty}(-1)^{k+1} \beta^{k} A_k x^k:= \log(\gamma G(\beta x)).
\ee
introduced in Section \ref{hypergeometric_specialization}. From (\ref{Tfuncd}) it follows that element (\ref{A_hat_def}) can be represented as 
\bea
\hat{A}&=&\log\gamma \sum_{j=-\infty}^\infty j :\psi_j \psistar_j:+\sum_{k=1}^\infty (-1)^{k+1} \beta^k A_k \mbox{res}_{z=0}\left( \normord\psi(z) p_k\left(-{\mathcal D}-1\right) \psistar(z)\normord\right)\\
&=&\sum_{k=0}^\infty (-1)^{k+1} \beta^k A_k \hat{Q}_k,
\label{A_k_Q_k_def}
\eea
where the operators
\bea
\hat{Q}_k &\&:=\mbox{res}_{x=o}\left( \normord\psi(x) p_k\left(-x\frac{\pp}{\pp x}-1\right) \psistar(x)\normord\right) 
\label{hatqdef} \\
&\& = \sum_{i =-\infty}^\infty p_k(i) \no{\psi_i \psi^*_i}
\label{hatqdef_expl} 
\eea
commute with each other
\be
\left[\hat{Q}_k,\hat{Q}_m\right]=0, \,\,\,\,\,\,\,\,\,k,m\geq 0.
\ee
From (\ref{A_hat_def}) and (\ref{A_k_Q_k_def}), we may express the exponents $T_i$ in terms of the $A_i$'s as
\be
T_i =\sum_{k=0}^\infty (-1)^{k+1} \beta^k A_k p_k(i), \quad i \in \Zb.
\ee
\

The bose-fermi correspondence is defined by the bosonization map 
\bea
\MM_0: \FF_0 &\& \ra \BB_0 \cr 
\MM_0 |v \rangle &\& \mapsto  \langle 0 | \hat{\gamma}_+({\bf t}) | v \rangle
\eea
from the (zero charge sector of the) Fermi Fock space $\FF_0 \ss \FF$ to the corresponding sector $\BB_0$ of the Bosonic Fock space $\BB_0$.
The latter is viewed as consisting of symmetric functions of an infinite number of auxiliary Bosonic variables
$\{\xi_a\}_{a\in \Nb^+}$, in terms of which the KP flow parameters ${\bf t } =(t_1, t_2, \dots )$ are understood as normalized power sums
\be
t_i :={1\over i} \sum_{a=1}^\infty  \xi_a^i, \quad i =1 , \dots. 
\ee

This  gives rise to the identity
\be\label{bfcorresp}
 \langle 0 | \hat{\gamma}_+({\bf t}) \normord \psi(z) \psistar(w) \normord | v\rangle =\frac{1}{z-w}  \left(\normordboson e^{\phi(z)-\phi(w)} \normordboson-1\right)\, \langle 0 | \hat{\gamma}_+({\bf t}) |v\rangle, \quad \forall \ |v\rangle \in \FF_0,
\ee
where $\normordboson \dots\normordboson$ denotes the bosonic normal ordering, which puts all derivatives with respect to $t_k$ to the right of all multiplications by $t_k$'s, and ${\phi}$ is the bosonic operator
\be
{\phi}(x):=\sum_{k=1}^\infty \left( t_k x^k -\frac{1}{k x^k} \frac{\pp}{\pp t_k} \right).
\label{phi_x}
\ee

This allows us to construct a bosonic counterpart for any bilinear fermionic combination $\sum a_{ij}  \normord \psi_i \psistar_j  \normord$. 
In particular, for the operators (\ref{hatqdef}), we have the bosonic representation $Q_k$, which are polynomials in $t_k$ and $\frac{\pp}{\pp t_k}$, such that
\be
 \langle 0 | \hat{\gamma}_+({\bf t})\hat{Q}_k=Q_k\,  \langle 0 | \hat{\gamma}_+({\bf t}).
 \label{qhattoq}
\ee
From (\ref{bfcorresp}) it immediately follows \cite{Sh,ARP} that a generating function of the bosonic counterparts of these operators $Q_k$ is
\be
\sum_{k=0}^\infty \frac{a^k}{k!}{Q}_k=\frac{1}{(e^{a/2}-e^{-a/2})^2}\,\res_{x=0}\left(x^{-1}\left(\normordboson e^{{\phi}(x e^{a/2})-{\phi}(x e^{-a/2})}\normordboson-1\right)\right).
\label{Q_k_bosonic_def}
\ee

The $Q_k$'s also generate a commutative algebra
\be
\label{commutbosq}
\left[{Q}_k,{Q}_m\right]=0, \,\,\,\,\,\,\,\,\,k,m\geq 0.
\ee 
For example,
\bea
{Q}_0&=&\sum_{k=1}^\infty k t_k \frac{\pp}{\pp t_k},\\
{Q}_1&=& \frac{1}{2}\sum_{a,b}
\left(ab t_at_b\frac{\pp}{\pp t_{a+b}} + (a+b) t_{a+b}\frac{\pp^2}{\pp t_a\pp t_b}\right),  \\
{Q}_2&=& \frac{1}{3} \sum_{a,b,c=1}^\infty \left(abc\,t_at_bt_c \,\frac{\pp}{\pp t_{a+b+c}}
+ (a+b+c)\, t_{a+b+c}\,
\frac{\pp^3}{\pp t_a\pp t_b\pp t_c} \right)\cr
&+& \ \frac{1}{2} \sum_{a,b=1}^\infty \,\sum_{c=1}^{a+b-1} c (a+b-c)\, t_ct_{a+b-c}\,\frac{\pp^2}{\pp t_a\pp t_b}
+\ \frac{1}{6}\sum_{a=1}^\infty a(a^2-1)\,t_{a}\,\frac{\pp}{\pp t_a}.  
\eea

From (\ref{qhattoq}) we have
\be
\langle 0 | \hat{\gamma}_+({\bf t})e^{\hat{A}}=\exp\left(\sum_{k=0}^\infty (-1)^{k+1} \beta^k A_k {Q}_k \right) \langle 0 | \hat{\gamma}_+({\bf t}).
\ee
This gives a cut-and-join type representation for the $\tau$-function (\ref{tau_G_H}):
\bea
\tau^{( G, \beta,\gamma)}({\bf t}, {\bf s})&=&  \langle 0 | \hat{\gamma}_+({\bf t})e^{\hat{A}} \hat{\gamma}_-({\bf s})|0\rangle\\
&=&\exp\left(\sum_{k=0}^\infty (-1)^{k+1} \beta^k A_k {Q}_k \right) \langle 0 | \hat{\gamma}_+({\bf t}) \hat{\gamma}_-({\bf s})|0\rangle\\
&=&\exp\left(\sum_{k=0}^\infty (-1)^{k+1} \beta^k A_k {Q}_k \right) \exp{\left(\sum_{k=1}^\infty k t_k s_k\right)}.
\label{tau_expAK}
\eea
For the case of simple Hurwitz numbers, considered in Section \ref{simple_hurwitz} when $G(z)=\exp(z)$, if we put $\gamma=1$, (\ref{tau_expAK}) reduces to 
\be
\tau({\bf t}, {\bf s})=e^{\beta Q_2} e^{\left(\sum_{k=1}^\infty k t_k s_k\right)},
\ee
which coincides with the cut-and-join representation of \cite{GJ1,V} for simple (double) Hurwitz numbers.


\subsection{Cut and join equations}
\label{cut_and_jon_eq}

If the $A_k$'s are viewed as the independent parameters defining the weighting, 
the $\tau$-function $\tau^{(G, \beta, \gamma)}({\bf t}, {\bf s})$ satisfies the generalized cut-and-join equation
\bea
\beta \frac{\pp}{\pp \beta} \tau^{( G, \beta,\gamma)}({\bf t}, {\bf s}) =\sum_{k=1}^\infty k A_k \frac{\pp}{\pp A_k}\, \tau^{( G, \beta,\gamma)}({\bf t}, {\bf s}).
\label{cut_and_join}
\eea
It follows from (\ref{qhattoq}) and the commutation relations (\ref{commutbosq}) that $\tau^{( G, \beta,\gamma)}({\bf t}, {\bf s})$ also  satisfies an 
infinite number of further cut-and-join equations corresponding to the variables $A_k$ and $\gamma$
\bea
-\gamma \frac{\pp}{\pp \gamma}\tau^{( G, \beta,\gamma)}({\bf t}, {\bf s}) &\& =-{Q}_0\, \tau^{( G, \beta,\gamma)}({\bf t}, {\bf s}),
\label{gamma_derive}\\
\frac{\pp}{\pp A_k}\tau^{( G, \beta,\gamma)}({\bf t}, {\bf s}) &\&= (-1)^{k+1} \beta^k {Q}_k \tau^{( G, \beta,\gamma)}({\bf t}, {\bf s}).
\label{tau_beta_deriv}
\eea

For the case where $G(z)$ is a polynomial, we have an alternative cut-and-join representation for the single Hurwitz numbers \cite{Z,KZ}.  Namely, 
\be
e^{\hat{A}} e^{J_{-1}} | 0\rangle=e^{ e^{\hat{A}}J_{-1} e^{-\hat{A}}} | 0\rangle.
\ee
Here
\bea
 e^{\hat{A}}J_{-1} e^{-\hat{A}}&=& e^{\hat{A}} \res_{z=0} \left(z^{-1} \psi(z) \psistar(z)\right)e^{-\hat{A}}\\
& =& \res_{z=0} \left(z^{-1} (e^{T\left({\mathcal D}\right)} \psi(z)) e^{-T\left(-{\mathcal D}-1\right)} \psistar(z))\right)\\
& =& \res_{z=0} \left( (e^{-T\left({\mathcal D}\right)}z^{-1} e^{T\left({\mathcal D}\right)} \psi(z))  \psistar(z)\right)\\
& =& \res_{z=0} \left( (R^* \psi(z))  \psistar(z)\right)\\
& =& \gamma J_{-1} +\gamma  \sum_{k=1}^{M} g_k \beta^k  \res_{z=0} \left(z^{-1} ({\mathcal D} ^k \psi(z))  \psistar(z)\right),
\eea
where we have used (\ref{com}). Thus
\be
\tau^{( G, \beta,\gamma)}({\bf t}, {\bf s=\delta_{k,1}})=\exp\left(\gamma\left( t_1+ \sum_{k=1}^{M} g_k \beta^k V_k \right)\right)\cdot 1
\ee
where $V_k$ are the bosonic operators corresponding to $ \res_{z=0} \left( z^{-1}({\mathcal D} ^k \psi(z))  \psistar(z)\right)$. For example
\bea
V_1&=&\sum_{k=1}^\infty k t_k \frac{\pp}{\pp t_{k-1}},\\
V_2&=&\sum_{k,l=1}^\infty \left(kt_k lt_l \frac{\pp}{\pp t_{k+l-1}}+(k+l+1)t_{k+l+1}\frac{\pp^2}{\pp t_k \pp t_l}\right).
\eea

\section{Examples}
\label{examples}

To conclude, we focus  upon  four specific examples of weighted Hurwtz numbers: the simple (single and double) Hurwitz numbers of \cite{Pa,Ok}}; the strongly monotonic Hurwitz numbers,  corresponding to weights supported curves with theree branch points (Belyi curves)  \cite{AC1, KZ, H2}; the weakly monotonic Hurwitz numbers, corresponding to signed weightings  \cite{GGN1, GH2, H2}} and the quantum Hurwitz numbers introduced in  \cite{GH2, H2}}.  Explicit formulae are provided  for: the content product coefficients  entering in the $\tau$-function series,  the parameters defining their fermionic respresentation, the weighting factors  defining the Hurwitz numbers, and the classical and quantum spectral curves  entering in the topological recursion approach.

\subsection{Simple (single and double) Hurwitz numbers \cite{Pa,Ok}}
\label{simple_hurwitz}

This is the original case studied  by Pandharipande and Okounkov. It corresponds to the exponential
weight generating function
\be
G(z) = Exp(z) = e^z.
\ee
The content product formula entering in the Schur function series (\ref{tau_G_H}), (\ref{tau_tilde_G_H}) for the $\tau$-function is
\be
r_\lambda^{(Exp, \beta)}= e^{{\beta\over 2} \sum_{i=1}^{\ell(\lambda)} \lambda_i(\lambda_i - 2i +1)}.
\ee
The weight $W_{Exp}(\mu^{(1)}, \dots, \mu^{(k)})$ is the Dirac measure (i.e. the characteristic function) supported on the
partitions consisting only of $2$-cycles
\be
\mu^{(i)} = (2, (1)^{N-2}),\quad  i=1, \dots , k,
\ee
\be 
W_{Exp}(\mu^{(1)}, \dots, \mu^{(k)}) = \prod_{i=1}^k \delta_{(\mu^{(i)}, (s, (1)^{N-2})}
\ee
and the exponents $T_i(\beta, \gamma)$ entering in (\ref{rho_i_T_i}), (\ref{rho_j_gamma_G}) and (\ref{g_C_hat_rho_gamma-}) are
\be
T_i(\beta, \gamma) = i \ln(\gamma) + {{\beta\over 2} i(i+1)}.
\ee
The quantum spectral curve is \cite{ALS}:
\be
\beta D  \Psi^+_0(x)  - \sum_{i} i s_i \gamma^i x^i e^{i\beta D} \Psi^+_0(x)=0
\ee
where $D=x{d \over dx}$ is the Euler operator, and
the classical spectral curve is
\be
y = \sum_{i} i s_i \gamma^i x^{i-1} e^{ixy}. 
\ee

\subsection{Three branch points (Belyi curves): strongly monotonic paths in the Cayley graph \cite{AC1, KZ, H2}}
\label{belyi_curves}
For this case, the weight generating function is linear
\be
G(z) = 1+z =:E(z).
\ee
The content product formula is
\be
r_\lambda^{(E, \beta)} = \beta^{|\lambda|} \left( 1/ \beta \right)_{\lambda},
\ee
where
\bea
u_\lambda &\&:= \prod_{i=1}^{\ell(\lambda)} (u -i +1)_{\lambda_i} \cr
(u)_j &\& := u (u+1) \cdots (u+j-1)
\eea
is the multiple Pochhammer symbol corresponding to the partition $\lambda$.
The measure $W_E(\mu^{(1)}, \dots, \mu^{(k)})$ is supported on the set of $k=1$ partitions,
and hence corresponds to enumeration of coverings with just three branch points $(\mu^{(1)}, \mu, \nu)$ (Belyi curves).
\be
W_E(\mu^{(1)}, \dots, \mu^{(k)}) = {\delta_{k, 1} \over \ell^*(\mu^{(1)})}.
\ee
The exponents $T_i(\beta)$ entering in (\ref{rho_i_T_i}), (\ref{rho_j_gamma_G}) and (\ref{g_C_hat_rho_gamma-}) are
\be
T_i^{E}(\beta, \gamma) =  i \ln(\gamma)+\sum_{j=1}^i \ln(1+j \beta), \quad T_{-i}^{E}(\beta) = -i \ln(\gamma) -\sum_{j=1}^{i-1} \ln(1-j \beta), \quad i>0
\ee

The quantum spectral curve is \cite{ALS}:
\be
\beta D  \Psi^+_0(x)  - \sum_i  i s_i \gamma^i x^i (1 + \beta D)^i \Psi^+_0(x)=0
\ee
and the classical one is
\be
xy = \sum_i i s_i \gamma^i \left( x (1+xy)\right)^i.
\ee
\subsection{Signed Hurwitz numbers: weakly monotonic paths in the Cayley graph \cite{GGN1, GH2, H2}}
\label{signed_hurwitz}
In this case the weight generating function is
\be
\tilde{G}(z) = {1\over 1-z} =: H(z).
\ee
The content product coefficient is
\be
r_\lambda^{(H, \beta)} =(- \beta)^{-|\lambda|} \left( (-1/\beta)\right)_{\lambda}^{-1}.
\ee
The measure $W_H(\mu^{(1)}, \dots, \mu^{(k)}) $ is thus given by the evaluation of the ``forgotten'' symmetric functions
$f_\lambda({\bf c})$ for the partition with parts $\{\lambda_i\}= \{\ell^*(\mu^{(i)})\}$ at ${\bf c} = (1, 0, \dots )$
\be
W_H(\mu^{(1)}, \dots, \mu^{(k)}) = (-1)^{d+k} {k! \over \prod_{i=1}^k m_i(\lambda)!}
\ee
where $m_i(\lambda)$ is the number of colengths $\ell^*(\mu^{(j)})$ equal to $i$.
The exponents $T_i(\beta)$ entering in (\ref{rho_i_T_i}), (\ref{rho_j_gamma_G}) and (\ref{g_C_hat_rho_gamma-}) are
\be
T_i^{H}(\beta, \gamma) :=  i \ln(\gamma)-\sum_{j=1}^i \ln(1-j \beta), 
\quad T_{-i}^{H}(\beta) = -i \ln(\gamma) +\sum_{j=1}^{i-1} \ln(1+j \beta), \quad i>0
\ee
The quantum spectral curve is \cite{ALS}:
\be
\beta D  \Psi^+_0(x)  - \sum_i  i s_i \gamma^i x^i (1 - \beta D)^{-i} \Psi^+_0(x)=0
\ee
and  the classical one is
\be
xy = \sum_i i s_i \gamma^i x^i (1-xy)^{-i}.
\ee

\subsection{Simple quantum Hurwitz numbers \cite{GH2, H2}}
\label{quantum_hurwitz}
Choosing a real parameter $q$, $0\le q \le 1$, we select the parameters $c_i$ to be
\be
c_i = q^i, \quad i =1, 2, \dots.
\ee
The weight generating function for this case is therefore
\be
G(z)= \prod_{i=1}^\infty (1 + q^i z)=: E'(q) = {1\over 1+z} e^{-\left( {1\over 1-q}\right)Li_2(q, -z)},
\ee
where $\Li_2(q, z)$ is the quantum dilogarithm function
\be
\Li_2(q, z) =(1-q) \sum_{i=1}^\infty {z^i \over i (1- q^i)}.
\ee
This gives the ``simple quantum'' Hurwitz numbers introduced in \cite{GH2}, \cite{H2}. 
The content product  coefficient for this case is
\bea
r^{(E'(q), \beta)}_j &\&= \prod_{k=1}^\infty (1+ q^k \beta j) = (-q\beta j; q)_{\infty} , \\
r^{(E'(q), \beta)}_\lambda(z) &\&= \prod_{k=1}^\infty \prod_{(i,j)\in \lambda} (1+ q^k \beta (j-i)) 
 = \prod_{(i,j)\in \lambda} (-q\beta(j-i); q)_\infty \cr
&\& = \prod_{k=1}^\infty (\beta q^k)^{\abs{\lambda}} (1/(\beta q^k))_\lambda, 
\eea
The weight  $W_{E'(q)}(\mu^{(1)}, \dots, \mu^{(k)})$ for this case is
\bea
W_{E'(q)} (\mu^{(1)}, \dots, \mu^{(k)}) &\& := m_\lambda (q, q^q, \dots )\cr
&\& =  {1\over  |\aut(\lambda)|} \sum_{\sigma\in S_k} \frac{q^{k \ell^*(\mu^{(\sigma(1))})} \cdots q^{\ell^*(\mu^{(\sigma(k))})}}{
(1- q^{\ell^*(\mu^{(\sigma(1))})}) \cdots (1- q^{\ell^*(\mu^{(\sigma(1))}} \cdots q^{\ell^*(\mu^{(\sigma(k))})})} \cr
&\& =  {1\over  |\aut(\lambda)|} \sum_{\sigma\in S_k} \frac{1}{
(q^{-\ell^*(\mu^{(\sigma(1))})} -1) \cdots (q^{-\ell^*(\mu^{(\sigma(1))})} \cdots q^{-\ell^*(\mu^{(\sigma(k))})}-1)}, \cr
&\&
\label{W_Eprime_q}
\eea
which may be viewed as the (unnormalized) probability measure of a Bosonic gas with energy levels
\be
\EE(\mu) := \ell^*(\mu)\hbar \omega_0
\ee
associated to branch points with ramification profiles $\mu$, where
\be
q = e^{-\beta \hbar \omega_0}.
\ee

The exponents $T_i(\beta)$ entering in (\ref{rho_i_T_i}), (\ref{rho_j_gamma_G}) and (\ref{g_C_hat_rho_gamma-}) are
\bea
T_i^{E'(q)}(\beta, \gamma) &\&:=  i \ln(\gamma)+  \sum_{k=1}^i \sum_{j=1}^\infty \ln(1+ k\beta q^j)  \\
 T_{-i}^{E'(q)}(\beta, \gamma) &\& = -i \ln(\gamma) -  \sum_{k=1}^{i-1} \sum_{j=1}^\infty \ln(1- k\beta q^j) \quad i>0
\eea

The quantum spectral curve is:
\be
\beta D  \Psi^+_0(x)  - \sum_i  i s_i \gamma^i x^i \left(\prod_{j=1}^\infty1 -q^j \beta D\right)^i \Psi^+_0(x)=0
\ee
and  the classical one is
\be
xy = \sum_i i s_i \gamma^i x^i  (\prod_{j=1}^\infty1 -q^j xy)^i .
\ee

\bigskip


\begin{appendices}


\section{Fermionic representation of multipair correlators}
 \label{AppendA}
 
 In the following, we  assume that $g$ is lower triangular. This implies, in
particular, that the left vacuum is stabilized by both $\hat{g}$ and $\hat{g}^{-1}$
\be
\langle 0 | \hat{g} = \langle 0 | \hat{g}^{-1} = \langle 0 |.
\ee

\subsection{Fermionic representation of  $\tilde{K}_2^g(z, w)$ }
\label{fermionic_VEV_K2}

We provide here and in Appendix  \ref{K_g_2n_tau} the details of the proofs of Proposition \ref{K_2_tau_prop},
Lemma \ref{main_fermionic_identity} and Corollaries \ref{corollary_2n}, \ref{K2n_det}.
As defined in eq.~(\ref{K2_VEV_rep}), the pair correlator $\tilde{K}^g_2(z,w)$ 
may be expressed fermionically as
\be
\tilde{K}^g_2(z,w) =
\begin{cases}  \langle 0 | \hat{g}^{-1} \psi(z) \psi^*(w) \hat{g} | 0 \rangle \quad \text{if} \quad |z|>|w| \cr
 - \langle 0 | \hat{g}^{-1} \psi^*(w) \psi(z) \hat{g} | 0 \rangle \quad \text{if} \quad |z|<|w|.
 \end{cases}
\label{K_2_fermionic}
\ee
We now prove Proposition \ref{K_2_tau_prop}, which states that this is equivalent to
\be
\tilde{K}^g_2(z,w) = { \tau_g( [w^{-1}]-[z^{-1}]) \over z-w}. 
\label{K_2_tau_appB}
\ee
 The general case will then follow from Wick's theorem, as shown in the next subsection.
 \begin{proof}
 We begin by proving the stronger relations,  
for all partitions $\lambda$,
 \bea
{s_\lambda([w^{-1}] - [z^{-1}]) \over z-w} &\&= {\langle 0 | \hat{\gamma}_+([w^{-1}]-[z^{-1}])|\lambda \rangle\over z-w}  =
 \begin{cases}
\langle 0 | \psi(z) \psi^*(w) | \lambda \rangle, \quad \text{if} \quad |z| > |w|,
 \cr
-\langle 0 | \psi^*(w) \psi(z) | \lambda \rangle, \quad \text{if} \quad |z| < |w|,   
  \label{s_jwz}
  \end{cases} 
 \eea
 which, since these are valid for all $\lambda$, implies
  \be
\frac{ \langle 0 | \hat{\gamma}_+([w^{-1}] -[z^{-1}]) }{z-w} =
 \begin{cases}
\langle 0 | \psi(z) \psi^*(w)  , \quad \text{if} \quad |z| > |w|,
 \cr
-\langle 0 | \psi^*(w) \psi(z) , \quad \text{if} \quad |z| < |w|,  \end{cases}
 \label{gamma_psi_psi_wz}
 \ee
First, note that  the LHS of (\ref{s_jwz}) vanishes unless $\lambda$ is either the trivial partitition $\lambda = \emptyset$ ,
in which case
\be
s_\emptyset([w^{-1}] - [z^{-1}]) = 1,
\ee
or a hook partition $\lambda = (a | b)$, with arm length $a\ge 0$ and leg length $b \ge 0$, for which
\be
s_{(a|b)}([w^{-1}] - [z^{-1}]) = (-1)^b (z-w) z^{-b-1} w^{-a-1}.
\ee

Turning to the RHS, by Wick's theorem, $\langle 0 | \psi(z) \psi^*(w)| \lambda \rangle$  and 
$\langle 0 |  \psi^*(w) \psi(z)| \lambda \rangle$ also vanish unless $\lambda = \emptyset$ or $\lambda = (a| b)$. 
For the first case, if $|z|>|w|$
\bea\label{psipsist}
\langle 0 | \psi(z) \psi^*(w) | 0 \rangle &\& = \sum_{i\in \Zb} \sum_{j\in \Zb} z^i w^{-j-1} \langle 0 | \psi_i \psi^*_j | 0 \rangle \cr
&\& = {1\over w} \sum_{i=1}^\infty \left({w\over z}\right)^i = {1 \over z-w},
\eea
 and similarly, if $|z|<|w|$
 \bea
 \langle 0 | \psi^*(w)\psi(z)   | 0 \rangle &\& = \sum_{i\in \Zb} \sum_{j\in \Zb} z^i w^{-j-1} \langle 0 |\psi^*_j   \psi_i  | 0 \rangle \cr
&\& = {1\over w} \sum_{i=0}^\infty \left({z\over w}\right)^i = {1 \over w-z}.
\eea
For $\lambda = (a|b)$, and $|z|> |w|$
\bea
\langle 0 | \psi(z) \psi^*(w) | (a|b)\rangle &\& 
= \sum_{i \in \Zb} \sum_{j\in \Zb} z^i w^{-j-1} \langle 0 | \psi_i \psi^*_{j}(-1)^b \psi_a \psi^*_{-b-1} | 0\rangle \cr
&\& = (-1)^b w^{-a-1} z^{-b-1}
\eea
by Wick's theorem, while for $|z|<|w|$,
\bea
\langle 0 | \psi^*(w) \psi(z)  | (a|b)\rangle &\& 
= \sum_{i \in \Zb} \sum_{j\in \Zb} z^i w^{-j-1} \langle 0 | \psi^*_{j} \psi_i (-1)^b \psi_a \psi^*_{-b-1} | 0\rangle \cr
&\& = - (-1)^b w^{-a-1} z^{-b-1}. 
\eea
Taking the scalar product of (\ref{gamma_psi_psi_wz}) with $\hat{g}|0 \rangle$ then proves the
equivalence of eq.~(\ref{K_2_fermionic}) with (\ref{K_2_tau_appB}).

 \end{proof}

     
 \subsection{The $n$-pair correlator $\tilde{K}^{g}_{2n}({\bf z}, {\bf w})$}
 \label{K_g_2n_tau}
  For a set of  $n$ nonzero complex numbers ${\bf z} = (z_1, z_2, \dots z_n )$,  we use the
  notations $[{\bf z}]$ and $[{\bf z}^{-1}]$ to denote the infinite sequence $\{[{\bf z}]_i\}_{i=1, 2, \cdots}$, 
  of normalized power sums
  \be
  [{\bf z}]_i :=  {1\over i} \sum_{j=1}^n(z_j)^i  \quad \text{and} \quad    [{\bf z}^{-1}]_i :=  {1\over i} \sum_{j=1}^n(z^{-1}_j)^i
  \ee
    We now proceed to the proof of Lemma \ref{main_fermionic_identity}, which is:
    \begin{lemma}
  \label{main_fermionic_identity_app}
  For $2n$ distinct nonzero complex parameters $\{z_i, w_i\}_{i=1, \dots n}$, we have
  \be
  \langle 0 | \prod_{i=1}^n \left(\psi(z_i) \psi^*(w_i)\right) =
  \det \left({1\over z_i - w_j} \right)_{1\le i,j \le n}\langle 0 | \hat{\gamma}_+([{\bf w}^{-1}] - [{\bf z}^{-1}]),
  \label{psi_psi_dag_gamma_+A}
  \ee
  \end{lemma}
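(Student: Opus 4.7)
The plan is to prove the identity by induction on $n$, taking as the base case the $n=1$ result $\langle 0|\psi(z)\psi^*(w) = (z-w)^{-1}\langle 0|\hat{\gamma}_+([w^{-1}]-[z^{-1}])$ already established in the preceding subsection (eq.~(\ref{gamma_psi_psi_wz})), and with main ingredients the conjugation rules (\ref{gamma+psi_conj}) together with the classical recursive form of the Cauchy determinant identity.

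For the inductive step ($n\geq 2$), I would first apply the $n=1$ identity to the leftmost pair $\psi(z_1)\psi^*(w_1)$, extracting the factor $1/(z_1-w_1)$ and inserting $\hat{\gamma}_+([w_1^{-1}]-[z_1^{-1}])$ between the left vacuum and the remaining bilinears. I would then commute this $\hat{\gamma}_+$ past each remaining pair $\psi(z_i)\psi^*(w_i)$, $2\leq i\leq n$, using (\ref{gamma+psi_conj}); each such commutation contributes the scalar
$$
e^{\xi([w_1^{-1}]-[z_1^{-1}],z_i)-\xi([w_1^{-1}]-[z_1^{-1}],w_i)} = \frac{(z_1-z_i)(w_1-w_i)}{(w_1-z_i)(z_1-w_i)},
$$
obtained from the elementary summation $\sum_{k\ge 1} \tfrac{1}{k}(u/w)^k = -\ln(1-u/w)$, which gives the closed form $e^{\xi([w^{-1}],u)} = (w-u)/w$ and its analogue for $-[z^{-1}]$. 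Applying the inductive hypothesis to $\langle 0|\prod_{i=2}^n \psi(z_i)\psi^*(w_i)$ then yields $\det(1/(z_i-w_j))_{i,j=2}^n$ together with $\langle 0|\hat\gamma_+([{\bf w}^{-1}]_{2:n}-[{\bf z}^{-1}]_{2:n})$; by additivity of the flow argument, this $\hat\gamma_+$ combines with the one previously pulled through to produce exactly $\hat\gamma_+([{\bf w}^{-1}]-[{\bf z}^{-1}])$.

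The final step is to identify the accumulated scalar prefactor with $\det(1/(z_i-w_j))_{i,j=1}^n$, i.e.\ to verify
$$
\det\!\left(\tfrac{1}{z_i-w_j}\right)_{i,j=1}^{n}
= \frac{1}{z_1-w_1}\prod_{i=2}^{n}\frac{(z_1-z_i)(w_1-w_i)}{(w_1-z_i)(z_1-w_i)}
\det\!\left(\tfrac{1}{z_i-w_j}\right)_{i,j=2}^{n}.
$$
This is the standard recursive form of the Cauchy determinant identity, which falls out immediately by dividing through in the closed form $\det(1/(z_i-w_j))_{i,j=1}^n = \prod_{i<j}(z_i-z_j)(w_j-w_i)\big/\prod_{i,j}(z_i-w_j)$. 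The main obstacle will be essentially bookkeeping: tracking the signs that appear in the ratios $(w_1-z_i)/(z_i-w_1)$ and the correct reassembly of the $\hat\gamma_+$-argument from the $n-1$ subsequence back to the full sequence. Once the closed Cauchy form is taken as known, no genuine calculation remains. An equivalent but heavier route would be to apply the fermionic Wick theorem directly to $\langle 0|\prod_i \psi(z_i)\psi^*(w_j)|v\rangle$, pair by pair against a basis $|v\rangle = |\lambda\rangle$, and to use the $n=1$ bosonic identity on each pairing; I prefer the induction above since it avoids explicit enumeration of Wick pairings and makes the role of the Cauchy identity transparent.
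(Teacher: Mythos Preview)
Your proof is correct and follows essentially the same route as the paper: induction on $n$, base case $n=1$, commuting $\hat{\gamma}_+$ past a single pair using (\ref{gamma+psi_conj}) and the elementary logarithmic sums, and then the Cauchy determinant recursion. The only cosmetic difference is that you peel off the leftmost pair $\psi(z_1)\psi^*(w_1)$ first and then invoke the inductive hypothesis on indices $2,\dots,n$, whereas the paper applies the inductive hypothesis to indices $1,\dots,n-1$ and then multiplies on the right by $\psi(z_n)\psi^*(w_n)$; the scalar bookkeeping and the use of the closed-form Cauchy identity are identical.
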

  and Corollaries \ref{corollary_2n}, \ref{K2n_det}.
  \begin{proof}
  We proceed by induction on $n$, recalling that the Cauchy determinant is given by
  \be
  \det \left({1\over z_i - w_j} \right) = (-1)^\frac{n(n-1)}{2} {\Delta ({\bf z}) \Delta({\bf w}) \over \prod_{1 \le i, j, \le n} (z_i - w_j)}
  \label{Cauchy_det}
  \ee
  where
  \be
  \Delta({\bf z}) = \prod_{1\le i < j \le n} (z_i-z_j), \quad  \Delta({\bf w}) = \prod_{1\le i < j \le n} (w_i-w_j).
  \ee
  
 The case $n=1$ is what underlies Proposition \ref{K_2_tau_prop}, which was proved  in Appendix \ref{fermionic_VEV_K2}.
Now assume it is valid for $n-1$
\be
  \langle 0 | \prod_{i=1}^{n-1} \left(\psi(z_i) \psi^*(w_i)\right) =
  \det \left({1\over z_i - w_j} \right)_{1\le i,j \le n-1}\langle 0 | \hat{\gamma}_+([{\bf w}^{-1}] - [{\bf z}^{-1}]).
  \label{induct_hypoth}
  \ee
  and multiply both sides on the right by $\psi(z_n) \psi^*(w_n)$.
  The RHS becomes
  \be
    \det \left({1\over z_i - w_j} \right)_{1\le i,j \le n-1}\langle 0 | \hat{\gamma}_+([{\bf w}^{-1}] - [{\bf z}^{-1}])\psi(z_n) \psi^*(w_n).
    \label{induct_RHS}
  \ee
  Using (\ref{fermiconj}) and
  \bea
   e^{\sum_{i=1}^\infty {1\over i}({z_n \over w_j}) ^i} &\&= 1 -{z_n\over w_j}, \quad 
   e^{-\sum_{i=1}^\infty {1\over i}({w_n \over w_j} )^i} =  {1 \over 1 -{w_n\over w_j}}  \\
    e^{-\sum_{i=1}^\infty  {1\over i}({z_n \over z_j}) ^i} &\&= { 1 \over 1 -{z_n\over z_j}}, \quad 
   e^{\sum_{i=1}^\infty {1\over i} ({w_n \over z_j} )^i} =  1 -{w_n\over z_j} 
   \eea
   we have
   \be
     \hat{\gamma}_+([{\bf w}^{-1}] - [{\bf z}^{-1}])\psi(z_n) \psi^*(w_n) 
     =  \prod_{j=1}^{n-1} {(z_j - z_n) (w_j - w_n) \over (z_j -w_n) (z_n  - w_j)}
   \,   \psi(z_n) \psi^*(w_n)  \hat{\gamma}_+([{\bf w}^{-1}] - [{\bf z}^{-1}]).
     \ee
Substituting this into the RHS of \eqref{induct_RHS} of the inductive hypothesis
and using \eqref{Cauchy_det} for the Cauchy determinant, we obtain \eqref{psi_psi_dag_gamma_+A}.
  \end{proof}
  \medskip
  \begin{remark}
The usual fermionic representation of the Baker function  $\Psi_g(z, {\bf t})$ and its dual $\Psi_g^*(z, {\bf t})$ \cite{SS, Sa}
can be seen to  follow as a corollary of the above.
\medskip
\begin{corollary}
\be
\Psi_g(z, {\bf t}) = {\langle 0 | \psi^*_{0} \hat{\gamma}_+ ({\bf t}) \psi(z) \hat{g}|0 \rangle \over \tau_g({\bf t})},
\quad \Psi_g^*(z, {\bf t}) = {\langle 0 | \psi_{-1} \hat{\gamma}_+ ({\bf t}) \psi^*(z) \hat{g}|0 \rangle \over \tau_g({\bf t})}.
\label{fermionic_Psi}
\ee
\end{corollary}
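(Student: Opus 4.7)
The strategy is to verify that the RHS reproduces the Sato formula $\Psi_g(z,{\bf t}) = e^{\xi({\bf t},z)}\tau_g({\bf t}-[z^{-1}])/\tau_g({\bf t})$ defining the Baker function in eq.~(\ref{Sato_Baker_1}); the statement for $\Psi_g^*$ is strictly analogous after exchanging $\psi \leftrightarrow \psi^*$, $\psi^*_0 \leftrightarrow \psi_{-1}$, and passing to conjugate (row vs.\ column) partitions. First I use the conjugation relation (\ref{gamma+psi_conj}) to pull $\hat{\gamma}_+({\bf t})$ past $\psi(z)$, extracting the factor $e^{\xi({\bf t},z)}$, and the charge-shift identity $\langle 0|\psi^*_0 = \langle 1|$. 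The latter holds because $\psi^*_0|1\rangle = |0\rangle$, while $\psi^*_0|\mu;1\rangle$ has no component along $|0\rangle$ for any $\mu\neq\emptyset$, since only $|1\rangle=e_0\wedge e_{-1}\wedge\cdots$ becomes $|0\rangle$ upon removing $e_0$. These two moves reduce the claim to
\[
\langle 1|\psi(z)\,\hat{\gamma}_+({\bf t})\,\hat{g}|0\rangle = \tau_g({\bf t}-[z^{-1}]).
\]

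Next I expand $\langle 1|\psi(z)$ in the dual charge-$0$ basis. A direct computation in the semi-infinite wedge gives $\psi^*_{-k}|1\rangle = (-1)^k|(1^k)\rangle$ for every $k\ge 0$, the sign coming from the $k$ anticommutations needed to move $e_{-k}$ to its sorted position inside $|1\rangle$, and the resulting partition being the column $(1^k)$ (check: $|(1^k)\rangle$ has $\ell_i = 1-i$ for $i\le k$ and $\ell_i = -i$ for $i>k$, i.e.\ it is $|1\rangle$ with $e_{-k}$ deleted). Dualizing and summing against $\psi(z)=\sum_i z^i\psi_i$, only $i\le 0$ contributes and one obtains
\[
\langle 1|\psi(z) = \sum_{k\ge 0}(-z^{-1})^k\,\langle(1^k)|.
\]

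To finish, expand $\hat{g}|0\rangle = \sum_\lambda \pi_\lambda(g)|\lambda\rangle$ in the Pl\"ucker basis (\ref{tau_schur_exp}) and use the standard skew-Schur matrix element $\langle\mu|\hat{\gamma}_+({\bf t})|\lambda\rangle = s_{\lambda/\mu}({\bf t})$, to obtain
\[
\langle 1|\psi(z)\,\hat{\gamma}_+({\bf t})\,\hat{g}|0\rangle = \sum_\lambda \pi_\lambda(g)\sum_{k\ge 0}(-z^{-1})^k\,s_{\lambda/(1^k)}({\bf t}).
\]
The inner sum is identified via the coaddition formula $s_\lambda({\bf t}+{\bf s}) = \sum_\mu s_{\lambda/\mu}({\bf t})\,s_\mu({\bf s})$ applied at ${\bf s} = -[z^{-1}]$. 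Using the $\omega$-involution identity $s_\mu(-[z^{-1}]) = (-1)^{|\mu|}s_{\mu'}([z^{-1}])$ together with the single-variable Schur vanishing ($s_\nu([z^{-1}])=z^{-|\nu|}$ only when $\nu$ is a single row, zero otherwise), one gets $s_\mu(-[z^{-1}]) = (-z^{-1})^k\delta_{\mu,(1^k)}$, so the inner sum collapses to $s_\lambda({\bf t}-[z^{-1}])$. Summing against $\pi_\lambda(g)$ reproduces $\tau_g({\bf t}-[z^{-1}])$, and multiplying back by $e^{\xi({\bf t},z)}/\tau_g({\bf t})$ gives exactly the Sato formula.

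The main delicate point is sign consistency: both the $(-1)^k$ from $\psi^*_{-k}|1\rangle$ and the $(-1)^{|\mu|}$ in the plethystic negative must match, since any inconsistency would produce the wrong sign or argument shift of $\tau_g$. An alternative route, closer to the paper's hint that the corollary follows from Lemma~\ref{main_fermionic_identity_app} and Corollary~\ref{corollary_2n}, is to write $\psi^*_0$ as the formal residue $\oint\frac{dw}{2\pi i}\psi^*(w)$, insert this into the two-point identity $\langle 0|\psi(z)\psi^*(w)\hat{\gamma}_+({\bf t})\hat{g}|0\rangle = \tau_g({\bf t}+[w^{-1}]-[z^{-1}])/(z-w)$, and extract the $w^{-1}$ coefficient in the region $|w|<|z|$; expanding $\tau_g({\bf t}+[w^{-1}]-[z^{-1}])$ by the same single-variable Schur specialization leads to the identical plethystic identity and the same conclusion.
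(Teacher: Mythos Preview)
Your proof is correct, but it takes a genuinely different route from the paper's.

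The paper's argument is short and leverages the already-established $n=1$ case of Lemma~\ref{main_fermionic_identity}: from the identity $\langle 0|\psi(z)\psi^*(w) = (z-w)^{-1}\langle 0|\hat{\gamma}_+([w^{-1}]-[z^{-1}])$ one extracts, by a residue/limit, the bosonization identities $\langle 0|\hat{\gamma}_+(-[z^{-1}]) = \langle 0|\psi^*_0\psi(z)$ and $\langle 0|\hat{\gamma}_+([z^{-1}]) = \langle 0|\psi_{-1}\psi^*(z)$; substituting these into the Sato formula and then commuting $\hat{\gamma}_+({\bf t})$ past $\psi(z)$ via (\ref{gamma+psi_conj}) finishes it in two lines. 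Your main argument instead \emph{proves} the bosonization identity from scratch: you compute $\langle 1|\psi(z)$ explicitly as $\sum_{k\ge 0}(-z^{-1})^k\langle (1^k)|$, use the skew-Schur matrix element $\langle\mu|\hat{\gamma}_+({\bf t})|\lambda\rangle = s_{\lambda/\mu}({\bf t})$ and the coaddition formula, and close with the single-variable specialization $s_\mu(-[z^{-1}]) = (-z^{-1})^k\delta_{\mu,(1^k)}$. This is more self-contained (it does not need the two-point identity), and it makes the combinatorics completely transparent, at the cost of being longer. The alternative route you sketch at the end --- inserting $\psi^*_0 = \oint \psi^*(w)\,dw/(2\pi i)$ into the two-point function --- is essentially the paper's argument.
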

\begin{proof}
Taking the  $n=1$ case of \eqref{psi_psi_dag_gamma_+A}, evaluating the residue at $z=0$ and setting $w=z$,
or the residue at $w=0$,  gives the identities 
\be
\langle 0 | \gamma_+(-[z^{-1}]) = \langle 0 | \psi^*_{0} \psi(z), \quad
\langle 0 | \gamma_+([z^{-1}]) = \langle 0 | \psi_{-1} \psi^*(z).
\ee
Substituting these into Sato's formulae for  $\Psi_g(z, {\bf t})$ and  $\Psi_g^*(z, {\bf t})$ gives
\bea
\Psi_g(z, {\bf t}) = e^{\xi( {\bf t}, z)}{\langle 0 | \hat{\gamma}_+(-[z^{-1}])  \hat{\gamma}_+({\bf t}) \hat{g} | 0 \rangle \over \tau_g({\bf t})}=
 e^{\xi({\bf t}, z)} {\langle 0 | \psi^*_{0}  \psi(z) \hat{\gamma}_+({\bf t}) \hat{g} | 0 \rangle \over \tau_g({\bf t})} \cr
 \Psi^*_g(z, {\bf t})
  = e^{-\xi({\bf t}, z)}{\langle 0 | \hat{\gamma}_+([z^{-1}])  \hat{\gamma}_+({\bf t}) \hat{g} | 0 \rangle \over \tau_g({\bf t})}=
 e^{-\xi({\bf t}, z)} {\langle 0 | \psi_{-1}  \psi^*(z) \hat{\gamma}_+({\bf t}) \hat{g} | 0 \rangle \over \tau_g({\bf t})} \
\eea
Substituting the relations (\ref{fermiconj}) gives the result \eqref{fermionic_Psi}.
\end{proof}

  \end{remark}

\subsection{Expansion of $\tilde{K}_2(z,w)$  in the adapted basis}
\label{K2_expansion_adapted_basis}

We now prove Proposition \ref{K_w_j_series}, i.e. the expansions
\be
\tilde{K}^g_2(z, w) = \begin{cases}  \sum_{j=1}^\infty w^g_j(w)   w^{g*}_{-j+1}(z) \quad \text{if} \ |z| > |w|, \\
=- \sum_{j=1}^\infty w^g_{-j+1}(w)   w^{g*}_{j}(z) \quad \text{if} \ |z| < |w|.
\end{cases}
\label{pair_correl_expansion_w_g_k_appB}
\ee

This can be done in two ways: the first is a direct calculation, using the expression (\ref{K_2_tau_appB}) for $K_g(z,w)$ in terms
of the $\tau$-function, to obtain a double power series expansion 
in $z$ and $w$ by specializing the Schur function expansion  of $\tau_g({\bf t})$ at these values and the fact that the
affine coordinates are just the hook Pl\"ucker coordinates. The second is  a fermionic calculation,
based  on again introducing sums over a complete set of intermediate states. We  here present the details
of the latter.
\begin{proof} Proof of (\ref{pair_correl_expansion_w_g_k_appB}).
For $|z|> |w|$, inserting a sum over a complete set of intermediate states, we have
\be
\langle 0 |\hat{g}^{-1} \psi(z) \psi^*(w) \hat{g} |0 \rangle = \sum_{\lambda} \langle 0 | \hat{g}^{-1}\psi(z) \hat{g} | \lambda; -1 \rangle
\langle \lambda; -1 | \hat{g}^{-1} \psi^*(w) \hat{g}|0 \rangle
\ee
By Wick's theorem, the only partitions that contribute to this sum are those of the form $\lambda = (1)^j$, $j=0, 1, \dots$.
Substituting
\be
|(1)^j ; -1\rangle = (-1)^j \psi^*_{-j-1} | 0 \rangle,
\ee
into the sum therefore gives
\bea
\langle 0 |\hat{g}^{-1} \psi(z) \psi^*(w) \hat{g} |0 \rangle &\& = \sum_{j=0}^\infty \langle 0| \hat{g}^{-1} \psi(z) \hat{g} \psi^*_{-j-1}| 0 \rangle 
 \langle 0 | \psi_{-j-1}  \hat{g}^{-1} \psi^*(w) \hat{g} | 0 \rangle \cr
 &\& = \sum_{j=1}^\infty w_j^g(w) w^{g*}_{-j+1}(z). 
\eea
A similar calculation shows, for $|z|< |w|$
\bea
\langle 0 |\hat{g}^{-1} \psi^*(w)\psi(z)  \hat{g} |0 \rangle &\& = \sum_{j=0}^\infty \langle 0| \hat{g}^{-1} \psi^*(w) \hat{g} \psi_j| 0 \rangle 
 \langle 0 | \psi^*_j  \hat{g}^{-1} \psi(z) \hat{g} | 0 \rangle \cr
 &\& = \sum_{j=1}^\infty w_{-j+1}^g(w) w^{g*}_{j}(z). 
\eea
\end{proof}

      \end{appendices}

 \bigskip
\noindent 
\small{ {\it Acknowledgements.} 
The work of A.A. is supported by IBS-R003-D1 and  by RFBR grant 15-01-04217. 
B.E. wishes to thank the Centre de recherches math\'ematiques, Montr\'eal, for the Aisenstadt Chair grant, 
and the FQRNT grant from the Qu\'ebec government, that partially supported this joint project.
G.C. acknowledges support from the European Research Council, grant ERC-2016-STG 716083 ``CombiTop'', from the Agence Nationale de la Recherche, grant ANR 12-JS02-001-01 ``Cartaplus'' and from the City of Paris, grant ``\'Emergences 2013, Combinatoire \`a Paris''.
The work of J.H. was partially supported by the Natural Sciences and Engineering Research Council of Canada (NSERC) and the Fonds de recherche du Qu\'ebec, Nature et technologies (FRQNT).  He wishes to thank the Institut des Hautes \'Etudes Scientifiques, where much of this work was completed. It was also supported by the ERC Starting Grant no. 335739 ``Quantum fields and knot homologies'' funded by the European Research Council under the European Union's  Seventh Framework Programme.  The authors would also like to thank the organizers of the 
January - March, 2017  thematic semester ``Combinatorics and interactions''  at the Institut Henri Poincar\'e, where they were participants during the completion of this work.}


\newcommand{\arxiv}[1]{\href{http://arxiv.org/abs/#1}{arXiv:{#1}}}

\bigskip
\noindent


\begin{thebibliography}{99}

\bibitem{ACEH1}   A.~Alexandrov, G.~Chapuy, B.~Eynard  and J.~Harnad 
``Weighted Hurwitz numbers and topological recursion: an overview'', \arxiv{1610.09408}

\bibitem{ACEH2}   A.~Alexandrov, G.~Chapuy, B.~Eynard  and J.~Harnad 
``Weighted Hurwitz numbers and topological recursion'',  preprint CRM-3359 (2017).

  \bibitem{ARP}
  A.~Alexandrov,
  ``Matrix Models for Random Partitions,''
 {\em Nucl.\ Phys.\ B } {\bf 851} 620  (2011).

\bibitem{AEG}   A.~Alexandrov,
 ``Enumerative Geometry, Tau-Functions and Heisenberg-Virasoro Algebra,''
  Commun.\ Math.\ Phys.\  {\bf 338}, no. 1, 195 (2015).

\bibitem{ALS}
  A.~Alexandrov, D.~Lewanski and S.~Shadrin,
  ``Ramifications of Hurwitz theory, KP integrability and quantum curves,''
   {\em JHEP} {\bf 1605}  124 (2016).
  
  \bibitem{AMMN1}
  A.~Alexandrov, A.~Mironov, A.~Morozov and S.~Natanzon,
  ``Integrability of Hurwitz Partition Functions. I. Summary,''
  {\em J.\ Phys.\ A }{\bf 45}  045209 (2012).
 
\bibitem{AMMN}  A. Alexandrov, A. Mironov, A. Morozov  and S. Natanzon,
`` On KP-integrable Hurwitz functions'', {\em JHEP} {\bf 1411} 080 (2014).

\bibitem{AZ}  A.~Alexandrov and A.~Zabrodin,
  ``Free fermions and tau-functions,''
 {\em  J.\ Geom.\ Phys.}  {\bf 67}, 37 (2013).
  
 \bibitem{AC1} J. Ambj{\o}rn and L. Chekhov, ``The matrix model for dessins d'enfants'', 
{\em Ann. Inst. Henri Poincar\'e, Comb. Phys. Interact.} {\bf  1},  337-361 (2014).

\bibitem{BEMS} G. Borot, B. Eynard, M. Mulase and B. Safnuk, ``A matrix model for Hurwitz numbers and topological recursion'', {\em J. Geom. Phys.} {\bf 61}, 522--540 (2011). 

\bibitem{BE} G. Borot and B Eynard, ``All-order asymptotics of hyperbolic knot invariants from non-perturbative topological recursion of A-polynomials'', {\em Quantum Topology} {\bf 6}, 39-138 (2015). 

\bibitem{EO1} B. Eynard  and N. Orantin, ``Topological recursion in enumerative geometry and random matrices'', 
{\em J. Phys. A} {\bf 42 }  293001 (2009).

\bibitem{EO2} B. Eynard, and N.  Orantin, ``Invariants of spectral curves and intersection theory of moduli spaces of complex curves'',  {\em  Commun. Num. Theor. Phys.} {\bf 8} 541-588  (2014) 

\bibitem{Frob1}  G. Frobenius, ``\"Uber die Charaktere der symmetrischen Gruppe'', {\em Sitzber.  Akad. Wiss., Berlin},
516-534  (1900). Gesammelte Abhandlung III, 148-166.

\bibitem{Frob2}  G. Frobenius, ``\"Uber die Charakterische Einheiten der symmetrischen Gruppe'', {\em Sitzber.  Akad. Wiss., Berlin}, 328-358  (1903). Gesammelte Abhandlung III, 244-274.

\bibitem{GGN1} I. P. Goulden, M. Guay-Paquet and J. Novak, ``Monotone Hurwitz numbers and the HCIZ Integral'', 
 {\em Ann. Math. Blaise Pascal} {\bf 21} 71-99 (2014).

\bibitem{GGN2} I. P. Goulden, M. Guay-Paquet and J. Novak, ``Toda Equations and Piecewise Polynomiality for Mixed Double Hurwitz numbers'', {\em SIGMA} {\bf 12},  040 (2016).

 \bibitem{GJ1} I. P. Goulden and D. M. Jackson, ``Transitive factorisation into transpositions and holomorphic
mappings on the sphere'', {\em Proc. Amer. Math. Soc.}, {\bf 125},  51-60.  no. 1 (1997).

\bibitem{GH1} M. Guay-Paquet and J. Harnad, ``2D Toda $\tau$-functions as combinatorial generating functions'', 
{\em Lett. Math. Phys.} {\bf 105} Page 827-852 (2015).

\bibitem{GH2} M. Guay-Paquet and J. Harnad, ``Generating functions for weighted Hurwitz numbers'', 
 \arxiv{1408.6766}.

\bibitem{H1} J. Harnad,  ``Quantum Hurwitz numbers  and Macdonald polynomials'', {\em J. Math. Phys.} {\bf 57} 113505 (2016).

\bibitem{H2} J. Harnad,  ``Weighted Hurwitz numbers and hypergeometric $\tau$-functions: an overview'',
AMS {\em Proc. Symp. P. Math.} {\bf 93}, 289-333 (2016).

\bibitem{HO1} J. Harnad and A. Yu. Orlov, ``Fermionic construction of partition functions for two-matrix models and perturbative Schur function expansions'', {\em J. Phys. A} {\bf 39} 8783--8809 (2006). 

\bibitem{HO2} J. Harnad and A. Yu. Orlov, ``Hypergeometric $\tau$-functions, Hurwitz numbers and enumeration of paths'',  
{\em Commun. Math. Phys. } {\bf 338}, 267-284 (2015).

\bibitem{Hu1} A. Hurwitz, ``\"Uber Riemann'sche Fl\"asche mit gegebnise Verzweigungspunkten'', 
{\em Math. Ann.} {\bf 39}, 1-61 (1891). Matematische Werke I, 321-384. 

\bibitem{Hu2} A. Hurwitz, ``\"Uber die Anzahl der Riemann'sche Fl\"asche mit gegebnise Verzweigungspunkten'', 
{\em Math. Ann.} {\bf 55}, 53-66 (1902).  Matematische Werke I, 42-505.

\bibitem{JM} M. Jimbo and T. Miwa ``Solitons and Infinite Dimensional Lie Algebras'', 
{\it Publ. RIMS Kyoto Univ.}  {\bf 19}, 943--1001  (1983).

\bibitem{KS}  V.~Kac and A.~S.~Schwarz,
  ``Geometric interpretation of the partition function of 2-D gravity,''
 { Phys.\ Lett.\ B} {\bf 257}, 329 (1991).

\bibitem{KZ} M. Kazarian and P. Zograf, ``Virasoro constraints and topological recursion for Grothendieck's dessin counting'',
{\em  Lett. Math. Phys.} {\bf 105} 1057--1084 (2015). 

\bibitem{KMMM}
  S.~Kharchev, A.~Marshakov, A.~Mironov and A.~Morozov,
  ``Generalized Kazakov-Migdal-Kontsevich model: Group theory aspects'',
{\em  Int.\ J.\ Mod.\ Phys.\ A} {\bf 10}  2015 (1995).

 \bibitem{Mac} I.~G. ~Macdonald, {\em Symmetric Functions and Hall Polynomials},
Clarendon Press, Oxford, (1995).

\bibitem{MMN}  A. D. Mironov,  A. Yu. Morozov, and S. M. Natanzon,
``Complete set of cut-and-join operators in the Hurwitz-Kontsevich theory'',
{\em Theor. Math, Phys.} {\bf 166}, 1-22 (2011).

\bibitem{MSS} M. Mulase, S. Shadrin, and L. Spitz. ``The spectral curve and the Schrodinger equation of double Hurwitz numbers and higher spin structures", {\em Commun.  Number Theory and Physics}  {\bf 7}, 125--143 (2013).

\bibitem{NOr} S. M. Natanzon and A. Yu. Orlov, ``BKP and projective Hurwitz numbers'',  {\em Lett. Math.Phys.}
{\bf 107}, 1065--1109 (2017).

\bibitem{Ok} A. Okounkov, ``Toda equations for Hurwitz numbers'', {\em Math.~Res.~Lett.} {\bf 7}, 447--453 (2000).

\bibitem{OP} A. Okounkov and  R. Pandharipande,  ``Gromov-Witten theory, Hurwitz theory, and completed cycles",  {\em Ann.  Math.} {\bf 163}, 517-560 (2006).

\bibitem{OrSc1} A. Yu. Orlov and D. M. Scherbin, ``Hypergeometric solutions of soliton equations'', {\em Theor. Math. Phys.
} {\bf 128}, 906-926 (2001).

\bibitem{OrSc2}  A.~Y.~Orlov and D.~M.~Scherbin,
  ``Fermionic representation for basic hypergeometric functions related to Schur polynomials,''
 {\em Theor. Math. Phys.}  {\bf 137}, 1574--1589 (2003).

\bibitem{Pa} R. Pandharipande, ``The Toda Equations and the Gromov-Witten Theory of
the Riemann Sphere'',  {\em Lett. Math. Phys.} {\bf  53}, 59-74 (2000).

\bibitem{Sa} M. Sato, ``Soliton equations as dynamical systems on infinite dimensional Grassmann 
manifolds'', RIMS, Kyoto Univ. {\it Kokyuroku} {\bf 439},  30-46 (1981).

  \bibitem{SS} M. Sato and Y. Sato, ``Soliton equations as dynamical systems on infinite dimensional 
Grassmann manifold'',  in: Nonlinear PDE in Applied Science, Proc. U. S.-Japan Seminar, Tokyo 1982, 
Kinokuniya, Tokyo,  259-271 (1983).

\bibitem{Sch} I. Schur ``Neue Begr\"undung' der Theorie der Gruppencharaktere'',
 {\em Sitzber.  Akad. Wiss., Berlin}, 406-432  (1905).
 
 \bibitem{Sh}
  S.~Shadrin, L.~Spitz and D.~Zvonkine,
  ``On double Hurwitz numbers with completed cycles,''
  {\em J.\ Lond.\ Math.\ Soc.\ }  {\bf 86}, 407  (2012).

 \bibitem{SW} G. Segal  and G. Wilson, ``Loop groups and equations of KdV type '', {\em Pub. Math. de l'IH\'ES} {\bf 6}, 5-65 (1985).
 
\bibitem{Ta} K. Takasaki, ``Initial value problem for the Toda lattice hierarchy'', in: {\em Group Representation and Systems of Differential Equations},  {\em Adv. Stud. in Pure Math.} {\bf 4}, 139--163 (1984).

\bibitem{Ta1} K. Takasaki, ``Generalized string equations for double Hurwitz numbers", 
{\em J. Geom.  Phys.} {\bf 62} 1135-1156 (2012).

\bibitem{Takeb} T. Takebe, ``Representation theoretical meaning of the initial value problem for the Toda lattice hierarchy I'', {\em Lett. Math. Phys.} {\bf 21}, 77--84 (1991).

\bibitem{UTa} K. Ueno and K. Takasaki, ``Toda Lattice Hierarchy'', in:{ \em Group Representation and Systems of Differential Equations}, {\em Adv. Stud. in Pure Math.} {\bf 4}, , 1--95 (1984).

\bibitem{V} R. Vakil, Harvard Thesis (1997).

\bibitem{Z} P. Zograf, ``Enumeration of Grothendieck's dessins and KP hierarchy'', 
{\em Int. Math Res. Notices} {\bf 24}, 13533--13544 (2015).

\end{thebibliography}
\end{document}